\DeclareMathOperator{\RePart}{Re}
\DeclareMathOperator{\PC}{PC}
\DeclareMathOperator{\PCC}{PCC}
\DeclareMathOperator{\PSL}{PSL}
\DeclareMathOperator{\sig}{sig}
\renewcommand{\Re}{\RePart}
\newcommand{\C}{{\mathbb C}}
\newcommand{\N}{{\mathbb N}}
\newcommand{\Q}{{\mathbb Q}}
\newcommand{\R}{{\mathbb R}}
\newcommand{\Z}{{\mathbb Z}}
\newcommand{\conj}[1]{\overline{#1}}
\newcommand{\floor}[1]{\lfloor{#1}\rfloor}
\definecolor{tableshade}{rgb}{0.85,0.85,0.85}
\newtheorem{theorem}{Theorem}[section]
\newtheorem{proposition}[theorem]{Proposition}
\newtheorem{lemma}[theorem]{Lemma}
\newtheorem{corollary}[theorem]{Corollary}
\theoremstyle{definition}
\newtheorem{definition}[theorem]{Definition}
\newtheorem{notation}[theorem]{Notation}
\newtheorem{remark}[theorem]{Remark}
\title[Correlation of Golay--Rudin--Shapiro Sequences]{Peak Sidelobe Level and Peak Crosscorrelation of Golay--Rudin--Shapiro Sequences}
\author{Daniel J.~Katz}
\address{Department of Mathematics, California State University, Northridge, \: United States}
\author{Courtney M.~van der Linden}
\thanks{This paper is based upon work supported in part by the National Science Foundation under Grants DMS-1500856 and  CCF-1815487.  Courtney~M.~van der Linden was supported in part by a Ramanujan Research Scholarship in Mathematics from the College of Science and Mathematics at California State University, Northridge.}
\date{13 November 2021}
\begin{document}
\begin{abstract}
Sequences with low aperiodic autocorrelation and crosscorrelation are used in communications and remote sensing. Golay and Shapiro independently devised a recursive construction that produces families of complementary pairs of binary sequences.
In the simplest case, the construction produces the Rudin--Shapiro sequences, and in general it produces what we call Golay--Rudin--Shapiro sequences.
Calculations by Littlewood show that the Rudin--Shapiro sequences have low mean square autocorrelation.
A sequence's peak sidelobe level is its largest magnitude of autocorrelation over all nonzero shifts.
H\o holdt, Jensen, and Justesen showed that there is some undetermined positive constant $A$ such that the peak sidelobe level of a Rudin--Shapiro sequence of length $2^n$ is bounded above by $A(1.842626\ldots)^n$, where $1.842626\ldots$ is the positive real root of $X^4-3 X-6$.
We show that the peak sidelobe level is bounded above by $5(1.658967\ldots)^{n-4}$, where $1.658967\ldots$ is the real root of $X^3+X^2-2 X-4$.
Any exponential bound with lower base will fail to be true for almost all $n$, and any bound with the same base but a lower constant prefactor will fail to be true for at least one $n$.
We provide a similar bound on the peak crosscorrelation (largest magnitude of crosscorrelation over all shifts) between the sequences in each Rudin--Shapiro pair.
The methods that we use generalize to all families of complementary pairs produced by the Golay--Rudin--Shapiro recursion, for which we obtain bounds on the peak sidelobe level and peak crosscorrelation with the same exponential growth rate as we obtain for the original Rudin--Shapiro sequences. 
\end{abstract}
\maketitle

\section{Introduction}
Many communications and remote sensing protocols require pseudorandom sequences \cite{Golomb,Golomb-Gong,Schroeder}.
A {\it sequence} is a tuple, $(f_0,f_1,\ldots,f_{k-1})$, of complex numbers.
Since we are considering aperiodic properties of sequences, we adopt the convention that $f_j=0$ when $j \not\in \{0,1\ldots,k-1\}$, and identify our sequence $(f_0,\ldots,f_{k-1})$ with the polynomial $f(z)=\sum_{j \in \Z} f_j z^j$, which resides in the ring $\C[z,z^{-1}]$ of Laurent polynomials with complex coefficients.
A {\it binary sequence of length $\ell$} is a tuple, $(f_0,\ldots,f_{\ell-1})$, of $\ell$ elements from $\{1,-1\}$, that is, a polynomial $f(z)=\sum_{j \in \Z} f_j z^j$ where $f_j \in \{1,-1\}$ for $0 \leq j < \ell$ and $f_j=0$ otherwise.
Thus, a nonzero binary sequence of length $\ell$ is a polynomial of degree $\ell-1$.

In multi-user communication networks, it is desirable to have sequences that do not resemble translates of each other, nor should any individual sequence resemble a translated version of itself \cite{Golomb,Golomb-Gong}.
If $f$ and $g$ are two sequences and $s \in \Z$, then the {\it aperiodic crosscorrelation of $f$ with $g$ at shift $s$} is
\begin{equation}\label{Tatiana}
C_{f,g}(s)=\sum_{j \in \Z} f_{j+s} \conj{g_j}.
\end{equation}
Notice that there are only finitely many nonzero terms in this sum because  our sequences have only finitely many nonzero entries.
The {\it autocorrelation of $f$ at shift $s$} is $C_{f,f}(s)$.
We want pairs of sequences in which each sequence has low autocorrelation at all nonzero shifts and the two sequences have low crosscorrelation with each other at all shifts.

The Rudin--Shapiro sequences are family of binary sequences of increasing length, which were originally discovered by Golay \cite[p.~469]{Golay-51} and Shapiro \cite[pp.~39--40]{Shapiro-MS}, and later rediscovered by Rudin \cite[pp.~855--856]{Rudin}. The Rudin--Shapiro sequences are known to have low mean square autocorrelation (see \cite[Theorem 2.3]{Hoholdt-Jensen-Justesen} or the earlier equivalent calculation in \cite[Problem 19]{Littlewood}), and it has been shown \cite{Hoholdt-Jensen-Justesen,Borwein-Mossinghoff,Katz-Lee-Trunov-a,Katz-Lee-Trunov-b} that relatives of the Rudin--Shapiro sequences also have low correlation.
Golay devised the recursive method that produces the Rudin--Shapiro sequences and their relatives \cite[p.~469]{Golay-51} in order to construct what are now called Golay complementary pairs.
A {\it Golay complementary pair} (or {\it Golay pair} or {\it complementary pair}) is a pair of sequences $(f,g)$ such that $C_{f,f}(s)+C_{g,g}(s)=0$ for all nonzero $s \in \Z$.
Shapiro's method is equivalent to Golay's, while Rudin's construction is a special case.
We now formally define the recursion, which is most easily done using the polynomial interpretation of sequences.
\begin{definition}[Golay--Rudin--Shapiro Recursion]\label{Stan}
Let $\ell_0$ be a positive integer and let $x_0(z)$ and $y_0(z)$ be nonzero polynomials in $\C[z]$ of degree less than $\ell_0$ such that $(x_0,y_0)$ is a Golay complementary pair with $C_{x_0,x_0}(0)=C_{y_0,y_0}(0)$.
Let $\ell_n=2^n \ell_0$ for each nonnegative integer $n$.
We recursively define an infinite sequence $(x_0,y_0),(x_1,y_1),\ldots$ of pairs of polynomials where $x_n(z)=x_{n-1}(z)+z^{\ell_{n-1}} y_{n-1}(z)$ and $y_n(z)=x_{n-1}(z)-z^{\ell_{n-1}} y_{n-1}(z)$ for each $n> 0$.
Then $(x_n,y_n)$ is the {\it $n$th Golay pair obtained from $\ell_0$ and the seed pair $(x_0,y_0)$ via the Golay--Rudin--Shapiro recursion}.
The sequences produced by this construction are called {\it Golay--Rudin--Shapiro sequences}.
When one sets $\ell_0=1$ and $x_0(z)=y_0(z)=1$ in this recursion, then $x_n(z)$ and $y_n(z)$ are respectively called the {\it $n$th Rudin--Shapiro sequence} and the {\it $n$th Rudin--Shapiro companion sequence}, and $(x_n,y_n)$ is the {\it $n$th Rudin--Shapiro pair}.
\end{definition}
Observe that $x_n$ and $y_n$ are polynomials of degree less than $\ell_n$ for each nonnegative integer $n$, and that they are of degree precisely $\ell_n-1$ if $\deg(y_0)=\ell_0-1$, so that the $n$th Rudin--Shapiro sequence and its companion are binary sequences of length $2^n$.
For binary sequences, Golay indicates in \cite[p.~469]{Golay-51} (and formally proves in \cite[pp.~84--85]{Golay-61}) that each step of his construction always produces a new complementary pair from an existing one, so that every pair $(x_n,y_n)$ produced by this construction is a complementary pair.
See \cite[Construction 6.1]{Katz-Moore} for a proof that generalizes this result to work for all sequences with complex terms.

We want to investigate the correlation of Golay--Rudin--Shapiro sequences.
There are two main criteria for determining how low the correlation is: a worst-case ($\ell^\infty$) criterion and mean square ($\ell^2$) criterion.
For the worst case criterion, we define the {\it peak crosscorrelation of sequences $f$ and $g$} as $$\PCC(f,g)= \underset{s \in \Z}{\max }|C_{f,g}(s)|,$$
and the {\it peak sidelobe level of $f$} as $$\PSL(f)= \underset{s \in \Z \smallsetminus \{0\}}{\max }|C_{f,f}(s)|.$$
The mean square criterion is called the {\it demerit factor}, which gives the sum of the squared magnitudes of the correlation values over all shifts (excluding shift zero if autocorrelation is in question) when our sequences are normalized to have unit Euclidean norm.
The reciprocal of a demerit factor is called a {\it merit factor}.
The merit factor for autocorrelation was defined by Golay, who coined the term ``factor" in \cite[p.~450]{Golay-72}, and later the full name ``merit factor'' in \cite[p.~460]{Golay-75}. The term ``demerit factor'' for autocorrelation appears in the work of Sarwate \cite[p.~102]{Sarwate}, while for crosscorrelation, terminology both for ``merit factor'' and ``demerit factor'' appears in \cite[p.~5237]{Katz-16}, although mean square crosscorrelation had been studied much earlier (cf.~\cite[eqs.~(31),(38)]{Sarwate}).

Sarwate \cite[eqs.~(13),(38)]{Sarwate} showed that if one randomly selects binary sequences of length $\ell$, the mean autocorrelation demerit factor is $1-1/\ell$ and the mean crosscorrelation demerit factor for pairs is $1$.
It is known that the variance of the autocorrelation demerit factor approaches $0$ as $\ell$ tends to infinity \cite[pp.~1469--1470]{Borwein-Lockhart}.
Thus, a randomly selected long binary sequence is very likely to have an autocorrelation demerit factor close to $1$.
Calculations of the $L^4$ norm on the complex unit circle of the associated polynomials by Littlewood \cite[Problem 19]{Littlewood} show that the autocorrelation demerit factor of the Rudin--Shapiro sequences tends to $1/3$ as their length tends to infinity; this was rediscovered and explicitly realized to be a merit factor result by H\o holdt, Jensen, and Justesen in \cite[Theorem 2.3]{Hoholdt-Jensen-Justesen}.
The results of Katz and Moore \cite[Theorem 1.1]{Katz-Moore} then show that the demerit factor for the crosscorrelation between a Rudin--Shapiro sequence and its companion tends to $2/3$ as the length of the sequences tends to infinity.
Thus, Rudin--Shapiro sequences have significantly lower mean square autocorrelation and crosscorrelation than randomly selected binary sequences.

One should also consider how the Rudin--Shapiro sequences and their relatives rate in terms of worst-case measures of correlation (peak sidelobe level and peak crosscorrelation).
In 1985, H\o holdt, Jensen, and Justesen \cite[p.~552]{Hoholdt-Jensen-Justesen} obtained a bound on the peak sidelobe level of the $n$th Rudin--Shapiro sequence, which we now state.
\begin{theorem}[H\o holdt--Jensen--Justesen, 1985]\label{Cassie}
If $n$ is a nonnegative integer  and $x_n$ is the $n$th Rudin-Shapiro sequence, then there is some $A>0$ such that $$\PSL(x_n) \leq A(1.842626 \ldots)^n,$$
where $1.842626 \ldots$ is the unique positive real root of $X^4-3 X-6$.
\end{theorem}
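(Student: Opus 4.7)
The plan is to derive a system of linear recurrence inequalities for the peak autocorrelation and peak crosscorrelation of the Rudin--Shapiro pair $(x_n,y_n)$ by exploiting the Golay--Rudin--Shapiro recursion, then to iterate the system deeply enough that its characteristic polynomial is $X^4-3X-6$.

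First I would translate the polynomial recursion into correlation identities. Multiplying out $x_n(z)\overline{x_n(z)}$ on the unit circle using $x_n=x_{n-1}+z^{\ell_{n-1}}y_{n-1}$ and applying the Golay complementary property $C_{x_{n-1},x_{n-1}}(s)+C_{y_{n-1},y_{n-1}}(s)=0$ for $s\neq 0$, I obtain
\[
C_{x_n,x_n}(s)=C_{y_{n-1},x_{n-1}}(s-\ell_{n-1})+C_{x_{n-1},y_{n-1}}(s+\ell_{n-1}),\quad s\neq 0,
\]
and a parallel calculation yields an analogous identity for $C_{x_n,y_n}(s)$ in terms of $C_{x_{n-1},x_{n-1}}(s)$ and the same two crosscorrelations. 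These identities produce immediate inequalities among the quantities $\PSL(x_n)$, $\PSL(y_n)$, and $\PCC(x_n,y_n)$.

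To improve on the naïve bound $\PSL(x_n)\leq 2\PCC(x_{n-1},y_{n-1})$, I would exploit the compact support of the correlation functions: each $C_{f,g}$ vanishes once $|s|$ exceeds the degree of either argument. Because $C_{y_{n-1},x_{n-1}}$ and $C_{x_{n-1},y_{n-1}}$ are each supported on $|s|<\ell_{n-1}$, the two summands in the formula for $C_{x_n,x_n}(s)$ have disjoint supports, and only one contributes for each sign of $s$. The same support analysis applies at every level of the recursion, restricting how many summands can contribute to any single correlation value. Iterating the recursion four times in this manner, and carefully handling the boundary shifts $s=\pm\ell_{n-1},\pm\ell_{n-2},\ldots$ where special cases of the recursion apply (in particular exploiting the identity $C_{x_m,y_m}(0)=0$ for $m\geq 1$), produces a linear recurrence inequality whose characteristic polynomial is $X^4-3X-6$. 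By the standard theory of linear recurrence inequalities, the dominant real root $\alpha=1.842626\ldots$ then controls the asymptotic growth, yielding $\PSL(x_n)\leq A\alpha^n$ for a constant $A$ computable from $\PSL(x_0),\ldots,\PSL(x_3)$.

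The main obstacle is the detailed bookkeeping required in the multi-level iteration. A one-level analysis by itself gives only $\PSL(x_n)=O(2^n)$, and extracting an exponential base strictly below $2$ requires simultaneously tracking peak magnitudes on several overlapping shift intervals rather than just one global peak, so that the disjointness of supports at each expansion step can actually be exploited. The integer coefficients $3$ and $6$ in the resulting recurrence emerge only after accounting for precisely which summands are simultaneously active on each interval and for the vanishing of certain boundary-shift correlations; any looser triangle-inequality argument reintroduces redundant terms whose cumulative effect is to push the characteristic polynomial back towards $X-2$.
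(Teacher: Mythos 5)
The paper does not actually prove \cref{Cassie}; it is a cited result of H\o holdt, Jensen, and Justesen (1985), and the present paper's own machinery (iterating the fundamental crosscorrelation recursion many times via the $A_t,B_t,\Gamma_t,\Delta_t$ polynomials, plus the shift-recursion generating-function argument) is aimed at the stronger base $\alpha_0 = 1.658967\ldots$ rather than at reconstructing the 1985 bound. So there is no in-paper proof to compare against; I can only evaluate your sketch on its own.

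Your strategy is essentially right, and it does reproduce $X^4-3X-6$. Starting from \cref{Geoff} (equivalently, the recursion \eqref{Reginald}) and using \cref{Kenneth} to reduce $\PSL(x_n)$ to $\PCC(x_{n-1},y_{n-1})$, write $M_k=\PCC(x_k,y_k)$. Reading off the coefficient of $z^s$ for $s>0$ and splitting by which of the terms have shifts lying inside the supports, one finds for $n$ large: for $\ell_{n-1}<|s|<\ell_n$ one gets $|C_{x_n,y_n}(s)|\le M_{n-1}$; for $0<|s|<\ell_{n-2}$ one gets $|C_{x_n,y_n}(s)|\le 3M_{n-2}$; and for $\ell_{n-2}<|s|<\ell_{n-1}$ one has to iterate twice more, after which the two $C_{x_{n-2},y_{n-2}}$ pieces reduce to shifts $\pm(3\ell_{n-3}-s)$ of $C_{x_{n-3},y_{n-3}}$ together with a single shift $5\ell_{n-4}-s$ of $C_{x_{n-4},y_{n-4}}$, giving $|C_{x_n,y_n}(s)|\le 3M_{n-3}+6M_{n-4}$. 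Since $\alpha^2\ge 3$ and $\alpha\ge 1$ for $\alpha$ the positive root of $X^4-3X-6$, the binding constraint is $3\alpha+6\le\alpha^4$, which holds with equality by definition, so $M_n\le A\alpha^n$ follows by induction. So the coefficients $3$ and $6$ really do emerge from the support bookkeeping as you predicted. That said, your writeup asserts the conclusion rather than deriving it: it never exhibits the range decomposition, the cancellation between repeated $C_{x_{n-2},y_{n-2}}$ terms, or the verification that the subdominant cases are subsumed by the characteristic root. Those are precisely the places where a looser argument would slide back to a $(X-2)$-factor, so the proof is incomplete as written, but the plan is correct and would succeed if carried out.
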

In fact, H\o holdt--Jensen--Justesen's proof works for a larger family of sequences than the Rudin--Shapiro sequences, where the extra generality comes from allowing one to choose whether or not to swap $x_n$ and $y_n$ with each other at each step of the recursion.
Further work has been done to set lower and upper bounds on $\PSL(x_n)$ for the $n$th Rudin--Shapiro sequence, $x_n$, in works of Taghavi \cite{Taghavi-1996,Taghavi-1997,Taghavi-2007} and Allouche, Choi, Denise, Erd\'elyi, and Saffari \cite{Allouche-Choi-Denise-Erdelyi-Saffari}, culminating in the work of Choi \cite[Theorem 1.1]{Choi}, who gives the bound
\[
0.27771487\ldots (1+o(1)) \alpha_0^n \leq   \PSL(x_n) \leq (3.78207844\ldots) \alpha_0^n,
\]
where $\alpha_0=1.658967081916\ldots$ is the real root of  $X^3+X^2-2 X-4$, and the other numerical constants are approximations of numbers not explicitly defined in the theorem.
In \cite[Theorem 1.1]{Choi}, there is also a bound on the crosscorrelation $C_{y_n,x_n}(s)$ values for the $n$th Rudin--Shapiro companion sequence $y_n$ with the $n$th Rudin--Shapiro sequence $x_n$ at shifts $s$ with $s>0$,
\[
0.46071984\ldots (1+o(1)) \alpha_0^n \leq   C_{y_n,x_n}(s) \leq (3.78207844\ldots) \alpha_0^n,
\]
but there is no bound for shifts $s <0$, so we do not get a bound on $\PCC(y_n,x_n)$.\footnote{A pair of formulae appears both in \cite[Theorem 1.1]{Choi} and \cite[Theorem 1]{Allouche-Choi-Denise-Erdelyi-Saffari}: $b_0=2-L_n$ and $b_k=\conj{b_{-k}}$, which become $C_{y_n,x_n}(0)=2-2^n$ and $C_{y_n,x_n}(k)=\conj{C_{y_n,x_n}(-k)}$ if translated into the notation of this paper.  If true, these would supply the rest of the crosscorrelation values, but in fact $C_{y_n,x_n}(0)=0$ for all $n>1$ (see \cref{Vladwick} of this paper), and the second formula has many counterexamples, e.g., $x_1=1+z$ and $y_1=1-z$ so that $C_{y_1,x_1}(1)=-1\not=1=\conj{C_{y_1,x_1}(-1)}$, or $x_2=1+z+z^2-z^3$ and $y_2=1+z-z^2+z^3$ so that $C_{y_2,x_2}(1)=1\not=3=\conj{C_{y_2,x_2}(-1)}$.}

Our \cref{Gale} (proved as parts of \cref{Destiny} and \cref{George}) provides an improvement of \cref{Cassie} and all the more recent bounds on the peak sidelobe level of Rudin--Shapiro sequences.
\begin{theorem}\label{Gale}
If $n$ is a nonnegative integer and $x_n$ and $y_n$ the $n$th Rudin-Shapiro sequence and its companion, then $\PSL(x_n)=\PSL(y_n)$, and we have the upper bound
\[
\PSL (x_n) \leq 5 \alpha_0^{n-4}=(0.660113 \ldots)\alpha_0^n,
\]
where $\alpha_0=1.658967\ldots$ is the real root of  $X^3+X^2-2 X-4$.
This upper bound becomes an equality if $n=4$; otherwise it is a strict inequality.
We also have the lower bound
\[
\PSL(x_n) \geq \left[(0.382159...\ldots) - (0.421193\ldots)(0.935994\ldots)^n\right] \alpha_0^n,
\]
where $0.382159\ldots=(3\alpha_0^2+21\alpha_0+2)/118$, the quantity $0.421193\ldots$ is equal to $\sqrt{(6\alpha_0^2 + 6\alpha_0 -16)/59}$, and $0.935994\ldots=\sqrt{(\alpha_0^2-1)/2}$.
\end{theorem}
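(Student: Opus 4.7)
\textbf{Proof plan for \cref{Gale}.}
The equality $\PSL(x_n)=\PSL(y_n)$ is immediate from the Golay complementary property at stage $n$: for any $s\neq 0$, $C_{x_n,x_n}(s)+C_{y_n,y_n}(s)=0$, so the two autocorrelation moduli agree at every nonzero shift.  Both the upper and lower bounds will rest on the correlation recursions obtained by expanding $x_n(z)\overline{x_n(1/z)}$, $x_n(z)\overline{y_n(1/z)}$, and $y_n(z)\overline{x_n(1/z)}$ via the definitions of $x_n,y_n$ and simplifying using the Golay identity at stage $n-1$.  For $s\neq 0$ these give
\begin{align*}
C_{x_n,x_n}(s)&=C_{x_{n-1},y_{n-1}}(s+\ell_{n-1})+C_{y_{n-1},x_{n-1}}(s-\ell_{n-1}),\\
C_{x_n,y_n}(s)&=2\,C_{x_{n-1},x_{n-1}}(s)-C_{x_{n-1},y_{n-1}}(s+\ell_{n-1})+C_{y_{n-1},x_{n-1}}(s-\ell_{n-1}),
\end{align*}
with an analogous formula (opposite cross-term signs) for $C_{y_n,x_n}$.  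Since $x_{n-1},y_{n-1}$ have degree below $\ell_{n-1}$, at each fixed $s$ at most one of the two shifted cross-correlations is nonzero; this support phenomenon is the structural lever driving everything that follows.

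\textbf{Upper bound.}
Track the triple $(M_n,N_n^+,N_n^-)$, where $M_n=\PSL(x_n)$, $N_n^+=\max_{s>0}|C_{x_n,y_n}(s)|$, and $N_n^-=\max_{s<0}|C_{x_n,y_n}(s)|$.  The single-surviving-term principle already yields the exact identity $M_n=\max(N_{n-1}^+,N_{n-1}^-)$.  The plan is now to iterate the $C_{x_n,y_n}$ recursion once more to reach stage $n-2$, partition the shift axis into the sub-intervals of $(0,2\ell_{n-1})$ demarcated by $\ell_{n-1}/2$, $\ell_{n-1}$, $3\ell_{n-1}/2$ (and their reflections), and read off linear inequalities bounding $(M_n,N_n^+,N_n^-)$ in terms of $(M_{n-2},N_{n-2}^+,N_{n-2}^-)$.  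Packaged as a matrix inequality, the recurrence has characteristic polynomial $X^3+X^2-2X-4$, whose Perron root is $\alpha_0$.  The numerical prefactor $5\alpha_0^{n-4}$ is pinned down by an induction whose base case is the direct computation $\PSL(x_4)=5$, which simultaneously furnishes the equality at $n=4$.  Strict inequality for all other $n$ follows automatically because $\PSL(x_n)\in\Z$, while $5\alpha_0^{n-4}$ is irrational whenever $n\neq 4$ (since $\alpha_0$ is algebraic of degree three over $\Q$).

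\textbf{Lower bound.}
Exhibit a recursively defined sequence of shifts $s_n$, chosen so that the single-term formula $C_{x_n,x_n}(s_n)=C_{y_{n-1},x_{n-1}}(s_n-\ell_{n-1})$ applies and, after two more iterations of the recursion, the quantity $a_n:=C_{x_n,x_n}(s_n)$ satisfies a linear recurrence with characteristic polynomial $X^3+X^2-2X-4$.  Writing the solution as $a_n=\gamma\alpha_0^n+2\,\Re(\delta\beta^n)$ where $\beta,\overline\beta$ are the complex roots, Vieta's formulas give $|\beta|^2=4/\alpha_0$, and the identity $\alpha_0^3(\alpha_0^2-1)=8$, which follows from $\alpha_0^3+\alpha_0^2-2\alpha_0-4=0$, yields $|\beta|/\alpha_0=\sqrt{(\alpha_0^2-1)/2}=0.935994\ldots$---the decay base in the theorem.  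Three initial correlation values, computable by hand at small $n$, determine $\gamma=(3\alpha_0^2+21\alpha_0+2)/118$ and $2|\delta|=\sqrt{(6\alpha_0^2+6\alpha_0-16)/59}=0.421193\ldots$, and the estimate $|a_n|\ge\gamma\alpha_0^n-2|\delta||\beta|^n$ (using $|\Re(\cdot)|\le|\cdot|$) delivers the stated lower bound on $\PSL(x_n)\ge|a_n|$.

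\textbf{Main obstacle.}
The heart of the proof is producing the cubic $X^3+X^2-2X-4$ rather than the naive quadratic $X^2-X-2$, whose Perron root $2$ would only recover the H\o holdt--Jensen--Justesen rate of \cref{Cassie}.  The improvement requires committing to two full iterations of the recursion and retaining the directional refinement $N_n^+$ versus $N_n^-$, because only after reaching stage $n-2$ does the landmark $\ell_{n-1}/2$ split the effective supports of the contributing correlations cleanly enough to avoid the double-counting that otherwise drives the eigenvalue up to $2$.  On the lower-bound side, the parallel difficulty is identifying a shift sequence that locks onto the dominant eigenvector of the cubic; verifying that the candidate $s_n$ does not accidentally project onto the oscillatory part will require a finite check at small $n$.
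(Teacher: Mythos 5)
Your treatment of the equality $\PSL(x_n)=\PSL(y_n)$ matches the paper's (Lemma~\ref{Sylvester}), and your lower-bound plan is essentially the paper's Proposition~\ref{George}: a shift sequence locked onto the Perron direction of the cubic, with the same generating-function solution and the same Vieta identities ($\alpha_1\alpha_2=\alpha_0^2+\alpha_0-2$, giving $|\alpha_1/\alpha_0|=\sqrt{(\alpha_0^2-1)/2}$) producing the stated constants.

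The upper-bound plan, however, has a genuine gap, and it is precisely the gap the paper is built around. You propose to close a finite-dimensional matrix inequality $(M_n,N_n^+,N_n^-)$ after two iterations and claim a characteristic polynomial $X^3+X^2-2X-4$. This is impossible for a nonnegative matrix: the coefficient of $X^2$ in the characteristic polynomial of a $3\times3$ matrix is $-\operatorname{tr}(A)$, so you would need $\operatorname{tr}(A)=-1<0$, contradicting the fact that every entry of a matrix built from triangle-inequality coefficients is nonnegative. The underlying three-term crosscorrelation recursion (the paper's Proposition~\ref{Cecilia}) has a crucial minus sign on the $n-3$ term; taking absolute values destroys it and yields the recurrence $P_n\le P_{n-1}+2P_{n-2}+4P_{n-3}$, whose Perron root is $\approx 2.45$, \emph{worse} than the trivial rate $2$. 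No matter how many times you iterate and no matter how finely you partition the shift axis, the paper finds (Section~\ref{Iterating}, Proposition~\ref{Velma}) that the triangle-inequality bound always fails on a nonempty exceptional set of shifts $E_t$, and those sets do not shrink to nothing; they concentrate near the shift $s\approx -\ell_n/3$, which is exactly where the crosscorrelation is largest (the standard shift sequence $t_n\to -\ell_n/3$). You need a second, complementary technique. The paper's fix is Section~\ref{Shifts}: for shifts $s$ that follow the shift recursion rule $s_{n+1}=-s_n-\ell_n$, one can solve the exact three-term recurrence (preserving the sign cancellation) via generating functions (Propositions~\ref{Lily} and~\ref{Mark}), and Lemma~\ref{Phyllis} shows that the exceptional shifts from the $t=10$ iteration all have $n-m\ge 11$, which is where Proposition~\ref{Mark} applies. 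The induction in Theorem~\ref{Destiny} interleaves the two methods. Without something playing the role of Proposition~\ref{Mark}, your induction cannot close, and you would fall back to rate $2$ or worse. Your ``main obstacle'' paragraph correctly senses that the landmark $\ell_{n-1}/2$ and the $\pm$ split are not obviously enough, but does not identify that the obstruction is insurmountable for triangle-inequality bounds alone and that a sign-preserving argument on a one-parameter family of shifts is required.
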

This shows that \(0.382159 \ldots \leq \limsup_{n \to \infty} \PSL(x_n)/\alpha_0^n \leq 0.660113\ldots\) (cf.~\cref{Sergio}), so that one cannot lower the exponential base our upper bound without rendering it invalid for almost every $n$.
Furthermore, if one maintains the base $\alpha_0$, then the fact that the upper bound is met for $n=4$ shows that one cannot lower its constant prefactor without rendering the bound invalid for at least one $n$.
The lower bound in \cref{Gale} is not very strong for small values of $n$.
Based on computation of $\PSL(x_n)$ for small values of $n$, we also obtain (as \cref{Vanessa}) the following lower bound, valid for all $n>0$ and stronger than the lower bound in \cref{Gale} for $n\leq 42$:
\[
\PSL(x_n)=\PSL(y_n) \geq 133991557 \alpha_0^{n-39}=(0.357605\ldots) \alpha_0^n,
\]
with exact equality when $n=39$, and strict inequality otherwise.
The prefactor of this lower bound is more than half the prefactor of the upper bound from \cref{Gale}.

These results were obtained using a relation between the autocorrelation for the Rudin--Shapiro sequence $x_n$ and the crosscorrelation for $x_{n-1}$ and its companion sequence $y_{n-1}$.
So in fact, we first proved (as parts of \cref{Destiny} and \cref{George}) the following analogue of \cref{Gale} for crosscorrelation.
\begin{theorem}\label{Gabby}
If  $n$ is a nonnegative integer and $x_n$ and $y_n$ the $n$th Rudin-Shapiro sequence and its companion, then we have the upper bound
\[
\PCC (x_n, y_n)\leq 5 \alpha_0^{n-3}=( 1.095107\ldots)\alpha_0^n,
\]
where $\alpha_0=1.658967\ldots$ is the real root of  $X^3+X^2-2 X-4$.
This upper bound becomes an equality if $n=3$; otherwise it is a strict inequality.
We also have the lower bound
\[
\PCC(x_n,y_n) \geq \left[(0.633990\ldots) - (0.654022\ldots)(0.935994\ldots)^n\right] \alpha_0^n,
\]
where $0.633990\ldots=(9 \alpha_0^2 + 4 \alpha_0 + 6)/59$, the quantity $0.654022\ldots$ is equal to $2\sqrt{(-4\alpha_0^2 + 2\alpha_0 + 14)/59}$, and $0.935994\ldots=\sqrt{(\alpha_0^2-1)/2}$.
\end{theorem}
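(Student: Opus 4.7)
The plan is to extract from the Golay--Rudin--Shapiro recursion a linear recursion on a vector of cross-correlation values whose transition matrix has spectral radius $\alpha_0$, and then use this linear recursion both to bound $\PCC(x_n,y_n)$ from above (via norms of an eigenvector decomposition) and from below (via an exact eigenvalue expansion at a distinguished shift).

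The starting point is to multiply out $x_n(z)\overline{y_n(z^{-1})}$ using $x_n=x_{n-1}+z^{\ell_{n-1}}y_{n-1}$ and $y_n=x_{n-1}-z^{\ell_{n-1}}y_{n-1}$ and read off the coefficient of $z^s$; after applying the Golay complementary identity $C_{x_{n-1},x_{n-1}}(s)+C_{y_{n-1},y_{n-1}}(s)=0$ at $s\neq 0$, this produces the basic recursion
\[
C_{x_n,y_n}(s)=2C_{x_{n-1},x_{n-1}}(s)-C_{x_{n-1},y_{n-1}}(s+\ell_{n-1})+C_{y_{n-1},x_{n-1}}(s-\ell_{n-1}).
\]
The parallel expansion of $x_n(z)\overline{x_n(z^{-1})}$ gives a ``half-length'' simplification $C_{x_n,x_n}(s)=C_{y_{n-1},x_{n-1}}(s-\ell_{n-1})$ for $0<s<\ell_n$, since the other two terms vanish by the degree bound on $x_{n-1},y_{n-1}$. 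This second identity is the structural insight that makes the paper work: every autocorrelation on the right of the basic recursion is secretly a cross-correlation of a shorter pair, so the whole analysis can live inside the single family of cross-correlations (and the same degree bound forces $C_{x_n,y_n}(0)=0$ for $n\geq 1$, so shift zero contributes nothing).

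For the upper bound, I would split the shift range $1\leq |s|<\ell_n$ at $|s|=\ell_{n-1}$: on each sub-range at least one of the two shifted cross-correlation terms in the basic recursion vanishes by the degree bound, and substituting the half-length identity into the surviving autocorrelation term rewrites $C_{x_n,y_n}(s)$ as an integer combination of cross-correlations of $(x_{n-1},y_{n-1})$ and $(x_{n-2},y_{n-2})$ at specified shifts. Iterating this substitution one level further closes the relation up: the correlation values at three carefully chosen shifts fit into a vector $v_n$ whose evolution $v_n=Mv_{n-1}$ is given by an integer $3\times 3$ matrix $M$ with characteristic polynomial $X^3+X^2-2X-4$, hence spectral radius $\alpha_0$. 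Decomposing $v_n$ against the generalized eigenbasis of $M$ gives $|C_{x_n,y_n}(s)|\leq K\alpha_0^n$ for an explicit $K$, and tightening $K$ using the base case $n=3$ (a direct enumeration of the $8$-term sequences confirms $|C_{x_3,y_3}(-3)|=5$) selects the sharp prefactor and produces $\PCC(x_n,y_n)\leq 5\alpha_0^{n-3}$, with equality at $n=3$. For the lower bound I would retain signed (not absolute) correlation values along a distinguished orbit of shifts $s^{\star}_n$ dictated by the recursion; these signed values obey the same linear recursion, so diagonalizing $M$ over $\C$ expresses them as $c_1\alpha_0^n+c_2\beta^n+\overline{c_2}\bar\beta^n$, where $\beta,\bar\beta$ are the two non-real roots of $X^3+X^2-2X-4$ and have common modulus $\sqrt{(\alpha_0^2-1)/2}$. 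Solving a small linear system for $c_1,c_2$ from initial data produces the closed forms $c_1=(9\alpha_0^2+4\alpha_0+6)/59$ and $|c_2|=\sqrt{(-4\alpha_0^2+2\alpha_0+14)/59}$ stated in the theorem, and the reverse triangle inequality $|c_1\alpha_0^n+c_2\beta^n+\overline{c_2}\bar\beta^n|\geq c_1\alpha_0^n-2|c_2||\beta|^n$ then delivers the displayed lower bound.

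The hard part is the identification of the three-dimensional invariant subspace. The naive one-step inequality gives at best $\PCC(x_n,y_n)\leq 2\PSL(x_{n-1})+\PCC(x_{n-1},y_{n-1})$, whose characteristic polynomial $X^2-X-2$ has spectral radius $2$---this is the regime of \cite{Hoholdt-Jensen-Justesen}, and no amount of clever bookkeeping within a two-level recursion can break below $2$. Pushing the base down to $\alpha_0$ forces one to iterate three levels and locate the precise triple of shifts whose correlations close up into a $3$-dimensional invariant subspace with characteristic polynomial $X^3+X^2-2X-4$. That identification is simultaneously what forces the sharp prefactor $5$ in the upper bound and what pins down the rational-in-$\alpha_0$ coefficients $c_1,c_2$ in the lower bound, so the combinatorial carving-up of the shift range and the subsequent linear algebra over $\Q(\alpha_0)$ is where I expect the bulk of the calculational effort to lie.
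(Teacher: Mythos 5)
Your sketch of the lower bound is essentially the paper's argument (Proposition~\ref{George}): pick the standard shift sequence $t_n$, which follows the shift recursion rule, verify a three-term linear recurrence for $C_{x_n,y_n}(t_n)$ (Proposition~\ref{Cecilia}), diagonalize, compute the eigenvector coefficients $E_0,E_1,E_2$ explicitly, and apply the reverse triangle inequality. That part is sound.

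The upper bound, however, has a genuine gap. Your proposed three-dimensional invariant subspace argument---a vector $v_n$ of correlation values at ``three carefully chosen shifts'' evolving by a fixed $3\times 3$ matrix---works only for shifts $s$ that lie on a single orbit of the shift recursion rule, and even then the eigenvector decomposition yields a bound of the form $|C_{x_n,y_n}(s_n)| \leq C(f_{0,m},f_{1,m})\,\alpha_0^{n-m}$, where $m$ is the first index at which $|s_m|<\ell_m$ and the constant depends on the orbit's initial data. To absorb that constant into a uniform $K$ with $|C_{x_n,y_n}(s)|\leq K\alpha_0^n$, you need $n-m$ large (the paper requires $n-m\geq 9$ in Proposition~\ref{Mark}). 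But $\PCC(x_n,y_n)$ is a maximum over \emph{all} shifts, and most shifts $s$ have $n-m$ small (indeed, roughly half of all shifts have $|s|<\ell_0\cdot 2^{n-1}$ already close to $\pm\ell_{n-1}$, for which $m$ is near $n$). For those shifts the eigenbasis decomposition along the orbit gives nothing useful, and the recursion in \cref{Geoff} takes a generic shift $s$ to a complicated collection of shifts in $(x_{n-1},y_{n-1})$ and $(x_{n-2},y_{n-2})$ that does not close up into a three-dimensional subspace. The naive estimate $M_n\leq M_{n-1}+2M_{n-2}$ from \cref{Geoff} has growth rate $2$, as you note, and breaking below $2$ for those non-orbit shifts is exactly the hard part the paper addresses by a completely different mechanism: iterating \cref{Geoff} up to ten levels (Proposition~\ref{Brent}, Corollary~\ref{Justine}, Lemma~\ref{Nellie}) with the explicitly computed coefficients $A_{t,q},B_{t,q},\Gamma_{t,q},\Delta_{t,q}$ from Table~\ref{Charles}, then verifying case-by-case in Proposition~\ref{Velma} that the resulting bound is $\leq K\alpha_0^n$ outside a shrinking exceptional set $E_t$, and finally showing (Lemma~\ref{Phyllis}) that any $s$ in the residual exceptional set $E_{10}$ has orbit entry time $m$ with $n-m\geq 11$, so that Proposition~\ref{Mark} applies. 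Your proposal has no analogue of this two-pronged interplay between the iterated-coefficient bounds and the orbit argument, and without it the upper bound does not follow.
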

As with the upper bound in \cref{Gale}, we cannot lower the exponential base of the upper bound without invalidating the bound for almost all $n$ (see \cref{Sergio}), and if we maintain the base, then we cannot lower the prefactor without the bound failing for at least one $n$.
We also have a computation-based lower bound on $\PCC(x_n,y_n)$ from \cref{Vanessa}, which is valid for all $n \in \N$ and stronger than the lower bound in \cref{Gabby} for $n\leq 41$:
\[
\PCC(x_n,y_n) \geq 133991557 \alpha_0^{n-38} = (0.593256\ldots) \alpha_0^n,
\]
with exact equality when $n=38$, and strict inequality otherwise.
The prefactor of this lower bound is more than half the prefactor of the upper bound from \cref{Gabby}.

In \cref{Generic Destiny} (restated here as \cref{Tommy}) we also prove analogous bounds for the more general class of Golay--Rudin--Shapiro sequences from \cref{Stan}.
To obtain a valid constant prefactor for the bounds, one needs to know the correlation values for the seed pair, $(x_0,y_0)$, which is the origin of the recursion.
\begin{theorem}\label{Tommy}
Suppose that $n$ is a nonnegative integer and $\ell_n$, $x_n$, and $y_n$ are as in \cref{Stan}, and that $K=9\alpha_0^{-4} \PCC(x_0,y_0) +18 \alpha_0^{-5} \PSL(x_0)$, where $\alpha_0=1.658967\ldots$ is the real root of  $X^3+X^2-2 X-4$.
Then $\PCC(x_n,y_n)\leq K \alpha_0 ^n$, and if $n>0$, then $\PSL(x_n)= \PSL(y_n)\leq K \alpha_0 ^{n-1}$.
\end{theorem}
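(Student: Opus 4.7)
The plan is to adapt the argument behind the Rudin--Shapiro cases (\cref{Gale} and \cref{Gabby}) to an arbitrary seed pair $(x_0,y_0)$ satisfying the hypotheses of \cref{Stan}, with the prefactor $K$ calibrated to absorb the initial correlations of the seed.

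Setting $f=x_{n-1}$, $g=y_{n-1}$, and $L=\ell_{n-1}$, the recursion $x_n=f+z^Lg$, $y_n=f-z^Lg$ together with $\deg f,\deg g<L$ yields by direct expansion of \eqref{Tatiana}, for every nonzero~$s$,
\begin{align*}
C_{x_n,x_n}(s)&=C_{f,g}(s+L)+\overline{C_{f,g}(L-s)},\\
C_{y_n,y_n}(s)&=-C_{x_n,x_n}(s),\\
C_{x_n,y_n}(s)&=2C_{f,f}(s)-C_{f,g}(s+L)+\overline{C_{f,g}(L-s)},
\end{align*}
together with $C_{x_n,y_n}(0)=-C_{f,g}(L)+\overline{C_{f,g}(L)}$, using that the balance $C_{f,f}(0)=C_{g,g}(0)$ is preserved through the recursion. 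The second identity immediately delivers $\PSL(x_n)=\PSL(y_n)$ for all $n\geq 1$.

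Crude absolute-value bounds from the identities give $\PSL(x_n)\leq 2\PCC(x_{n-1},y_{n-1})$ and $\PCC(x_n,y_n)\leq 2\PSL(x_{n-1})+2\PCC(x_{n-1},y_{n-1})$, whose combined characteristic polynomial is quadratic with dominant root $1+\sqrt{5}$, well above $\alpha_0$. To reach the growth rate $\alpha_0$, I would substitute the crosscorrelation identity into itself one additional level: expanding $C_{f,f}(s)$ as a paired level-$(n-2)$ crosscorrelation and $C_{f,g}(s\pm L)$ via its own one-step expansion introduces level-$(n-2)$ crosscorrelations at four paired shifts, and exploiting the complementary-pair property $C_{x_{n-2},x_{n-2}}+C_{y_{n-2},y_{n-2}}\equiv 0$ at nonzero shifts collapses the recursion into a coupled three-term bound whose characteristic polynomial is $X^3+X^2-2X-4$.

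With such a three-step recursive inequality in hand, induction on $n$ gives the stated bounds. The base cases $n=0,1,2$ follow from the correlation identities applied directly to the seed, and the precise coefficients $9\alpha_0^{-4}$ and $18\alpha_0^{-5}$ in $K$ are exactly what is required so that the inductive step closes; observing that $9\alpha_0^{-4}>1$ makes the $n=0$ inequality $\PCC(x_0,y_0)\leq K$ automatic. The main obstacle is the refinement from the quadratic growth rate $1+\sqrt{5}$ of the naive recursion to the cubic growth rate $\alpha_0$; this requires carefully extracting savings from the symmetric pairing of shifts at the second iteration and handling the boundary shifts $s\in\{0,\pm L\}$ where the correlation identities degenerate.
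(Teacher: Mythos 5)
Your one-step correlation identities are correct, and your observation that the naive absolute-value bounds give too large a growth rate is also right. But the crucial step in your plan --- iterating the crosscorrelation identity once more and ``collapsing'' it into a coupled three-term bound with characteristic polynomial $X^3+X^2-2X-4$ valid for all shifts --- does not work, and this is precisely the difficulty the paper spends Sections~\ref{Iterating} and~\ref{Shifts} navigating. Substituting the one-step identity into itself does yield the two-step recursion of \cref{Geoff}, namely $C_{x_n,y_n}(s) = \pm\,C_{x_{n-1},y_{n-1}}(\cdot) + 2\,C_{x_{n-2},y_{n-2}}(\cdot)$ (after the complementary-pair cancellation of $C_{f,f}$ terms has already been used), but taking suprema over shifts only gives $M_n \le M_{n-1} + 2M_{n-2}$, whose dominant root is $2 > \alpha_0$. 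There is no uniform-over-$s$ three-term inequality with dominant root $\alpha_0$: the cubic $X^3+X^2-2X-4$ enters the paper only in \cref{Cecilia}, where a genuine three-term \emph{equality} $f_{0,k+3}-f_{0,k+2}-2f_{0,k+1}+4f_{0,k}=0$ holds, but only along the thin family of shifts $s_0,s_1,\ldots$ satisfying the shift recursion rule with both $s_{n-1},s_n<0$; it does not govern the sequence of peak values $M_n$.

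Because no single technique closes the induction, the paper combines two: \cref{Velma} iterates the fundamental crosscorrelation recursion $t$ times (for $t$ up to $10$) and exploits the explicit small coefficients $A_{t,q}, B_{t,q}, \Gamma_{t,q}, \Delta_{t,q}$ from Table~\ref{Charles} to handle all shifts \emph{except} for residual exceptional sets $E_t$, while \cref{Mark} handles precisely those leftover shifts via the generating-function analysis of the shift recursion, and \cref{Phyllis} verifies that the leftover shifts from the $t=10$ case fall within the range where \cref{Mark} applies. Additionally, your plan treats only $n=0,1,2$ as base cases, whereas the paper must verify $n=0,\ldots,10$ one by one (using \cref{Derrel}) before the combined inductive machinery takes over at $n\geq 11$. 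Your proposal correctly identifies the target characteristic polynomial and the role of the prefactor $K$, but the middle of the argument --- obtaining a valid recursive bound at rate $\alpha_0$ --- is asserted rather than established, and the assertion in the form you give it is false.
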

As with the bounds in Theorems \ref{Gale} and \ref{Gabby}, we cannot lower the exponential base of either of these bounds without invalidating the bound for almost all $n$ (see \cref{Generic Sergio}).

This paper concerns aperiodic correlation, but here we briefly state the significance of these bounds for periodic correlation.
Suppose that $f=(f_0,f_1,\ldots,f_{k-1})$ and $g=(g_0,g_1,\ldots,g_{k-1})$, and we wish these to represent periodic sequences with period $k$.
Then for $s \in \{0,1,\ldots,k-1\}$, the {\it periodic crosscorrelation of $f$ with $g$}, written $\PC_{f,g}(s)$, equals $C_{f,g}(s)+C_{f,g}(s-k)$.
In particular, note that if we are concerned with periodic autocorrelation at shift $0$, then $\PC_{f,f}(0)=C_{f,f}(0)$, since $C_{f,f}(-k)=0$.
Since every periodic correlation is the sum of two aperiodic correlation values, we see that the periodic analogues of peak crosscorrelation and peak sidelobe level are clearly bounded above by twice the aperiodic bound:
\begin{align*}
\max_{0 \leq s < k} |\PC_{f,g}(s)| & \leq 2 \PCC(f,g) \\
\max_{0 < s < k} |\PC_{f,f}(s)| & \leq 2 \PSL(f).
\end{align*}
Thus, our upper bounds from Theorems \ref{Gale}--\ref{Tommy} all yield upper bounds on periodic correlation.  For example, if the Rudin--Shapiro sequences $x_n$ and $y_n$ are regarded as sequences of period $2^n$, then \cref{Gabby} shows that $\max_{0 \leq s < 2^n} |\PC_{x_n,y_n}(s)| \leq 10 \alpha_0^{n-3}$ for every $n \in \N$.
Since this bounding technique does not take account of possible cancellation between the two summands $C_{x_n,y_n}(s)$ and $C_{x_n,y_n}(s-2^n)$ in the expression for $\PC_{x_n,y_n}(s)$, our bounds on periodic correlation may not be tight, but analysis and refinement of periodic bounds is beyond the scope of this paper.

This paper is organized as follows.
In \cref{GRS}, the relevant background material will be explained, including a brief exposition in \cref{Ollie} on how to use solely rational arithmetic to verify equations and inequalities involving the algebraic number $\alpha_0$, which appears in our bounds above.
In \cref{GRS} we also begin to calculate autocorrelation and crosscorrelation of Golay--Rudin--Shapiro sequences in terms of sequences from earlier stages of the recursion.
In \cref{Iterating}, we iterate this recursion multiple times to obtain a bound on the crosscorrelation of our Golay pairs for the vast majority of shifts, but there is a small set of shifts for which we cannot obtain the desired bounds.
We provide the results needed to close this gap in \cref{Shifts}, where we prove a bound on the crosscorrelation of our Golay pairs at particular shifts that follow a special recursion.
We then combine the bounds from \cref{Iterating} and \cref{Shifts} to obtain a bound for all shifts in \cref{Final Bounds}.
We use multiple methods because there does not appear to be a single technique that can be practically applied to achieve our bounds.

\section{Preliminaries}\label{GRS}

In this paper, $\N$ denotes the set of nonnegative integers and $\Z^+$ denotes the set of strictly positive integers.
All the definitions and notations in the Introduction remain in force for the rest of the paper.
In particular, the reader should recall that we always identify a sequence $(f_0,f_1,\ldots,f_{k-1})$ of complex numbers with the polynomial $f(z)=\sum_{j \in \Z} f_j z^j$ using the convention that $f_j=0$ if $j\not\in\{0,1,\ldots,k-1\}$.
Furthermore, $\ell_n$, $x_n$, and $y_n$ are always as specified in \cref{Stan}.
That is, $\ell_0$ is a positive integer and $x_0$ and $y_0$ are nonzero polynomials in $\C[z]$ of degree less than $\ell_0$ such that $(x_0,y_0)$ is a Golay complementary pair with $C_{x_0,x_0}(0)=C_{y_0,y_0}(0)$.
Also $\ell_n=2^n \ell_0$ for each $n \in \N$, and we recursively define an infinite sequence of pairs $\{(x_n,y_n)\}_{n \in \N}$ of polynomials by the rules $x_n(z)=x_{n-1}(z)+z^{\ell_{n-1}} y_{n-1}(z)$ and $y_n(z)=x_{n-1}(z)-z^{\ell_{n-1}} y_{n-1}(z)$.
That is, $(x_n,y_n)$ is the $n$th Golay pair obtained from $\ell_0$ and the seed pair $(x_0,y_0)$ via the Golay--Rudin--Shapiro recursion.

\subsection{Fundamental facts}\label{Benjamin}
Throughout this paper, we let $ \C[z,z^{-1}] $ denote the ring of Laurent polynomials with coefficients from $\C$.
Because there is a strong connection between sequences and Laurent polynomials on the complex unit circle, if $ a(z)=\sum_{j\in \Z}a_jz^j $, then we use $ \conj{a(z)} $ as a shorthand for $\sum_{j\in \Z}\conj{a_j}z^{-j}$, and we also use $|a(z)|^2$ as a shorthand for $a(z) \conj{a(z)}$.
The crosscorrelation of two sequences (see \eqref{Tatiana}) can be expressed as a coefficient of a particular Laurent polynomial.
\begin{lemma}\label{David}
If $f$ and $g$ are sequences, then
\[
f(z) \conj{g(z)} = \sum_{s \in \Z} C_{f,g}(s) z^s.
\]
\end{lemma}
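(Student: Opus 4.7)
The plan is to prove this by direct computation: expand the product $f(z)\conj{g(z)}$ using the definitions, regroup the double sum by the exponent of $z$, and recognize the resulting inner sums as the crosscorrelation values from \eqref{Tatiana}.

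First, I would unfold the notation. By definition, $f(z) = \sum_{j \in \Z} f_j z^j$, and by the convention established at the start of \cref{Benjamin}, $\conj{g(z)} = \sum_{k \in \Z} \conj{g_k}\, z^{-k}$. Since each of $f$ and $g$ has only finitely many nonzero coefficients, the Laurent polynomial product is well defined, and I can freely rearrange the resulting double sum. Multiplying term-by-term gives
\[
f(z)\conj{g(z)} \;=\; \sum_{j \in \Z}\sum_{k \in \Z} f_j \conj{g_k}\, z^{j-k}.
\]

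Next I would reindex by the exponent. Setting $s = j-k$ and keeping $k$ as the inner summation variable (so $j = k+s$), the double sum becomes
\[
\sum_{s \in \Z}\left(\sum_{k \in \Z} f_{k+s}\conj{g_k}\right) z^s.
\]
The interchange of summation order is legitimate because for fixed $s$ there are only finitely many $k$ contributing (since $f_{k+s}$ and $\conj{g_k}$ vanish outside a bounded range), and only finitely many $s$ produce any contribution at all.

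Finally, comparing the inner sum with \eqref{Tatiana}, I would observe that $\sum_{k \in \Z} f_{k+s}\conj{g_k} = C_{f,g}(s)$, yielding the claimed identity. There is no real obstacle here; the only care needed is to make sure the convention $\conj{g(z)} = \sum_k \conj{g_k}\, z^{-k}$ is being applied (rather than complex conjugation of the polynomial evaluated at $z$), and to justify the rearrangement, which is automatic from finiteness of support.
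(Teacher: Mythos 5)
Your proof is correct and follows essentially the same approach as the paper: both identify the coefficient of $z^s$ in $f(z)\conj{g(z)}$ and match it with $C_{f,g}(s)$ from \eqref{Tatiana}. You simply spell out the reindexing $j = k+s$ and the finiteness justification more explicitly than the paper does.
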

\begin{proof}
Write $f(z)=\sum_{j \in \Z} f_j z^j$ and $g(z)=\sum_{j \in \Z} g_j z^j$.
The coefficient of $z^s$ in $f(z)\conj{g(z)}$ is $\sum_{j \in \Z} f_{j+s} \conj{g_j}$, which is precisely $C_{f,g}(s)$.
\end{proof}
\begin{corollary}\label{Bob}
If $f$ and $g$ are sequences and $s \in \Z$, then $C_{g,f}(s)=\conj{C_{f,g}(-s)}$.
\end{corollary}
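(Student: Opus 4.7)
The plan is to derive this as a direct consequence of \cref{David}, using the conjugation convention $\conj{a(z)} = \sum_{j\in\Z}\conj{a_j}z^{-j}$ defined in \cref{Benjamin}. By \cref{David} applied to the pair $(f,g)$, we have $f(z)\conj{g(z)} = \sum_{s\in\Z} C_{f,g}(s) z^s$. Applying the conjugation convention to both sides, the left-hand side becomes $\conj{f(z)\conj{g(z)}} = \conj{f(z)}\cdot g(z) = g(z)\conj{f(z)}$, while the right-hand side becomes $\sum_{s\in\Z}\conj{C_{f,g}(s)}z^{-s}$, which after the substitution $s \mapsto -s$ equals $\sum_{s\in\Z}\conj{C_{f,g}(-s)}z^s$.

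Now applying \cref{David} to the pair $(g,f)$ gives $g(z)\conj{f(z)} = \sum_{s\in\Z}C_{g,f}(s)z^s$. Comparing the two Laurent polynomial expressions for $g(z)\conj{f(z)}$ and equating coefficients of $z^s$ yields $C_{g,f}(s) = \conj{C_{f,g}(-s)}$, as desired.

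Alternatively, one could bypass \cref{David} entirely and give a one-line direct computation from definition \eqref{Tatiana}: starting from $\conj{C_{f,g}(-s)} = \conj{\sum_{j\in\Z} f_{j-s}\conj{g_j}} = \sum_{j\in\Z}\conj{f_{j-s}} g_j$, a reindexing $j \mapsto j+s$ gives $\sum_{j\in\Z}g_{j+s}\conj{f_j} = C_{g,f}(s)$. There is no real obstacle here — the only thing to be careful about is making the reindexing explicit and confirming that the conjugation convention applied to a product distributes as expected (i.e.\ $\conj{a(z)b(z)} = \conj{a(z)}\,\conj{b(z)}$), which follows immediately from expanding both sides as Laurent series.
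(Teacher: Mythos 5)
Your proposal is correct, and your first argument (deriving the identity by applying \cref{David} to both pairs and conjugating) is precisely the route the paper intends, since \cref{Bob} is stated as an immediate corollary of \cref{David} with no written proof. Your alternative one-line computation directly from \eqref{Tatiana} is also valid and arguably even more elementary, though it bypasses the structure the paper sets up.
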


\begin{corollary}\label{Brady}
If $f$ and $g$ are sequences, then $\PCC(f,g) =\PCC(g,f)$.
\end{corollary}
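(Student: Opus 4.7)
The plan is to deduce this immediately from Corollary \ref{Bob}. Since $C_{g,f}(s) = \conj{C_{f,g}(-s)}$ for every $s \in \Z$, taking absolute values gives $|C_{g,f}(s)| = |C_{f,g}(-s)|$, because complex conjugation preserves modulus.

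Now I would take the maximum over $s \in \Z$ on both sides. The map $s \mapsto -s$ is a bijection $\Z \to \Z$, so
\[
\PCC(g,f) = \max_{s \in \Z} |C_{g,f}(s)| = \max_{s \in \Z} |C_{f,g}(-s)| = \max_{t \in \Z} |C_{f,g}(t)| = \PCC(f,g),
\]
where the third equality is obtained by the substitution $t = -s$. This completes the proof. There is no real obstacle here: the corollary is a one-line consequence of Corollary \ref{Bob} together with the symmetry of $\Z$ under negation.
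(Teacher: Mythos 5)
Your proof is correct and follows the same approach the paper intends: Corollary \ref{Brady} is presented as an immediate consequence of Corollary \ref{Bob}, which is exactly the one-line argument you give (take moduli and reindex the maximum via $s\mapsto -s$).
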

We do not need to keep track of separate values of peak sidelobe level for the two sequences in a Golay pair, because they are always identical.
\begin{lemma}\label{Sylvester}
If $(f,g)$ is a Golay pair, then $\PSL(f)=\PSL(g)$.
\end{lemma}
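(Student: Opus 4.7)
The claim follows almost immediately from the defining property of a Golay complementary pair. The plan is simply to unwind both definitions and observe that the magnitudes of the two autocorrelation functions agree pointwise at every nonzero shift.

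Concretely, I would proceed as follows. By the definition of a Golay complementary pair given just before \cref{Stan}, we have
\[
C_{f,f}(s) + C_{g,g}(s) = 0 \qquad \text{for every } s \in \Z \smallsetminus \{0\}.
\]
Rearranging gives $C_{f,f}(s) = -C_{g,g}(s)$, and hence $|C_{f,f}(s)| = |C_{g,g}(s)|$ for every nonzero $s$. Taking the maximum of both sides over $s \in \Z \smallsetminus \{0\}$ and applying the definition of $\PSL$ from the Introduction yields
\[
\PSL(f) = \max_{s \in \Z \smallsetminus \{0\}} |C_{f,f}(s)| = \max_{s \in \Z \smallsetminus \{0\}} |C_{g,g}(s)| = \PSL(g),
\]
as desired.

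There is essentially no obstacle here: the lemma is a direct consequence of the complementarity identity. The only thing to verify is that the shift $s = 0$ (where the complementarity identity need not hold) is properly excluded, which it is by the definition of $\PSL$. No use of \cref{David} or \cref{Bob} is required, although one could alternatively phrase the argument by noting that $|f(z)|^2 + |g(z)|^2$ has its only nonzero Laurent coefficient at $z^0$, so $|f(z)|^2$ and $|g(z)|^2$ differ by a constant, forcing their off-diagonal coefficients to be negatives of one another.
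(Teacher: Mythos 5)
Your proof is correct and takes the same approach as the paper: both use the complementarity identity $C_{f,f}(s) = -C_{g,g}(s)$ for nonzero $s$, pass to magnitudes, and take the maximum over nonzero shifts. Your version is slightly more spelled out, but it is essentially identical to the paper's argument.
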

\begin{proof}
Since $(f,g)$ is a Golay pair, we know that  $|C_{f,f}(s)|=|-C_{g,g}(s)|=|C_{g,g}(s)|$ for all nonzero $s \in \Z$. Therefore, $\PSL(f)= \PSL(g)$.
\end{proof}

\subsection{Correlation recursions}
We now prove some basic results on how the correlation values for Golay--Rudin--Shapiro sequences can be calculated from correlation values of sequences appearing in earlier stages of the recursion.
First we record an important result on the degrees of our Golay--Rudin--Shapiro sequences.
\begin{lemma}\label{Claudia}
We have $\deg(x_n), \deg(y_n) < \ell_n$ for each $n \in \N$.
If $\deg(y_0)=\ell_0-1$, then $\deg(x_n)=\deg(y_n)=\ell_n-1$ for each $n \in \N$.
If $x_0$ and $y_0$ are binary sequences of length $\ell_0$, then $x_n$ and $y_n$ are binary sequences of length $\ell_n$ for each $n \in \N$. 
\end{lemma}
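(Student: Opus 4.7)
The plan is to prove all three statements by induction on $n$, exploiting the fact that in each recursive step the two summands $x_{n-1}(z)$ and $z^{\ell_{n-1}} y_{n-1}(z)$ have disjoint ``active'' intervals of exponents, namely $[0,\ell_{n-1}-1]$ and $[\ell_{n-1},\ell_n-1]$ respectively, because $\ell_n = 2\ell_{n-1}$. This disjointness propagates both the degree bounds and the binary structure cleanly through each step.

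For the first statement, I induct on $n$, taking as base case the hypothesis of \cref{Stan} that $\deg(x_0),\deg(y_0) < \ell_0$. In the inductive step, assuming $\deg(x_{n-1}),\deg(y_{n-1}) < \ell_{n-1}$, the summand $x_{n-1}(z)$ has degree less than $\ell_{n-1} < \ell_n$ while $z^{\ell_{n-1}} y_{n-1}(z)$ has degree $\ell_{n-1} + \deg(y_{n-1}) < 2\ell_{n-1} = \ell_n$, so both $x_n$ and $y_n$ (being $\pm$ combinations of these two polynomials) have degree strictly less than $\ell_n$.

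For the second statement, I observe that under the assumption $\deg(y_0) = \ell_0-1$ one can show by induction that $\deg(y_n) = \ell_n - 1$ for every $n \in \N$, and that $\deg(x_n) = \ell_n-1$ for every $n \geq 1$. Indeed, if $\deg(y_{n-1}) = \ell_{n-1}-1$, then $z^{\ell_{n-1}} y_{n-1}(z)$ has a nonzero coefficient at $z^{\ell_n - 1}$ equal to the leading coefficient of $y_{n-1}$, while $x_{n-1}(z)$ has degree less than $\ell_{n-1} \leq \ell_n - 1$ and therefore cannot cancel that top coefficient. Thus both $x_n$ and $y_n$ have a nonzero coefficient at $z^{\ell_n-1}$, and combined with the first part this forces $\deg(x_n) = \deg(y_n) = \ell_n - 1$; the inductive hypothesis on $\deg(y_n)$ is then maintained for the next step.

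For the third statement, I again induct, with the base case being the hypothesis that $x_0$ and $y_0$ are binary of length $\ell_0$. In the inductive step the polynomial $x_{n-1}(z)$ has coefficients in $\{1,-1\}$ at exponents $0,\ldots,\ell_{n-1}-1$ and zero elsewhere, while $z^{\ell_{n-1}} y_{n-1}(z)$ has coefficients in $\{1,-1\}$ at exponents $\ell_{n-1},\ldots,\ell_n - 1$ and zero elsewhere; since these supports are disjoint, no cancellation or doubling can occur, and therefore $x_n = x_{n-1} + z^{\ell_{n-1}} y_{n-1}$ and $y_n = x_{n-1} - z^{\ell_{n-1}} y_{n-1}$ each have coefficients in $\{1,-1\}$ at every exponent $0,\ldots,\ell_n-1$, completing the induction. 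There is no real obstacle here; the only subtle point is to explicitly invoke disjointness of supports (equivalently, $\ell_n = 2\ell_{n-1}$) to rule out interactions between the two summands, which is exactly what makes the Golay--Rudin--Shapiro recursion well-defined in the first place.
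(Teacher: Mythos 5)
Your proof is correct and fills in the details of the paper's one-sentence argument, which simply says that all three claims follow by induction from the recursion $x_n = x_{n-1} + z^{\ell_{n-1}}y_{n-1}$, $y_n = x_{n-1} - z^{\ell_{n-1}}y_{n-1}$ together with $\ell_n = 2\ell_{n-1}$; the disjoint-support observation you isolate is exactly the mechanism the paper has in mind.

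One small point worth noting: for the second claim your argument establishes $\deg(x_n) = \ell_n - 1$ only for $n \geq 1$, whereas the lemma as stated says ``for each $n \in \N$,'' which includes the assertion $\deg(x_0) = \ell_0 - 1$. Your cautious phrasing ``for every $n \geq 1$'' is actually the correct one: $\deg(x_0) = \ell_0 - 1$ does not follow from the hypotheses of \cref{Stan}. For instance, take $\ell_0 = 2$, $x_0(z) = 1$, $y_0(z) = z$; then $(x_0,y_0)$ is a Golay complementary pair with $C_{x_0,x_0}(0) = C_{y_0,y_0}(0) = 1$ and $\deg(y_0) = 1 = \ell_0 - 1$, yet $\deg(x_0) = 0$. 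So the lemma's $n = 0$ clause for $\deg(x_0)$ is a slight overstatement. It is harmless here because the paper only ever invokes the second clause in the Rudin--Shapiro case (where $x_0 = y_0 = 1$ and $\ell_0 = 1$, so $\deg(x_0) = \ell_0 - 1$ trivially) and the binary-seed case is covered by the third clause; still, it would be cleaner for you to flag this explicitly rather than silently narrowing the claimed range of $n$.
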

\begin{proof}
These all follow easily by induction from the equations $\ell_n = 2^n\ell_0$, $x_n(z)=x_{n-1}(z)+z^{\ell_{n-1}} y_{n-1}(z)$, and $y_n(z)=x_{n-1}(z)-z^{\ell_{n-1}} y_{n-1}(z)$ of the Golay--Rudin--Shapiro recursion (\cref{Stan}).
\end{proof}
In view of \cref{David}, the following tells us how to relate the correlation values for the pair $(x_n,y_n)$ to those of the pair $(x_{n-1},y_{n-1})$.
\begin{lemma}\label{Matilda}
For any $n \geq 1$, we have
\begin{align*}
|x_{n}|^2 & =|x_{n-1}|^2 + |y_{n-1}|^2+ z^{- \ell_{n-1}}x_{n-1} \conj{y_{n-1}} +  z^{\ell_{n-1}} y_{n-1} \conj{x_{n-1}}, \\
|y_{n}|^2 & =|x_{n-1}|^2+|y_{n-1}|^2-z^{-\ell_{n-1}}x_{n-1}\conj{y_{n-1}}-z^{\ell_{n-1}}y_{n-1}\conj{x_{n-1}},\\
x_{n}\conj{y_{n}} & =|x_{n-1}|^2-|y_{n-1}|^2- z^{-\ell_{n-1}}x_{n-1}\conj{y_{n-1}}+z^{\ell_{n-1}}y_{n-1}\conj{x_{n-1}}, \text{ and}\\
y_{n}\conj{x_{n}} & =|x_{n-1}|^2-|y_{n-1}|^2+z^{-\ell_{n-1}}x_{n-1}\conj{y_{n-1}}-z^{\ell_{n-1}}y_{n-1}\conj{x_{n-1}}.
\end{align*}
\end{lemma}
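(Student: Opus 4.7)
The plan is to prove all four identities by direct expansion, using the recursive definitions $x_n = x_{n-1} + z^{\ell_{n-1}} y_{n-1}$ and $y_n = x_{n-1} - z^{\ell_{n-1}} y_{n-1}$ together with the conjugation convention from \cref{Benjamin}. The key preliminary observation is that, under the shorthand $\conj{a(z)} = \sum_j \conj{a_j} z^{-j}$, we have $\conj{z^{\ell_{n-1}}} = z^{-\ell_{n-1}}$, and conjugation distributes over sums and products of Laurent polynomials. Hence
\[
\conj{x_n} = \conj{x_{n-1}} + z^{-\ell_{n-1}} \conj{y_{n-1}}, \qquad \conj{y_n} = \conj{x_{n-1}} - z^{-\ell_{n-1}} \conj{y_{n-1}}.
\]

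With these formulas for $\conj{x_n}$ and $\conj{y_n}$ in hand, I would simply compute each of the four products $x_n \conj{x_n}$, $y_n \conj{y_n}$, $x_n \conj{y_n}$, and $y_n \conj{x_n}$ by multiplying out. For example, for $|x_n|^2 = x_n \conj{x_n}$, expanding $(x_{n-1} + z^{\ell_{n-1}} y_{n-1})(\conj{x_{n-1}} + z^{-\ell_{n-1}} \conj{y_{n-1}})$ gives the four cross terms $x_{n-1}\conj{x_{n-1}}$, $z^{-\ell_{n-1}} x_{n-1}\conj{y_{n-1}}$, $z^{\ell_{n-1}} y_{n-1}\conj{x_{n-1}}$, and $y_{n-1}\conj{y_{n-1}}$, in which the $z^{\pm \ell_{n-1}}$ exponents on the outer terms cancel. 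Rewriting $x_{n-1}\conj{x_{n-1}}$ and $y_{n-1}\conj{y_{n-1}}$ as $|x_{n-1}|^2$ and $|y_{n-1}|^2$ yields the claimed identity for $|x_n|^2$. The other three identities follow from the same expansion with appropriately flipped signs, depending on which of $x_n, y_n$ appears in each factor.

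There is essentially no obstacle here: the entire lemma is a mechanical consequence of the definitions and distributivity in $\C[z, z^{-1}]$. The only thing worth being slightly careful about is making sure the sign patterns are correct in the four cases, which is easy to track by noting that $y_n = x_{n-1} - z^{\ell_{n-1}} y_{n-1}$ differs from $x_n$ by a sign on the second summand, and correspondingly $\conj{y_n}$ differs from $\conj{x_n}$ by a sign on the second summand. Thus the cross terms $z^{\pm \ell_{n-1}} x_{n-1}\conj{y_{n-1}}$ and $z^{\pm \ell_{n-1}} y_{n-1}\conj{x_{n-1}}$ acquire $+$ signs when the two factors agree (as in $|x_n|^2$) and $-$ signs when they disagree in that second summand (as in $|y_n|^2$), while the diagonal terms $|x_{n-1}|^2 \pm |y_{n-1}|^2$ pick up a $-$ sign precisely when $x_n$ is paired with $\conj{y_n}$ or vice versa.
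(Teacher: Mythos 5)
Your proposal is correct and takes essentially the same approach as the paper, which simply observes that all four identities are direct consequences of the recursion $x_n = x_{n-1} + z^{\ell_{n-1}} y_{n-1}$, $y_n = x_{n-1} - z^{\ell_{n-1}} y_{n-1}$; you have merely spelled out the expansion that the paper leaves implicit.
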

\begin{proof}
All of these are direct consequences of $x_{n}=x_{n-1}+z^{\ell_{n-1}} y_{n-1}$ and $y_{n}=x_{n-1}-z^{\ell_{n-1}} y_{n-1}$ from \cref{Stan}.
\end{proof}

\begin{corollary}\label{Kelty}
For each $n \in \N$, the pair $(x_n,y_n)$ is a Golay complementary pair  with $|x_n|^2+|y_n|^2=2^n(|x_0|^2+|y_0|^2)$, which is a constant. 
\end{corollary}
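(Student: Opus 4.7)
The plan is to prove both claims simultaneously by a straightforward induction on $n$, with \cref{Matilda} doing essentially all the work.

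Define $h_n(z) = |x_n(z)|^2 + |y_n(z)|^2$, regarded as an element of $\C[z,z^{-1}]$. First I would add the first two equations of \cref{Matilda}: the four cross terms $\pm z^{-\ell_{n-1}} x_{n-1}\conj{y_{n-1}} \pm z^{\ell_{n-1}} y_{n-1}\conj{x_{n-1}}$ cancel in pairs, yielding the recursion $h_n = 2 h_{n-1}$ for every $n \geq 1$. A trivial induction on $n$ then gives $h_n = 2^n h_0$, which is the stated equation $|x_n|^2+|y_n|^2 = 2^n(|x_0|^2+|y_0|^2)$.

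Next I would verify that $h_0$ is a constant (i.e., a scalar element of $\C \subset \C[z,z^{-1}]$, with no nonzero power of $z$ appearing). By \cref{David}, $h_0(z) = \sum_{s \in \Z}\bigl(C_{x_0,x_0}(s) + C_{y_0,y_0}(s)\bigr) z^s$; the hypothesis that $(x_0, y_0)$ is a Golay complementary pair forces every coefficient with $s \neq 0$ to vanish, so $h_0 = C_{x_0,x_0}(0) + C_{y_0,y_0}(0) \in \C$. Since $h_n = 2^n h_0$, the Laurent polynomial $h_n$ is likewise a constant.

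Finally, applying \cref{David} to $h_n$ in the reverse direction, we get $\sum_{s\in\Z}\bigl(C_{x_n,x_n}(s) + C_{y_n,y_n}(s)\bigr) z^s = h_n$, which is a constant, so $C_{x_n,x_n}(s) + C_{y_n,y_n}(s) = 0$ for every nonzero $s \in \Z$. This is precisely the statement that $(x_n, y_n)$ is a Golay complementary pair, and together with the already-established identity $h_n = 2^n h_0$ this completes the proof.

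There is no real obstacle here; the only subtlety worth flagging is that the constancy of $h_n$ in the variable $z$ (not merely the scalar identity for the sum of norms at $z=1$) is exactly what encodes the Golay condition, so the two parts of the corollary are really two facets of the same computation.
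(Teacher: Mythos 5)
Your proposal is correct and is essentially the paper's own argument: the paper proves this by adding the first two equations of \cref{Matilda} to get $|x_n|^2+|y_n|^2 = 2(|x_{n-1}|^2+|y_{n-1}|^2)$ and then applying induction, exactly as you do. Your write-up is merely more explicit in unpacking why the constancy of the Laurent polynomial $|x_n|^2+|y_n|^2$ is equivalent (via \cref{David}) to the Golay complementary property, a detail the paper leaves implicit.
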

\begin{proof}
This follows by induction once one adds the first two equations of \cref{Matilda}.
\end{proof}
\begin{lemma}\label{Lamar}  For every $n \geq 1$, we have
\begin{align*}
|x_{n}|^2 & = 2^{n-1}(|x_0|^2+|y_0|^2)+z^{-\ell_{n-1}}x_{n-1}\conj{y_{n-1}}+z^{\ell_{n-1}}y_{n-1}\conj{x_{n-1}}, \text{ and}\\
|y_{n}|^2 & = 2^{n-1}(|x_0|^2+|y_0|^2)-z^{-\ell_{n-1}}x_{n-1}\conj{y_{n-1}}-z^{\ell_{n-1}}y_{n-1}\conj{x_{n-1}}.
\end{align*}
For every $n \geq 2$, we have 
\begin{align}
\begin{split}\label{Reginald}
x_{n}\conj{y_{n}} = z^{\ell_{n-1}}y_{n-1}\conj{x_{n-1}} & -z^{-\ell_{n-1}}x_{n-1}\conj{y_{n-1}} \\ & +2 z^{\ell_{n-2}}y_{n-2}\conj{x_{n-2}} + 2 z^{-\ell_{n-2}}x_{n-2}\conj{y_{n-2}}
\end{split}
\end{align} 
and 
\begin{align*}
\begin{split}
y_{n}\conj{x_{n}} =z^{-\ell_{n-1}}x_{n-1}\conj{y_{n-1}} & - z^{\ell_{n-1}}y_{n-1}\conj{x_{n-1}} \\
& + 2 z^{\ell_{n-2}}y_{n-2}\conj{x_{n-2}} + 2 z^{-\ell_{n-2}}x_{n-2}\conj{y_{n-2}}.
\end{split}
\end{align*}
\end{lemma}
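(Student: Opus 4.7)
The plan is to derive both sets of identities directly from \cref{Matilda} and \cref{Kelty}, the latter of which tells us that $|x_{n-1}|^2+|y_{n-1}|^2 = 2^{n-1}(|x_0|^2+|y_0|^2)$ for every $n \geq 1$.

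First I would handle the two identities for $|x_n|^2$ and $|y_n|^2$, which are valid for $n \geq 1$. Taking the first equation of \cref{Matilda},
\[
|x_n|^2 = |x_{n-1}|^2 + |y_{n-1}|^2 + z^{-\ell_{n-1}}x_{n-1}\conj{y_{n-1}} + z^{\ell_{n-1}}y_{n-1}\conj{x_{n-1}},
\]
and substituting $|x_{n-1}|^2+|y_{n-1}|^2 = 2^{n-1}(|x_0|^2+|y_0|^2)$ from \cref{Kelty} immediately yields the first claim; the second equation of \cref{Matilda} gives the analogous identity for $|y_n|^2$ in exactly the same way.

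Next I would handle the two identities for $x_n\conj{y_n}$ and $y_n\conj{x_n}$, valid for $n \geq 2$. The key observation is that one can now apply the just-proved identities for $|x_{n-1}|^2$ and $|y_{n-1}|^2$ (legal since $n-1 \geq 1$) to get
\[
|x_{n-1}|^2 - |y_{n-1}|^2 = 2 z^{-\ell_{n-2}} x_{n-2}\conj{y_{n-2}} + 2 z^{\ell_{n-2}} y_{n-2}\conj{x_{n-2}},
\]
the constant terms $2^{n-2}(|x_0|^2+|y_0|^2)$ canceling out. Plugging this expression into the third and fourth equations of \cref{Matilda} and collecting terms yields exactly the stated formulas for $x_n\conj{y_n}$ and $y_n\conj{x_n}$.

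There is no real obstacle here, since the proof is purely a matter of substituting one earlier-established identity into another. The one thing to be careful about is the hypothesis $n \geq 2$ in the second part: it is needed because the formulas for $|x_{n-1}|^2$ and $|y_{n-1}|^2$ derived in the first part require the index $n-1$ to be at least $1$, and because $\ell_{n-2}$ and $(x_{n-2},y_{n-2})$ must be defined. I would mention this explicitly in the write-up so that the chain of dependencies is transparent.
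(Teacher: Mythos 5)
Your proof is correct and takes essentially the same route as the paper's: the first two identities follow from \cref{Matilda} and \cref{Kelty}, and the last two follow by substituting for $|x_{n-1}|^2$ and $|y_{n-1}|^2$ in the third and fourth equations of \cref{Matilda}. The only cosmetic difference is that you substitute the just-proved Lamar identities (where the constant prefactors cancel in the difference), while the paper substitutes the first two Matilda equations directly (where the $|x_{n-2}|^2+|y_{n-2}|^2$ terms cancel); the algebra is identical.
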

\begin{proof}
The first two assertions follow immediately from \cref{Matilda} and \cref{Kelty}.
The last two follow from the third and fourth equations of \cref{Matilda} by replacing instances of $|x_{n-1}|^2$ and $|y_{n-1}|^2$ on the right-hand sides with the expressions for $|x_{n-1}|^2$ and $|y_{n-1}|^2$ furnished by the first two equations of \cref{Matilda} (when one substitutes $n-1$ for $n$).
\end{proof}
\begin{definition}
We call the recursion in \eqref{Reginald} the {\it fundamental crosscorrelation recursion}.
\end{definition}
The following result shows that understanding the peak sidelobe levels of Golay--Rudin--Shapiro sequences is equivalent to understanding their peak crosscorrelations.
In fact, we will find it more convenient to track the peak crosscorrelation and then deduce the equivalent result for peak sidelobe level.
\begin{lemma}\label{Kenneth}
For $n \in \Z^+$, we have $\PSL(x_n)=\PCC(x_{n-1},y_{n-1})$.
\end{lemma}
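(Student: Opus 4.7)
The plan is to extract coefficients directly from the first equation of \cref{Lamar}. By \cref{David}, $C_{x_n,x_n}(s)$ is the coefficient of $z^s$ in $|x_n|^2$. The constant term $2^{n-1}(|x_0|^2+|y_0|^2)$ is genuinely a constant Laurent polynomial (the Golay property for the seed pair forces $C_{x_0,x_0}(s)+C_{y_0,y_0}(s)=0$ for $s\neq 0$, cf.~\cref{Kelty}), so it contributes only at $s=0$. Therefore for every nonzero $s\in\Z$,
\[
C_{x_n,x_n}(s) \;=\; C_{x_{n-1},y_{n-1}}(s+\ell_{n-1}) \;+\; C_{y_{n-1},x_{n-1}}(s-\ell_{n-1}),
\]
obtained by extracting the $z^s$-coefficients of $z^{-\ell_{n-1}}x_{n-1}\conj{y_{n-1}}$ and $z^{\ell_{n-1}}y_{n-1}\conj{x_{n-1}}$ via \cref{David}.

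Next I would check that these two summands have disjoint supports in $s$. By \cref{Claudia}, both $x_{n-1}$ and $y_{n-1}$ are supported on $\{0,1,\ldots,\ell_{n-1}-1\}$, so $C_{x_{n-1},y_{n-1}}(t)=0$ whenever $|t|\geq \ell_{n-1}$. Hence $C_{x_{n-1},y_{n-1}}(s+\ell_{n-1})$ can be nonzero only for $s\in\{-2\ell_{n-1}+1,\ldots,-1\}$, while $C_{y_{n-1},x_{n-1}}(s-\ell_{n-1})$ can be nonzero only for $s\in\{1,\ldots,2\ell_{n-1}-1\}$. For nonzero $s$ at most one summand survives, and together the two ranges exhaust every $s\neq 0$ at which $C_{x_n,x_n}(s)$ can be nonzero.

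To finish, I would take absolute values, applying \cref{Bob} to rewrite $|C_{y_{n-1},x_{n-1}}(s-\ell_{n-1})|=|C_{x_{n-1},y_{n-1}}(\ell_{n-1}-s)|$. As $s$ ranges over the negative integers in the first window, $t=s+\ell_{n-1}$ sweeps $\{-(\ell_{n-1}-1),\ldots,\ell_{n-1}-1\}$; as $s$ ranges over the positive integers in the second window, $t=\ell_{n-1}-s$ again sweeps $\{-(\ell_{n-1}-1),\ldots,\ell_{n-1}-1\}$. In either case the achievable values of $|C_{x_n,x_n}(s)|$ for $s\neq 0$ are precisely the values of $|C_{x_{n-1},y_{n-1}}(t)|$ over all $t\in\Z$, yielding $\PSL(x_n)=\PCC(x_{n-1},y_{n-1})$. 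There is no substantial obstacle here; the only thing to be careful about is the bookkeeping that ensures the two shifted crosscorrelation terms never overlap for $s\neq 0$ and that every relevant $t$ is indeed hit, which is exactly the content of the support argument above.
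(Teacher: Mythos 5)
Your argument is correct and follows essentially the same route as the paper: read the coefficient of $z^s$ from the first identity in \cref{Lamar}, use \cref{Claudia} to separate the two cross-terms into disjoint ranges of $s$, and invoke \cref{Bob} (equivalently \cref{Brady}) to identify the two resulting suprema with $\PCC(x_{n-1},y_{n-1})$. The only difference is that you spell out the shift bookkeeping explicitly where the paper states it more briefly.
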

\begin{proof}
Recall from \cref{Lamar} that we have 
\[
|x_{n}|^2 = \ell_{n-1}(|x_0|^2+|y_0|^2)+z^{-\ell_{n-1}}x_{n-1}\conj{y_{n-1}}+z^{\ell_{n-1}}y_{n-1}\conj{x_{n-1}}.
\]
To get the correlation of $x_n$ with itself at shift $s$, we can read off the coefficient of $z^s$ in this. Since $\deg x_{n-1}, \deg y_{n-1} < \ell_{n-1}$ by \cref{Claudia}, the penultimate term furnishes only strictly negative powers of $z$ (whose nonzero coefficients are the precisely the collection of nonzero crosscorrelation values for $x_{n-1}$ with $y_{n-1}$ by \cref{David}), while the final term furnishes only strictly positive powers of $z$ (whose nonzero coefficients are precisely the collection of nonzero crosscorrelation values for $y_{n-1}$ with $x_{n-1}$ by \cref{David}), and note that $\ell_{n-1} (|x_0|^2 + |y_0|^2)$ is just a constant.
Therefore,
\[
\PSL(x_n) = \max \{\PCC(x_{n-1},y_{n-1}), \PCC(y_{n-1},x_{n-1})\}=\PCC(x_{n-1},y_{n-1}),
\]
where the second equality follows by \cref{Brady}.
\end{proof}

\begin{lemma}\label{Geoff} If $n \geq 2$, then
\[
C_{x_{n},y_{n}}(s)= \begin{cases}
\conj{C_{x_{n-1},y_{n-1}}(\ell_{n-1}-s) }+2\conj{C_{x_{n-2},y_{n-2}}(\ell_{n-2}-s)}& \text{if $s>0$,}\\
-C_{x_{n-1},y_{n-1}}(\ell_{n-1}+s) +2C_{x_{n-2},y_{n-2}}(\ell_{n-2}+s)& \text{if $s<0$.}
\end{cases}
\]
\end{lemma}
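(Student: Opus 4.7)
The plan is to read off the coefficient of $z^s$ on both sides of the fundamental crosscorrelation recursion \eqref{Reginald} from \cref{Lamar}, using \cref{David} to identify each such coefficient with an appropriate correlation value. By \cref{David}, the coefficient of $z^s$ in $x_n \conj{y_n}$ is precisely $C_{x_n, y_n}(s)$, so it suffices to understand the contribution of each of the four summands on the right-hand side of \eqref{Reginald} to the coefficient of $z^s$.

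First I would use \cref{Claudia} to pin down the support of each summand. Since $\deg x_{n-1}, \deg y_{n-1} < \ell_{n-1}$, the Laurent polynomial $y_{n-1}\conj{x_{n-1}}$ is supported on exponents in $\{-\ell_{n-1}+1, \ldots, \ell_{n-1}-1\}$, so $z^{\ell_{n-1}} y_{n-1}\conj{x_{n-1}}$ is supported on $\{1,\ldots, 2\ell_{n-1}-1\}$, i.e.\ only contributes for $s>0$. Similarly, $z^{-\ell_{n-1}} x_{n-1}\conj{y_{n-1}}$ is supported on $\{-2\ell_{n-1}+1,\ldots,-1\}$ and only contributes for $s<0$. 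Because $\ell_{n-1}=2\ell_{n-2}$, the same analysis shows that $z^{\ell_{n-2}} y_{n-2}\conj{x_{n-2}}$ contributes only for $s>0$ and $z^{-\ell_{n-2}} x_{n-2}\conj{y_{n-2}}$ contributes only for $s<0$. Thus for any given nonzero $s$, exactly two of the four summands in \eqref{Reginald} can contribute.

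Next, for $s>0$, I would extract the coefficient of $z^s$ in $z^{\ell_{n-1}} y_{n-1}\conj{x_{n-1}}$ as the coefficient of $z^{s-\ell_{n-1}}$ in $y_{n-1}\conj{x_{n-1}}$, which by \cref{David} equals $C_{y_{n-1},x_{n-1}}(s-\ell_{n-1})$, and then apply \cref{Bob} to rewrite this as $\conj{C_{x_{n-1},y_{n-1}}(\ell_{n-1}-s)}$. The same maneuver applied to $2 z^{\ell_{n-2}} y_{n-2}\conj{x_{n-2}}$ produces $2 \conj{C_{x_{n-2},y_{n-2}}(\ell_{n-2}-s)}$. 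Adding these gives the $s>0$ case. For $s<0$, I read off the coefficient of $z^s$ in $-z^{-\ell_{n-1}} x_{n-1}\conj{y_{n-1}}$ directly as $-C_{x_{n-1},y_{n-1}}(s+\ell_{n-1})$ and in $2 z^{-\ell_{n-2}} x_{n-2}\conj{y_{n-2}}$ as $2 C_{x_{n-2},y_{n-2}}(s+\ell_{n-2})$; no conjugation is needed here because the contributing summands are already of the form $x\conj{y}$ rather than $y\conj{x}$, which accounts for the asymmetry between the two cases in the statement.

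There is no real obstacle beyond bookkeeping. The most delicate point is tracking which of the two $(y,x)$--summands versus the two $(x,y)$--summands contributes for each sign of $s$, since this dictates whether a conjugate appears, and keeping the signs of the coefficients straight (the $-$ sign in \eqref{Reginald} in front of $z^{-\ell_{n-1}} x_{n-1}\conj{y_{n-1}}$ becomes the $-$ sign in the $s<0$ branch of the conclusion). Once the support analysis from \cref{Claudia} is in hand, the rest is a direct application of \cref{David} and \cref{Bob}.
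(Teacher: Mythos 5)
Your proof is correct and follows essentially the same approach as the paper: read off the coefficient of $z^s$ in \eqref{Reginald}, use the degree bounds from \cref{Claudia} to see that only the positive-exponent (resp.\ negative-exponent) summands contribute when $s>0$ (resp.\ $s<0$), and apply \cref{Bob} to convert $C_{y,x}$ to a conjugated $C_{x,y}$. Your version simply fills in the support bookkeeping more explicitly; the parenthetical appeal to $\ell_{n-1}=2\ell_{n-2}$ is unnecessary (the $(n-2)$-indexed terms are handled by the same direct application of \cref{Claudia}), but it does no harm.
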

\begin{proof}
Suppose that $s>0$. Read off the coefficient of $z^s$ in the fundamental crosscorrelation recursion \eqref{Reginald} given in \cref{Lamar}. The bounds on degree in \cref{Claudia} show that positive powers of $z$ occur only in the first and third terms of the right-hand side of \eqref{Reginald}, and then apply \cref{Bob} to obtain the first relation. On the other hand, if $s<0$, note that negative powers of $z$ occur only in the second and fourth terms on the right-hand side of \eqref{Reginald}.
\end{proof}
In \cite[Theorem 2.1]{Hoholdt-Jensen-Justesen} H\o holdt, Jensen, and Justesen prove that the autocorrelation of any Rudin-Shapiro sequence (or its companion sequence) at a nonzero even shift is equal to zero.
One has a similar result for the crosscorrelation between any Rudin-Shapiro sequence and its companion.
\begin{lemma}\label{Vladwick}
Suppose that $x_0=y_0=\ell_{0}=1$.
Then for every $n \in \N$, the Laurent polynomials $ |x_n|^2 $, $ |y_n|^2$, $x_n \conj{y_n}$, and $y_n \conj{x_n}$ have no terms of even degree, except that $|x_n|^2$ and $|y_n|^2$ have nonzero constant terms equal to $2^n$, and $x_0 \conj{y_0}$ and $y_0 \conj{x_0}$ have nonzero constant terms equal to $1$.
\end{lemma}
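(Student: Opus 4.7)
The plan is to argue by induction on $n$, using the four identities of \cref{Matilda} as the engine. It is convenient to reformulate the conclusion for $n \geq 1$ as the single statement that $|x_n|^2 - 2^n$, $|y_n|^2 - 2^n$, $x_n\conj{y_n}$, and $y_n\conj{x_n}$ are Laurent polynomials whose nonzero coefficients all sit at odd degrees; the original claim follows immediately from this together with the $n = 0$ case, which is just $x_0 = y_0 = 1$ yielding $|x_0|^2 = |y_0|^2 = x_0\conj{y_0} = y_0\conj{x_0} = 1$.

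The main obstacle, and the reason a plain one-step induction starting from $n = 0$ is delicate, is that $\ell_0 = 1$ is odd while $\ell_{n-1} = 2^{n-1}$ is even for every $n \geq 2$. Since multiplication of a Laurent polynomial by $z^m$ preserves the parity of the support when $m$ is even and reverses it when $m$ is odd, the passage from $n = 0$ to $n = 1$ converts the even-supported constants $x_0\conj{y_0} = y_0\conj{x_0} = 1$ into the odd-supported monomials $z^{\pm 1}$, whereas every subsequent step preserves the parity pattern we are tracking. For this reason I would treat $n = 1$ as a separate base case by direct computation with $x_1 = 1 + z$ and $y_1 = 1 - z$: this yields $|x_1|^2 = 2 + z + z^{-1}$, $|y_1|^2 = 2 - z - z^{-1}$, $x_1\conj{y_1} = z - z^{-1}$, and $y_1\conj{x_1} = z^{-1} - z$, which verifies the reformulated claim at $n = 1$.

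For the induction step, I fix $n \geq 2$ and assume the reformulated claim at $n - 1$. Because $\ell_{n-1} = 2^{n-1}$ is then even, multiplication by $z^{\pm \ell_{n-1}}$ preserves odd-support, so both $z^{-\ell_{n-1}} x_{n-1}\conj{y_{n-1}}$ and $z^{\ell_{n-1}} y_{n-1}\conj{x_{n-1}}$ are odd-supported by the inductive hypothesis. The inductive hypothesis also implies that $|x_{n-1}|^2 + |y_{n-1}|^2 - 2^n$ and $|x_{n-1}|^2 - |y_{n-1}|^2$ are both odd-supported (the constant terms $2^{n-1}$ either combine or cancel). Plugging these observations into the four identities of \cref{Matilda} shows that $|x_n|^2 - 2^n$, $|y_n|^2 - 2^n$, $x_n\conj{y_n}$, and $y_n\conj{x_n}$ are each a sum of odd-supported pieces, completing the induction and hence the lemma.
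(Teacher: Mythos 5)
Your proof is correct and follows essentially the same approach the paper sketches: verify $n=0$ and $n=1$ directly, then induct using the four identities from \cref{Matilda}. The paper's proof is a one-sentence sketch, so your write-up simply supplies the details (in particular the parity-of-$\ell_{n-1}$ observation that explains why $n=1$ is a separate base case) that the paper leaves to the reader.
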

\begin{proof}
One can check these statements directly for $n=0$ and $n=1$, and then prove the rest by induction using the formulae from \cref{Matilda}.
\end{proof}
\begin{remark}
If we begin with $\ell_0=1$ and $x_0=y_0=1$, then $x_n$ and $y_n$ are the $n$th Rudin--Shapiro sequence and its companion.
Then $x_1=1+z$ and $y_1=1-z$, so one can easily calculate that $C_{x_0,y_0}(0)=1$, $C_{x_1,y_1}(-1)=-1$, and $C_{x_1,y_1}(1)=1$.
Recall that $C_{x_n,y_n}(s)=0$ if $|s| \geq 2^n$ (because the sequences are of length $2^n$) and $C_{x_n,y_n}(s)=0$ when $n>0$ and $s$ is even (by \cref{Vladwick}).
From these facts, one can use \cref{Geoff} to compute any other value of $C_{x_n,y_n}(s)$.
On \cref{Destiny Table}, we display values of $C_{x_n,y_n}(s)$ for selected shifts $s$ for values of $n \leq 10$.
These particular values will be useful to us when we prove our bounds on peak crosscorrelation of Rudin--Shapiro sequences in \cref{Destiny}.
The table is constructed so that any value of $C_{x_n,y_n}(s)$ on the table with $n \geq 2$ can be obtained via \cref{Geoff} from other correlation values that are on the table (or values known from the facts stated earlier in this paragraph).
{\footnotesize
\begin{table}
\begin{center}
\caption{Crosscorrelation values $C_{n,s}=C_{x_n,y_n}(s)$ at selected shifts $s$ for Rudin--Shapiro sequences $x_n$ and their companions $y_n$}\label{Destiny Table}
\begin{tabular}{|c||rr|rr|rr|rr|rr|}
\hline
$n$ & $s$ & $C_{n,s}$ & $s$ & $C_{n,s}$ & $s$ & $C_{n,s}$ & $s$ & $C_{n,s}$ & $s$ & $C_{n,s}$ \\
\hline
\multirow{-1}{*}{$0$} & $0$ & $1$ & & & & & & & & \\
\rowcolor{tableshade}
\multirow{-1}{*}{$1$} & $-1$ & $-1$ & $1$ & $1$ & & & & & & \\
\multirow{-1}{*}{$2$} & $-3$ & $1$ & $-1$ & $1$ & $1$ & $3$ & $3$ & $-1$ & & \\
\rowcolor{tableshade}
& $-5$ & $-1$ & $-3$ & $-5$ & $-1$ & $3$ & $1$ & $1$ & $3$ & $1$ \\
\rowcolor{tableshade}
\multirow{-2}{*}{$3$} & $5$ & $1$ & & & & & & & & \\
& $-11$ & $5$ & $-7$ & $1$ & $-5$ & $1$ & $-3$ & $5$ & $3$ & $7$  \\
\multirow{-2}{*}{$4$} & $5$ & $3$ & $7$ & $3$ & $11$ & $-5$ & & & & \\
\rowcolor{tableshade}
& $-21$ & $-1$ & $-13$ & $-9$ & $-11$ & $-13$ & $-9$ & $3$ & $-5$ & $7$  \\
\rowcolor{tableshade}
\multirow{-2}{*}{$5$} & $5$ & $-3$ & $9$ & $9$ & $11$ & $-7$ & $13$ & $5$ & $21$ & $1$  \\
& $-43$ & $13$ & $-41$ & $-3$ & $-27$ & $13$ & $-23$ & $-7$ & $-21$ & $9$  \\
& $-11$ & $5$ & $11$ & $7$ & $19$ & $15$ & $21$ & $-5$ & $27$ & $7$  \\
\multirow{-3}{*}{$6$} & $41$ & $3$ & $43$ & $-13$ & & & & & & \\
\rowcolor{tableshade}
& $-85$ & $-9$ & $-53$ & $-9$ & $-45$ & $-33$ & $-43$ & $-21$ & $-23$ & $15$  \\
\rowcolor{tableshade}
& $-21$ & $-1$ & $21$ & $-27$ & $23$ & $21$ & $37$ & $21$ & $43$ & $-31$  \\
\rowcolor{tableshade}
\multirow{-3}{*}{$7$} & $53$ & $5$ & $85$ & $9$ & & & & & & \\
& $-107$ & $53$ & $-105$ & $-27$ & $-91$ & $5$ & $-85$ & $49$ & $-43$ & $-19$  \\
& $43$ & $-1$ & $75$ & $15$ & $85$ & $-13$ & $105$ & $15$ & $107$ & $-1$  \\
\multirow{-3}{*}{$8$} & $171$ & $-21$ & & & & & & & & \\
\rowcolor{tableshade}
& $-181$ & $-33$ & $-171$ & $-29$ & $85$ & $-83$ & $149$ & $-3$ & $151$ & $45$  \\
\rowcolor{tableshade}
\multirow{-2}{*}{$9$} & $171$ & $-55$ & $213$ & $-19$ & & & & & & \\
\multirow{-1}{*}{$10$} & $-363$ & $109$ & $-361$ & $-99$ & $-341$ & $153$ & $299$ & $-57$ & & \\
\hline
\end{tabular}
\end{center}
\end{table}
}
\end{remark}

\subsection{Correlation of seed pairs}
We record and prove some technical lemmata involving correlation values for the seed pair $(x_0,y_0)$ in \cref{Stan}.
These lemmata are used in \cref{Final Bounds} to obtain bounds on peak crosscorrelation of all Golay pairs produced according to \cref{Stan}.
\begin{lemma}\label{Demetrius} We have 	$\PCC(x_1,y_1)\leq  2 \PSL(x_0) + \PCC(x_0,y_0)$.
\end{lemma}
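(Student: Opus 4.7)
The plan is to read off $C_{x_1,y_1}(s)$ directly from the third equation of \cref{Matilda} with $n=1$, namely
\[
x_1\conj{y_1} \;=\; |x_0|^2 - |y_0|^2 - z^{-\ell_0} x_0\conj{y_0} + z^{\ell_0} y_0 \conj{x_0},
\]
and then bound each contribution to the coefficient of $z^s$. By \cref{David}, this coefficient is
\[
C_{x_1,y_1}(s) \;=\; C_{x_0,x_0}(s) - C_{y_0,y_0}(s) - C_{x_0,y_0}(s+\ell_0) + C_{y_0,x_0}(s-\ell_0).
\]
The first two terms collapse nicely using the hypothesis that $(x_0,y_0)$ is a Golay pair with $C_{x_0,x_0}(0)=C_{y_0,y_0}(0)$: at $s=0$ the difference vanishes, while for $s\neq 0$ we have $C_{y_0,y_0}(s)=-C_{x_0,x_0}(s)$, so $C_{x_0,x_0}(s)-C_{y_0,y_0}(s)=2C_{x_0,x_0}(s)$, whose magnitude is at most $2\PSL(x_0)$.

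The key observation that produces a single (rather than double) copy of $\PCC(x_0,y_0)$ in the bound is that the two crosscorrelation terms $C_{x_0,y_0}(s+\ell_0)$ and $C_{y_0,x_0}(s-\ell_0)$ have disjoint supports in $s$. Indeed, since $\deg x_0, \deg y_0 < \ell_0$ by \cref{Claudia}, we have $C_{x_0,y_0}(t)=0$ unless $|t|<\ell_0$, so $C_{x_0,y_0}(s+\ell_0)$ is nonzero only for $-2\ell_0 < s < 0$, whereas $C_{y_0,x_0}(s-\ell_0)$ is nonzero only for $0 < s < 2\ell_0$. For any particular $s \neq 0$ at most one of these terms contributes, and its magnitude is at most $\PCC(x_0,y_0)$ (using \cref{Brady} to identify $\PCC(y_0,x_0)$ with $\PCC(x_0,y_0)$); for $s=0$ both terms vanish outright. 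Combining these bounds for each $s$ yields $|C_{x_1,y_1}(s)| \leq 2\PSL(x_0) + \PCC(x_0,y_0)$, and taking the maximum over $s$ gives the claim.

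There is no real obstacle here; the only thing worth checking carefully is the disjointness of supports that allows us to avoid paying twice for the crosscorrelation term, and the handling of $s=0$, where the autocorrelation difference disappears by the $C_{x_0,x_0}(0)=C_{y_0,y_0}(0)$ hypothesis and both crosscorrelation shifts $\pm \ell_0$ lie outside the range where $C_{x_0,y_0}$ and $C_{y_0,x_0}$ can be nonzero.
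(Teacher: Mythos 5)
Your proof is correct and follows essentially the same route as the paper: read off the coefficient of $z^s$ from the third identity of \cref{Matilda} with $n=1$, note that the two crosscorrelation terms cannot both be nonzero because $\deg x_0,\deg y_0<\ell_0$, and bound the remainder. The only cosmetic difference is that you invoke the Golay complementarity $C_{y_0,y_0}(s)=-C_{x_0,x_0}(s)$ directly (and handle $s=0$ explicitly), whereas the paper reaches the same $2\PSL(x_0)$ bound via the triangle inequality and \cref{Sylvester}.
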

\begin{proof} Let $s \in \Z$.
By reading the coefficient of $z^s$ from the expression for $x_n \conj{y_n}$ in \cref{Matilda} (using $n=1$), we see that
\[
C_{x_1,y_1}(s)=C_{x_0,x_0}(s) - C_{y_0,y_0}(s) - C_{x_0,y_0}(s+\ell_0) + C_{y_0,x_0}(s-\ell_0),
\]
but since $x_0$ and $ y_0$ are polynomials of degree less than $\ell_0$, at most one of the last two terms can be nonzero.
Thus, using the triangle inequality, \cref{Brady}, and \cref{Sylvester}, we have $|C_{x_1,y_1}(s)| \leq 2 \PSL(x_0)+\PCC(x_0,y_0)$.
\end{proof}

\begin{lemma}\label{Brenda}
There is some integer $s$ with $|s| < \ell_0$ such that $C_{x_0,y_0}(s) \neq 0$. 
\end{lemma}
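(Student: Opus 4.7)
The plan is to observe that the statement is really an assertion about the nonvanishing of the Laurent polynomial $x_0 \conj{y_0}$, which is immediate because both factors are nonzero elements of an integral domain. First I would invoke the hypothesis from \cref{Stan} that $x_0$ and $y_0$ are nonzero polynomials in $\C[z]$. Passing to the ring $\C[z,z^{-1}]$ of Laurent polynomials, the element $\conj{y_0}$ (obtained by conjugating coefficients and replacing $z$ by $z^{-1}$) is nonzero precisely because $y_0$ is. Since $\C[z,z^{-1}]$ is an integral domain (it is a localization of the integral domain $\C[z]$), the product $x_0 \conj{y_0}$ is nonzero.

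Next I would apply \cref{David}, which gives the identity $x_0\conj{y_0}=\sum_{s \in \Z} C_{x_0,y_0}(s)\, z^s$. Since the left-hand side is a nonzero Laurent polynomial, at least one coefficient on the right-hand side must be nonzero, so there exists some $s \in \Z$ with $C_{x_0,y_0}(s)\neq 0$.

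Finally I would verify that any such $s$ automatically satisfies $|s|<\ell_0$. From the definition \eqref{Tatiana}, we have $C_{x_0,y_0}(s)=\sum_{j \in \Z}(x_0)_{j+s}\conj{(y_0)_j}$. Under our convention, $(x_0)_k$ vanishes outside $\{0,1,\ldots,\ell_0-1\}$, and similarly for $(y_0)_k$, because $\deg x_0,\deg y_0<\ell_0$. Thus a nonzero term in the defining sum for $C_{x_0,y_0}(s)$ requires some index $j$ with $0 \le j<\ell_0$ and $0 \le j+s<\ell_0$, which forces $-\ell_0<s<\ell_0$, i.e., $|s|<\ell_0$. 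Combining these observations produces the desired shift.

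There is no real obstacle to this argument; the only point worth being careful about is distinguishing between ``$x_0\conj{y_0}$ is nonzero as a Laurent polynomial'' (which follows purely from the integral-domain property and does not use the degree bound) and ``the nonzero coefficient occurs at a shift with $|s|<\ell_0$'' (which does use the degree bound). Splitting the proof into these two steps keeps the argument transparent.
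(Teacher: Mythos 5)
Your proof is correct. It takes a somewhat different route from the paper's: the paper explicitly writes out $x_0$ and $y_0$, picks the shift $s = i - j$ where $a_i z^i$ is the leading term of $x_0$ and $b_j z^j$ the lowest term of $y_0$, and directly observes that the coefficient $a_i\conj{b_j}$ of $z^s$ in $x_0\conj{y_0}$ is nonzero and that $|s|<\ell_0$; this pins down an explicit witness. You instead invoke the integral-domain property of $\C[z,z^{-1}]$ to conclude $x_0\conj{y_0}\neq 0$ abstractly, then use \cref{David} to extract a nonzero coefficient, and finally observe separately that \emph{any} shift supporting a nonzero correlation must satisfy $|s|<\ell_0$ by the degree bound. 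Your version is a little more conceptual and cleanly separates the existence step from the size estimate, at the cost of not exhibiting the witness; the paper's version is more constructive. Both are complete and correct, and the degree argument in your last step is sound: a nonzero summand requires $j,\,j+s\in\{0,\dots,\ell_0-1\}$, forcing $-\ell_0<s<\ell_0$.
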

\begin{proof}  Since $x_0$ and $y_0$ are not zero,
write $x_0=a_hz^h+a_{h+1} z^{h+1} + \cdots + a_i z^i$ and $y_0=b_jz^j+ b_{j+1} z^{j+1} + \cdots + b_k z^k$ such that $ a_i, b_j \neq 0$. Then set $s=i-j$ and note that $C_{x_0,y_0}(s)$ is the coefficient of $z^{i-j}$ in
\[
x_0 \conj{y_0} = (a_hz^h+a_{h+1} z^{h+1} + \cdots + a_i z^i)(\conj{ b_k}z^{-k}+ \conj{b_{k-1}} z^{-k+1} + \cdots + \conj{b_j} z^{-j}),
\]
which is $a_i \conj{b_j} \neq 0$ by our assumption. Note that $0 \leq i,j < \ell_{0}$, so that $|s| < \ell_0$.
\end{proof}

\begin{lemma}\label{Patty}
If  $\gamma \in \C$ with $|\gamma| \neq 1$, then there is some integer $s$ with $|s| < \ell_0$ such that $C_{x_0,y_0}(s) \neq \gamma \conj{C_{x_0,y_0}(-s)}$. 
\end{lemma}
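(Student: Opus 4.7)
The plan is to argue by contradiction: suppose that $C_{x_0,y_0}(s)=\gamma\conj{C_{x_0,y_0}(-s)}$ for every integer $s$ with $|s|<\ell_0$. Since $x_0$ and $y_0$ both have degree less than $\ell_0$, the crosscorrelation $C_{x_0,y_0}(s)$ vanishes for $|s|\geq \ell_0$, so this assumed identity actually holds for every $s\in\Z$.

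Next I would translate this identity into a statement about Laurent polynomials. By \cref{Bob}, $\conj{C_{x_0,y_0}(-s)}=C_{y_0,x_0}(s)$, so the assumption becomes $C_{x_0,y_0}(s)=\gamma\, C_{y_0,x_0}(s)$ for every $s\in\Z$. Comparing coefficients via \cref{David} yields the polynomial identity $x_0\conj{y_0}=\gamma\, y_0\conj{x_0}$ in $\C[z,z^{-1}]$.

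The key step is then to combine this identity with its own conjugate. Applying the involution $a(z)\mapsto\conj{a(z)}$ (which sends $z$ to $z^{-1}$ and conjugates coefficients) to $x_0\conj{y_0}=\gamma\, y_0\conj{x_0}$ produces $\conj{x_0}\,y_0=\conj{\gamma}\,\conj{y_0}\,x_0$, i.e., $y_0\conj{x_0}=\conj{\gamma}\,x_0\conj{y_0}$. Substituting this back gives
\[
x_0\conj{y_0}=\gamma\, y_0\conj{x_0}=\gamma\conj{\gamma}\,x_0\conj{y_0}=|\gamma|^2\, x_0\conj{y_0},
\]
so $(1-|\gamma|^2)\,x_0\conj{y_0}=0$ in $\C[z,z^{-1}]$. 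Since $|\gamma|\neq 1$, we conclude $x_0\conj{y_0}=0$.

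Finally, this is incompatible with \cref{Brenda}, which guarantees some integer $s$ with $|s|<\ell_0$ and $C_{x_0,y_0}(s)\neq 0$; by \cref{David}, that $s$ produces a nonzero coefficient of $x_0\conj{y_0}$. This contradiction establishes the lemma. The only subtle point is making sure the degree bounds extend the hypothesized equality to all $s\in\Z$ before converting it into the polynomial identity, since the conjugation trick requires a genuine identity of Laurent polynomials rather than merely a coefficient-wise relation on a finite window.
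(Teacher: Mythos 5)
Your proof is correct, and it converges on the same underlying computation as the paper's: apply the assumed relation twice (once at $s$, once at $-s$) to force $C=|\gamma|^2 C$, then invoke \cref{Brenda} to ensure this $C$ can be taken nonzero, giving $|\gamma|=1$. The difference is one of packaging. The paper works entirely at the level of a single scalar: it uses \cref{Brenda} up front to pick a $t$ with $|t|<\ell_0$ and $C_{x_0,y_0}(t)\neq 0$, then shows that the relation cannot hold at both $t$ and $-t$ simultaneously. You instead assume the relation holds for all $|s|<\ell_0$, extend it to all $s\in\Z$ via the degree bound, convert it to a Laurent-polynomial identity $x_0\conj{y_0}=\gamma\,y_0\conj{x_0}$ via \cref{David} and \cref{Bob}, and apply the ring involution $a(z)\mapsto\conj{a(z)}$ to deduce $(1-|\gamma|^2)\,x_0\conj{y_0}=0$, hence $x_0\conj{y_0}=0$, contradicting \cref{Brenda}. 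This global version is logically equivalent but carries a bit more machinery than needed; the paper's pointwise argument avoids the step of extending the hypothesis to all shifts and the passage to polynomial identities. Your version does have the small advantage of making transparent that the relation is fundamentally a constraint on the Laurent polynomial $x_0\conj{y_0}$ rather than on individual coefficients.
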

\begin{proof}
By \cref{Brenda}, we can choose an integer $t$ with $|t|<\ell_0$ such that $C_{x_0,y_0}(t) $ is nonzero. If $C_{x_0,y_0}(t)= \gamma \conj{C_{x_0,y_0}(-t)}$ and $C_{x_0,y_0}(-t)= \gamma \conj{C_{x_0,y_0}(-(-t))}$, then $C_{x_0,y_0}(t)= |\gamma |^2 C_{x_0,y_0}(t)$. This implies that $|\gamma|=1$, which is a contradiction.
\end{proof}

\subsection{Algebraic number theory}\label{Ollie}
The bounds that we prove in this paper (Theorems \ref{Gale}--\ref{Tommy}) use the unique real root, $\alpha_0$, of the polynomial $m(X)=X^3+X^2-2 X-4 \in \Q[X]$.
The rational roots theorem shows that this polynomial is irreducible in $\Q[X]$, and its discriminant is $-236$, so it has the single real root $\alpha_0$, and two conjugate non-real roots, which we call $\alpha_1$ and $\alpha_2$; $m(X)$ is the minimal polynomial over $\Q$ of its own three roots.
It is important for us to be able to perform arithmetic in the splitting field $K=\Q(\alpha_0,\alpha_1,\alpha_2)$ of this polynomial and to prove inequalities between pairs of real elements in this splitting field.
By the irreducibility of $m(X)$, we know that $[\Q(\alpha_0):\Q]=3$ and since $\Q(\alpha_0) \subseteq \R$, we know that $m(X)/(X-\alpha_0)$ does not factor in this field, so $[K:\Q(\alpha_0)]=2$, and so $[K:\Q]=6$.
Then $\{\alpha_0^i: 0 \leq i < 3\}$ is a $\Q$-basis of $\Q(\alpha_0)$ and $\{\alpha_0^i \alpha_1^j: 0 \leq i < 3, 0 \leq j < 2\}$ is a $\Q$-basis of $K$.
Furthermore $K\cap \R$ is a proper subfield of $K$ and includes $\Q(\alpha_0)$, but $[K:\Q(\alpha_0)]=2$, so we know that $K\cap\R=\Q(\alpha_0)$.
Thus, we want to know how to reduce arithmetic operations in $K=\Q(\alpha_0,\alpha_1,\alpha_2)$ and inequalities in $K\cap\R=\Q(\alpha_0)$ to arithmetic operations and inequalities that involve only rational numbers: in this way we can use computers to assist our calculations without making any rounding errors that would compromise the certainty of our claims.

The reduction of arithmetic in an algebraic number field to rational arithmetic is well known, but we give a brief summary of an algorithm for converting any rational expression composed of elements of $K$ into a $\Q$-linear combination of the $\Q$-basis $\{\alpha_0^i \alpha_1^j: 0 \leq i < 3, 0 \leq j < 2\}$ of $K$.
We begin with a procedure for terms expressed as $a(\alpha_0,\alpha_1,\alpha_2)$ where $a(X,Y,Z)$ is a polynomial in $\Q[X,Y,Z]$.
Since $m(X)=X^3+X^2-2 X-4=(X-\alpha_0)(X-\alpha_1)(X-\alpha_2)$, matching the quadratic and linear coefficients yields $\alpha_0+\alpha_1+\alpha_2=-1$ and $\alpha_0(\alpha_1+\alpha_2)+\alpha_1\alpha_2=-2$, so that $\alpha_1+\alpha_2=-\alpha_0-1$ and $\alpha_1\alpha_2=\alpha_0^2+\alpha_0-2$.
Thus, $\alpha_2=-\alpha_0-\alpha_1-1$ and $\alpha_1^2=\alpha_1(\alpha_1+\alpha_2)-\alpha_1\alpha_2=\alpha_1(-\alpha_0-1)-(\alpha_0^2+\alpha_0-2)$, and of course $\alpha_0^3=-\alpha_0^2+2\alpha_0+4$ since $\alpha_0$ is a root of $m(X)$.
So given any polynomial $a(X,Y,Z) \in \Q[X,Y,Z]$, we can (i) replace every $Z$ with $-X-Y-1$ to eliminate $Z$, then (ii) replace any term of the form $X^i Y^j$ in which $j \geq 2$ with $X^i (-X Y-Y-X^2-X+2)Y^{j-2}$ and keep doing so until the degree in $Y$ is less than $2$, and then (iii) replace every term $X^i Y^j$ in which $i \geq 3$ with $(-X^2+2 X+4) X^{i-3} Y^j$ and keep doing so until the degree in $X$ is less than $3$; we obtain a polynomial $b(X,Y)=\sum_{i=0}^{2} \sum_{j=0}^1 b_{i,j} X^i Y^j$ with each $b_{i,j} \in \Q$ and with $b(\alpha_0,\alpha_1)=a(\alpha_0,\alpha_1,\alpha_2)$.  This $b(X,Y)$ is called the {\it standard reduction} of $a(X,Y,Z)$.  Since $\{\alpha_0^i \alpha_1^j: 0 \leq i < 3, 0 \leq j < 2\}$ is a $\Q$-basis of $K$ and $\{\alpha_0^i: 0 \leq i < 3\}$ is a $\Q$-basis of $\Q(\alpha_0)=K\cap\R$, we see that $b(\alpha_0,\alpha_1) \in \R$ if and only if $Y$ does not appear in $b(X,Y)$.

In general, if we combine elements of $K$ using addition, subtraction, multiplication, and division, we obtain an element $k=c(\alpha_0,\alpha_1,\alpha_2)/d(\alpha_0,\alpha_1,\alpha_2)$ for some $c(X,Y,Z)$, $d(X,Y,Z) \in \Q[X,Y,Z]$ with $d(\alpha_0,\alpha_1,\alpha_2)\not=0$.
Since $\alpha_1$ and $\alpha_2$ are conjugates, we know that $d(\alpha_0,\alpha_1,\alpha_2) d(\alpha_0,\alpha_2,\alpha_1)$ is a positive real number.
So the standard reduction of $d(X,Y,Z) d(X,Z,Y)$ is some nonzero polynomial $e(X) \in \Q[X]$, and $k=c(\alpha_0,\alpha_1,\alpha_2) d(\alpha_0,\alpha_2,\alpha_1)/e(\alpha_0)$.
Since $e(X)$ is a nonzero polynomial of degree less than $3$ and $m(X)$ is cubic and irreducible over $\Q$, we have $\gcd(e(X),m(X))=1$, so the Euclidean algorithm over $\Q[X]$ furnishes $f(X),g(X) \in \Q[X]$ such that $e(X) f(X)+m(X) g(X)=1$, so that $e(\alpha_0) f(\alpha_0)=1$.
Therefore, if we let $h(X,Y) \in \Q[X,Y]$ be the standard reduction of $c(X,Y,Z) d(X,Z,Y) f(X)$, then $k=h(\alpha_0,\alpha_1)$.
Thus, we have shown how to reduce any expression which might appear in our algebraic manipulations within $K=\Q(\alpha_0,\alpha_1,\alpha_2)$ to a $\Q$-linear combination of the $\Q$-basis $\{\alpha_0^i \alpha_1^j: 0 \leq i < 3, 0 \leq j < 2\}$ of $K$.

To deduce inequalities involving elements of $K\cap\R=\Q(\alpha_0)$, we compute the difference between two elements of $\Q(\alpha_0)$ to produce an expression of the form $v=p + q \alpha_0 + r \alpha_0 ^2$, where $p,q,r \in \Q$, and then determine whether this difference is greater than, equal to, or less than $0$.
We define a function that is the key to a practical algorithm for comparing $v$ with $0$.
\begin{definition}[signifier]\label{Abby}
Let $v = p + q \alpha_0 +r \alpha_0 ^2$ where $p,q,r \in \Q$.
We define the {\it signifier of $v$}, written $\sig(v)$, to be
\[
\sig(v) = p^3-p^2q-2pq^2+4q^3+5p^2r-10pqr-4q^2r+12pr^2-8qr^2+16r^3.
\]
\end{definition}
Notice that the signifier is a rational number computed using solely rational arithmetic.
In \cref{James} below it is proved that the quantity $v=p + q \alpha_0 +r \alpha_0 ^2$ is zero if and only if $\sig(v)$ is zero, and if $v$ and $\sig(v)$ are not zero, then they have the same sign.  First, we prove a preliminary result.
\begin{lemma}\label{Priscilla}
Suppose that $v=p+q \alpha_0+r \alpha_0^2$ for some $p,q,r, \in \Q$ with at least one of $q$ or $r$ nonzero.
Then the minimal polynomial of $v$ over $\Q$ is the cubic polynomial $n(X)=X^3+s X^2+t X+ u$ with
\begin{align*}
s & = -3p + q - 5r \\
t & = 3p^2 - 2pq - 2q^2 + 10pr - 10qr + 12r^2 \\
u & = -p^3+p^2q+2pq^2-4q^3-5p^2r+10pqr+4q^2r-12pr^2+8qr^2-16r^3,
\end{align*}
and the splitting field of $n(X)$ over $\Q$ is $K$, the splitting field of $m(X)=X^3+X^2-2 X-4$ over $\Q$, so that $n(X)$ has one real root and two conjugate non-real roots.
\end{lemma}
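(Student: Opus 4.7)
The plan is to first confirm that $v$'s minimal polynomial over $\Q$ is cubic, then identify the three roots as Galois conjugates, then read off the coefficients as elementary symmetric functions and simplify to the stated expressions, and finally argue that the splitting field coincides with $K$. Since $\{1,\alpha_0,\alpha_0^2\}$ is a $\Q$-basis of $\Q(\alpha_0)$ (see \cref{Ollie}) and at least one of $q,r$ is nonzero, the element $v=p+q\alpha_0+r\alpha_0^2$ lies in $\Q(\alpha_0)\setminus\Q$. Since $[\Q(\alpha_0):\Q]=3$ is prime, multiplicativity of degrees forces $[\Q(v):\Q]=3$, so the minimal polynomial of $v$ over $\Q$ is cubic.

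I would next identify its three roots. The three $\Q$-algebra embeddings $\Q(\alpha_0)\to K$ send $\alpha_0$ to $\alpha_0$, $\alpha_1$, and $\alpha_2$ respectively, so the Galois conjugates of $v$ are $v_i:=p+q\alpha_i+r\alpha_i^2$ for $i\in\{0,1,2\}$, and therefore $n(X)=(X-v_0)(X-v_1)(X-v_2)$. To obtain the stated formulas for $s,t,u$, I would expand $\sum_i v_i$, $\sum_{i<j} v_iv_j$, and $v_0v_1v_2$ as symmetric polynomials in $\alpha_0,\alpha_1,\alpha_2$ and then evaluate them using the elementary symmetric values $e_1=-1$, $e_2=-2$, $e_3=4$ (read off from $m(X)=X^3+X^2-2X-4$) together with Newton's identities to handle the power sums $\alpha_0^j+\alpha_1^j+\alpha_2^j$ for $j\in\{2,3,4\}$. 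This is the bulk of the calculation but presents no conceptual obstacle; matching the coefficients of $X^2$, $X$, and $1$ in $(X-v_0)(X-v_1)(X-v_2)$ yields the claimed expressions for $s$, $t$, and $u$.

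The main step that requires genuine thought is the splitting field claim, and this will be my main obstacle. Because $\alpha_0$ is real while $\alpha_1,\alpha_2$ are non-real complex conjugates, and since $v_i$ depends polynomially on $\alpha_i$ with real coefficients, $v_0$ is real and $v_1,v_2$ are complex conjugates of each other. The key thing to verify is that $v_1\neq v_2$, equivalently $v_1\notin\R$; otherwise $q(\alpha_1-\alpha_2)+r(\alpha_1^2-\alpha_2^2)=0$, hence $q+r(\alpha_1+\alpha_2)=0$, and using $\alpha_1+\alpha_2=-1-\alpha_0$ this becomes $(q-r)-r\alpha_0=0$, which by $\Q$-linear independence of $\{1,\alpha_0\}$ forces $q=r=0$, contradicting the hypothesis. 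Hence $n(X)$ has exactly one real root and two non-real complex conjugate roots. Its splitting field $L:=\Q(v_0,v_1,v_2)$ is contained in $K$, contains $\Q(v_0)=\Q(\alpha_0)=K\cap\R$ (the first equality because both fields have degree $3$ over $\Q$), and also contains the non-real element $v_1$, so $L\not\subseteq\R$. Since $[K:K\cap\R]=2$, there is no field strictly between $K\cap\R$ and $K$, so $L=K$.
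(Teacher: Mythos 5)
Your proposal is correct and takes a genuinely different route from the paper's for the core identity. The paper \emph{verifies} that $n(v)=0$ by exhibiting a factorization: it sets $w(Y)=p+qY+rY^2$ and checks (by an opaque but easily machine-verified polynomial identity) that $n(w(Y))$ is divisible by $Y^3+Y^2-2Y-4$, so $n(w(\alpha_0))=n(v)=0$. You instead \emph{construct} $n(X)$ as $\prod_{i}(X-v_i)$ where $v_i=p+q\alpha_i+r\alpha_i^2$ are the images of $v$ under the three embeddings $\Q(\alpha_0)\hookrightarrow K$, and then extract $s,t,u$ as elementary symmetric functions via Newton's identities. Your approach is more transparent about where the coefficient formulas come from, at the cost of a somewhat longer symbolic computation; the paper's approach makes the verification mechanical and compact once the formulas are written down. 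For the splitting field, both arguments reduce to showing that $L$, the splitting field of $n$, properly contains $\Q(\alpha_0)$ and sits inside $K$, then invoking $[K:\Q(\alpha_0)]=2$. The paper gets $L\supsetneq\Q(\alpha_0)$ by noting $\Q(\alpha_0)$ is not normal over $\Q$ while $L$ is; you get it by directly checking that $v_1\notin\R$ (via the $\Q$-linear independence of $\{1,\alpha_0\}$), so $v_1\notin K\cap\R=\Q(\alpha_0)$. Both are correct; yours is slightly more concrete and also delivers the ``one real root, two non-real conjugates'' conclusion immediately, rather than as a final observation. One minor note: your explicit check that $v_1\neq v_2$ is logically redundant, since $n(X)$, being irreducible over $\Q$ in characteristic zero, is automatically separable; but the explicit check is what actually shows $v_1$ is non-real, so it is well placed.
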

\begin{proof}
Since $[\Q(\alpha_0):\Q]=3$, the set $\{1,\alpha_0,\alpha_0^2\}$ is a $\Q$-basis of $\Q(\alpha_0)$, and since at least one of $q$ or $r$ is nonzero, this means $v\in\Q(\alpha_0)\smallsetminus \Q$.
Since $[\Q(\alpha_0):\Q]=3$, this forces $v$ to be degree $3$ over $\Q$, and since $\Q(v)\subseteq \Q(\alpha_0)$, this forces $\Q(v)=\Q(\alpha_0)$.
We let $Y$ be an indeterminate and set $w(Y)=p+q Y+r Y^2 \in \Q[Y]$, and then one can check that the polynomial $n(w(Y))\in \Q[Y]$ can be factored as
$Y^3+Y^2-2 Y-4$
times
$Y^3 r^3 + 3 Y^2 q r^2 - Y^2 r^3 + 3 Y q^2 r - 2 Y q r^2 - 2 Y r^3 + q^3 - q^2 r - 2 q r^2 + 4 r^3$.  So then $n(w(\alpha_0))=0$ because $\alpha_0$ is a root of $Y^3+Y^2-2 Y-4$.
But $w(\alpha_0)=v$,
so $v$ is of degree $3$ over $\Q$ and satisfies the monic cubic polynomial $n(X)$. Therefore, $n(X)$ is the minimal polynomial of $v$ over $\Q$.

Let $L$ be the splitting field of $n(X)$ over $\Q$.
Note that $v \in \Q(\alpha_0) \subseteq K$, so one root of $n(X)$ lies in $K$.
Since $K$ is the splitting field of $m(X)$ over $\Q$, this means $K$ is normal over $\Q$, and therefore all roots of $n(X)$ must lie in $K$.
So $L \subseteq K$.
Clearly $\Q(\alpha_0)=\Q(v) \subseteq L$.
But $\Q(\alpha_0)=\Q(v)$ is not normal over $\Q$, because it contains the real root $\alpha_0$ of $m(X)$ but does not contain the other two non-real roots of $m(X)$.
So $L$ must be strictly larger than $\Q(\alpha_0)=\Q(v)$.
Since $[K:\Q(\alpha_0)]=2$, this forces $L=K$.
Since $n(X) \in \Q[x]$ and $L\not\subseteq \R$, one root of $n(X)$ is real and the other two are non-real conjugates.
\end{proof}
\begin{proposition}\label{James} 
Let $v\in \Q (\alpha_0)$. Then $v=0$ if and only if $\sig(v)=0$, $v>0$ if and only if $\sig(v)>0$, and $v<0$ if and only if $\sig(v)<0$.
\end{proposition}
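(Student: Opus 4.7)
The plan is to combine \cref{Priscilla} with Vieta's formulas. First I would write $v=p+q\alpha_0+r\alpha_0^2$ with $p,q,r\in\Q$ and split into two cases depending on whether or not $q=r=0$.

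In the degenerate case $q=r=0$, we have $v=p\in\Q$ and the formula in \cref{Abby} collapses to $\sig(v)=p^3$. Since $p^3$ has the same sign as $p$ (with both vanishing simultaneously), the conclusion is immediate.

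In the main case, at least one of $q,r$ is nonzero, so \cref{Priscilla} applies and furnishes the minimal polynomial $n(X)=X^3+sX^2+tX+u$ of $v$ over $\Q$. A direct term-by-term comparison of the expression for $u$ in \cref{Priscilla} with the definition of $\sig(v)$ in \cref{Abby} shows that $u=-\sig(v)$. \Cref{Priscilla} also tells us that the three roots of $n(X)$ lie in $K$ and consist of the single real root $v\in\Q(\alpha_0)$ together with a pair of conjugate non-real roots $\beta,\conj{\beta}$. By Vieta's formulas, the product of these three roots is $-u$, so
\[
v\cdot\beta\cdot\conj{\beta}=-u=\sig(v).
\]
Since $\beta$ is non-real (and in particular nonzero), $\beta\conj{\beta}=|\beta|^2$ is a strictly positive real number, so $\sig(v)$ and $v$ have identical sign and vanish together. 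This yields all three equivalences.

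The only real content is the algebraic identification $u=-\sig(v)$, which is routine coefficient matching; all the structural work has already been done in \cref{Priscilla}. Accordingly, I do not expect any genuine obstacle — the proposition is essentially a corollary of \cref{Priscilla} plus the elementary observation that the product of a complex number with its conjugate is a nonnegative real.
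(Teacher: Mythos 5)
Your proof is correct and follows the same overall structure as the paper's: handle the rational case $q=r=0$ directly via $\sig(v)=p^3$, then apply \cref{Priscilla} to get the minimal polynomial $n(X)=X^3+sX^2+tX+u$ with exactly one real root and two non-real conjugate roots, and identify $u=-\sig(v)$ by coefficient matching. The only difference is in the final sign-comparison step: the paper reasons geometrically, noting that the graph of the monic cubic $Y=n(X)$ crosses the $X$-axis once, so a negative real root forces a positive $Y$-intercept $u$ (and vice versa); you instead invoke Vieta's formula to get $v\cdot\beta\conj{\beta}=-u=\sig(v)$ and use $\beta\conj{\beta}=|\beta|^2>0$. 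These are two proofs of the same elementary fact about a real monic cubic with a unique real root, so the arguments are essentially equivalent; if anything, the Vieta route is slightly more algebraic and avoids any appeal to continuity or the intermediate value theorem. No gaps.
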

\begin{proof}
Write  $v = p + q \alpha_0 +r \alpha_0 ^2$ with $p,q,r \in \Q$.
Because $\{1,\alpha_0,\alpha_0^2\}$ is a $\Q$-basis of $\Q(\alpha_0)$, we know that $v \in \Q$ if and only if both $q$ and $r$ are zero.
In this case, $\sig(v)=p^3$, and the claim is clear. 
	
So henceforth, we assume $v \not\in \Q$.  
Let $n(X)=X^3+sX^2+tX+u$ be the minimal polynomial over $\Q$ of $v \in \Q(\alpha_0) \smallsetminus \Q$, which has only one real root by \cref{Priscilla}.
So the graph of the function $Y=n(X)$ crosses the $X$-axis precisely once.
Furthermore, since $n(X)$ is monic and of degree $3$, if its graph crosses the $X$-axis at a negative $X$-value (i.e., the real root $v$ is negative), then its graph crosses the $Y$-axis at a positive value (i.e., the constant coefficient $u$ is positive).
Similarly, if the $v$ is positive, then $u$ is negative.
Compare $u$ in \cref{Priscilla} with $\sig(v)$ in \cref{Abby} to see that $-u=\sig(v)$, so $\sig(v)$ has the same the sign as $v$.
\end{proof}
We can use the signifier to produce rational approximations of real elements in $K=\Q(\alpha_0,\alpha_1,\alpha_2)$ (i.e., elements of $\Q(\alpha_0)$) by comparing with rational numbers.
For example, we learn that $\alpha_0= 1.658967\ldots$. We also use the signifier along with the reduction procedures detailed earlier in this section to see that $\alpha_0$ is the root of $m(X)=X^3+X^2-2 X-4$ with largest magnitude.
\begin{lemma}\label{Gilda}
We have $|\alpha_1/\alpha_0|=|\alpha_2/\alpha_0|=\sqrt{(\alpha_0^2-1)/2}=0.935994\ldots$.
\end{lemma}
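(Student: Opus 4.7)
The plan is to reduce the computation to Vieta's formulas on $m(X)=X^3+X^2-2X-4$, together with the fact that $\alpha_1$ and $\alpha_2$ are complex conjugates. Since $m(X)=(X-\alpha_0)(X-\alpha_1)(X-\alpha_2)$, matching the constant term gives $\alpha_0 \alpha_1 \alpha_2 = 4$. The conjugacy of $\alpha_1$ and $\alpha_2$ immediately yields $|\alpha_1|=|\alpha_2|$, and moreover $|\alpha_1|^2 = \alpha_1 \conj{\alpha_1} = \alpha_1 \alpha_2$, so
\[
|\alpha_1|^2 = |\alpha_2|^2 = \frac{4}{\alpha_0}.
\]
Dividing by $\alpha_0^2$ then gives $|\alpha_1/\alpha_0|^2 = |\alpha_2/\alpha_0|^2 = 4/\alpha_0^3$.

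Next I would show that $4/\alpha_0^3 = (\alpha_0^2-1)/2$. Equivalently, one must verify that $(\alpha_0^2-1)(-\alpha_0^2 + 2\alpha_0 + 4) = 8$. Since $m(\alpha_0)=0$ gives $\alpha_0^3 = -\alpha_0^2+2\alpha_0+4$ and hence also $\alpha_0^4 = 3\alpha_0^2+2\alpha_0-4$, expanding the left-hand side and repeatedly reducing powers of $\alpha_0$ of degree at least $3$ modulo $m(X)$ collapses the expression to the constant $8$. Alternatively, this is a routine application of the standard reduction and signifier machinery of \cref{Ollie}: the difference $4 - (\alpha_0^2-1)\alpha_0^3/2$ is an element of $\Q(\alpha_0)$ whose standard reduction is $0$, which can be checked by rational arithmetic alone.

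Taking positive square roots then yields $|\alpha_1/\alpha_0| = |\alpha_2/\alpha_0| = \sqrt{(\alpha_0^2-1)/2}$. The decimal approximation $0.935994\ldots$ follows from the known value $\alpha_0 = 1.658967\ldots$, and can be certified by applying \cref{James} to compare $(\alpha_0^2-1)/2$ with suitable rational bracketing values, thereby avoiding any floating-point rounding error.

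There is no real obstacle here: the only substantive step is the polynomial identity $4 = (\alpha_0^2-1)\alpha_0^3/2$ in $\Q(\alpha_0)$, which is a routine reduction modulo the minimal polynomial of $\alpha_0$. Everything else is a direct consequence of $\alpha_1 \alpha_2 = 4/\alpha_0$ (Vieta) and $|\alpha_1|^2 = \alpha_1 \alpha_2$ (conjugacy).
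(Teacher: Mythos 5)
Your proposal is correct and follows essentially the same route as the paper: both exploit conjugacy of $\alpha_1,\alpha_2$ to write $|\alpha_1/\alpha_0|^2=\alpha_1\alpha_2/\alpha_0^2$, apply a Vieta formula for $m(X)$ (the paper reads $\alpha_1\alpha_2=\alpha_0^2+\alpha_0-2$ off the linear coefficient as set up in \cref{Ollie}; you use the constant coefficient $\alpha_0\alpha_1\alpha_2=4$), reduce modulo $m(X)$ to identify the result with $(\alpha_0^2-1)/2$, and invoke the signifier machinery for the decimal bracketing. The two Vieta choices are interchangeable and yield the same reduction in $\Q(\alpha_0)$, so this is the same proof up to cosmetic differences.
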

\begin{proof}
Since $\alpha_1$ and $\alpha_2$ are complex conjugates, we know that $|\alpha_1/\alpha_0|^2=|\alpha_2/\alpha_0|^2=\alpha_1\alpha_2/\alpha_0^2$, which can be shown to be equal to $(\alpha_0^2-1)/2$ by the reduction methods outlined above.
One then uses the signifier methods to see that $(\alpha_0^2-1)/2$ lies strictly between the rational numbers $0.935994^2$ and $0.935995^2$.
\end{proof}
One consequence of this is that no nonzero power of a root of $m(X)$ can be rational.
\begin{corollary}\label{Ulysses}
If $n$ is a nonzero integer, and $j \in \{0,1,2\}$, then $\alpha_j^n$ is irrational.
\end{corollary}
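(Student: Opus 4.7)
The plan is to argue by contradiction, combining the transitivity of the Galois action on the roots of $m(X)$ with the magnitude comparison established in \cref{Gilda}. Suppose, for the sake of contradiction, that $\alpha_j^n \in \Q$ for some nonzero integer $n$ and some $j \in \{0,1,2\}$.

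First, I would invoke the fact (already noted in \cref{Ollie}) that $m(X) = X^3 + X^2 - 2X - 4$ is irreducible over $\Q$ and that $K$ is its splitting field. Consequently, $\mathrm{Gal}(K/\Q)$ acts transitively on the root set $\{\alpha_0,\alpha_1,\alpha_2\}$: for any $i \in \{0,1,2\}$ there exists $\sigma \in \mathrm{Gal}(K/\Q)$ with $\sigma(\alpha_j) = \alpha_i$. Since every element of the Galois group fixes $\Q$ pointwise, the assumption $\alpha_j^n \in \Q$ gives $\alpha_i^n = \sigma(\alpha_j^n) = \alpha_j^n$ for every $i$, and in particular $\alpha_0^n = \alpha_1^n$.

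Next, I would take absolute values. Note that $\alpha_0,\alpha_1\neq 0$ because $m(0)=-4\neq 0$, so $\alpha_j^n$ is well-defined and nonzero for every nonzero integer $n$. From $\alpha_0^n = \alpha_1^n$ I get $|\alpha_0|^n = |\alpha_1|^n$ with both bases positive and $n\neq 0$, which forces $|\alpha_0| = |\alpha_1|$. But \cref{Gilda} yields $|\alpha_1/\alpha_0| = \sqrt{(\alpha_0^2-1)/2} = 0.935994\ldots$, which is strictly less than $1$ (as can be verified by the signifier-based comparison procedure from \cref{Ollie}), producing the desired contradiction.

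There is no real obstacle here beyond invoking the two inputs cleanly: the Galois-theoretic fact that the three roots are conjugate under $\mathrm{Gal}(K/\Q)$ (so that any rational power of one forces the same power of all three to coincide), and the strict inequality $|\alpha_1|<|\alpha_0|$ from \cref{Gilda}. The argument covers negative exponents automatically, since at no point is $n>0$ needed.
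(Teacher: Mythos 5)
Your proposal is correct and follows essentially the same route as the paper's own proof: apply the transitivity of the Galois action on $\{\alpha_0,\alpha_1,\alpha_2\}$ to deduce $\alpha_0^n=\alpha_1^n$, conclude $|\alpha_0|=|\alpha_1|$, and contradict \cref{Gilda}. The extra bookkeeping you add (nonvanishing of the roots, the remark that negative $n$ is covered automatically) is harmless but not a departure from the paper's argument.
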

\begin{proof}
Suppose for a contradiction that there were some nonzero integer $n$, some $j \in \{0,1,2\}$, and some $q \in \Q$ such that $\alpha_j^n=q$.
Then since $\alpha_0,\alpha_1,\alpha_2$ are Galois conjugates over $\Q$, we can apply automorphisms of $K$ over $\Q$ to show that $\alpha_k^n=q$ for every $k \in \{0,1,2\}$.
In particular, this shows that $\alpha_0$ and $\alpha_1$ must have the same absolute value, which contradicts \cref{Gilda}.
\end{proof}

\section{Bounds from iterating the fundamental crosscorrelation recursion}\label{Iterating}

In this section, we iterate the fundamental crosscorrelation recursion \eqref{Reginald} from \cref{Lamar} multiple times to obtain formulae expressing correlation values for Golay--Rudin--Shapiro sequences in terms of correlation values of sequences appearing several steps earlier in the recursion (see \cref{Bjorn}).
We then use the iterated recursive relations to deduce bounds on peak crosscorrelation of Golay pairs (see \cref{Ingrid}).
These bounds are almost enough to prove via induction the main results of this paper, but there is a small interval of shifts where these bounds are not strong enough to achieve our desired bounds (see \cref{Velma}, which is proved in \cref{Hildegard}).
We develop different bounds later in \cref{Shifts} to fill this gap.
Recall that throughout this paper $\ell_n$, $x_n$, $y_n$ are always as in \cref{Stan}: $\ell_0$ is a positive integer and $x_0$ and $y_0$ are nonzero polynomials in $\C[z]$ of degree less than $\ell_0$ such that $(x_0,y_0)$ is a Golay complementary pair with $C_{x_0,x_0}(0)=C_{y_0,y_0}(0)$, and for every $n \in\N$, we have $\ell_n=2^n \ell_0$ and $(x_n,y_n)$ is the $n$th Golay pair obtained from $\ell_0$ and the seed pair $(x_0,y_0)$ via the Golay--Rudin--Shapiro recursion: $x_n(z)=x_{n-1}(z)+z^{\ell_{n-1}} y_{n-1}(z)$ and $y_n(z)=x_{n-1}(z)-z^{\ell_{n-1}} y_{n-1}(z)$.

\subsection{Iterating the fundamental crosscorrelation recursion}\label{Bjorn}

For $t \in \Z^+$, we recursively define Laurent polynomials $A_t(z)=\sum_{j \in \Z} A_{t,j} z^j$, $B_t(z)=\sum_{j \in \Z} B_{t,j} z^j$, $\Gamma_t(z)=\sum_{j \in \Z} \Gamma_{t,j} z^j$, and $\Delta_t(z)=\sum_{j \in \Z} \Delta_{t,j} z^j$ by
\begin{equation}\label{Gordon}
\begin{aligned}
& A_1(z) = -z^{-1}        && \qquad A_{t+1}(z) =-A_t(z^2)+ B_t(z^2) +z \Gamma_t(z^2), \\
& B_1(z) =  1            && \qquad B_{t+1}(z) =z A_t(z^2)-z B_t(z^2) +\Delta_t(z^2), \\
& \Gamma_1(z) = 2 z^{-1}  && \qquad \Gamma_{t+1}(z) =2 A_t(z^2)+ 2 B_t(z^2), \\
& \Delta_1(z)=2          && \qquad \Delta_{t+1}(z) =2 z A_t(z^2)+ 2 z B_t(z^2).
\end{aligned}
\end{equation}
We use these Laurent polynomials to express crosscorrelations for Golay--Rudin--Shapiro sequences in terms of crosscorrelations of  sequences appearing several steps earlier in the recursion.
Recall from \cref{Benjamin} that because there is a strong connection between sequences and Laurent polynomials on the complex unit circle, if $ a(z)=\sum_{j\in \Z}a_jz^j $, then we always use $\conj{a(z)}$ as a shorthand for $\sum_{j\in \Z}\conj{a_j}z^{-j}$.
\begin{proposition}\label{Brent}
If $t$ is a positive integer and $n > t$, then
\begin{multline*}
x_n\conj{y_n} = A_t(z^{\ell_{n-t+1}}) z^{2\ell_{n-t-1}} x_{n-t} \conj{y_{n-t}}
+ B_t(z^{\ell _{n-t+1}}) z^{2\ell_{n-t-1}} y_{n-t} \conj{x_{n-t}} \\
+ \Gamma _t(z^{\ell _{n-t+1}}) z^{3\ell_{n-t-1}} x_{n-t-1} \conj{y_{n-t-1}}
+ \Delta_t(z^{\ell _{n-t+1}}) z^{\ell_{n-t-1}} y_{n-t-1}\conj{x_{n-t-1}}.
\end{multline*}
\end{proposition}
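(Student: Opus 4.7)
My plan is to prove the formula by induction on $t$, with the base case $t=1$ being a direct verification and the inductive step being an application of the fundamental crosscorrelation recursion at index $n-t$.

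For the base case, when $t=1$ and $n>1$, I would substitute the initial values $A_1(z)=-z^{-1}$, $B_1(z)=1$, $\Gamma_1(z)=2z^{-1}$, $\Delta_1(z)=2$ into the claimed formula and simplify the resulting $z$-exponents using $2\ell_{n-2}=\ell_{n-1}$ and $\ell_n-2\ell_{n-2}=\ell_{n-1}$ and $\ell_n-3\ell_{n-2}=\ell_{n-2}$. The result collapses to exactly the fundamental crosscorrelation recursion \eqref{Reginald} from \cref{Lamar}, so the $t=1$ case just restates that identity.

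For the inductive step, assume the formula holds at stage $t$ for all $n>t$, and consider $n>t+1$ (so in particular $n-t\geq 2$, which is what is needed to apply \cref{Lamar}). Writing $\zeta=z^{\ell_{n-t+1}}$ for brevity, the inductive hypothesis expresses $x_n\conj{y_n}$ as a $\Q$-linear combination of $x_{n-t}\conj{y_{n-t}}$, $y_{n-t}\conj{x_{n-t}}$, $x_{n-t-1}\conj{y_{n-t-1}}$, and $y_{n-t-1}\conj{x_{n-t-1}}$ with coefficients $A_t(\zeta)z^{2\ell_{n-t-1}}$, $B_t(\zeta)z^{2\ell_{n-t-1}}$, $\Gamma_t(\zeta)z^{3\ell_{n-t-1}}$, and $\Delta_t(\zeta)z^{\ell_{n-t-1}}$. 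I would then substitute the two identities from \cref{Lamar} (the one for $x_n\conj{y_n}$ and its companion for $y_n\conj{x_n}$) applied at index $n-t$ to expand $x_{n-t}\conj{y_{n-t}}$ and $y_{n-t}\conj{x_{n-t}}$ in terms of the four correlation Laurent polynomials at indices $n-t-1$ and $n-t-2$, while leaving the existing $\Gamma_t$ and $\Delta_t$ terms alone. Collecting coefficients of $x_{n-t-1}\conj{y_{n-t-1}}$, $y_{n-t-1}\conj{x_{n-t-1}}$, $x_{n-t-2}\conj{y_{n-t-2}}$, and $y_{n-t-2}\conj{x_{n-t-2}}$ separately, and repeatedly applying $\ell_k=2\ell_{k-1}$ (in particular $2\ell_{n-t-1}=\ell_{n-t}$, $\ell_{n-t}+\ell_{n-t-1}=3\ell_{n-t-1}$, and $\ell_{n-t}+\ell_{n-t-2}=5\ell_{n-t-2}=2\ell_{n-t-1}+\ell_{n-t-2}$), these four coefficients become respectively $[-A_t(\zeta)+B_t(\zeta)+z^{\ell_{n-t}}\Gamma_t(\zeta)]z^{\ell_{n-t-1}}$, $[z^{\ell_{n-t}}A_t(\zeta)-z^{\ell_{n-t}}B_t(\zeta)+\Delta_t(\zeta)]z^{\ell_{n-t-1}}$, $[2A_t(\zeta)+2B_t(\zeta)]z^{3\ell_{n-t-2}}$, and $[2z^{\ell_{n-t}}A_t(\zeta)+2z^{\ell_{n-t}}B_t(\zeta)]z^{\ell_{n-t-2}}$. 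Since $\zeta=(z^{\ell_{n-t}})^2$, comparing with the recursion \eqref{Gordon} shows these are precisely $A_{t+1}(z^{\ell_{n-t}})z^{2\ell_{n-t-2}}$, $B_{t+1}(z^{\ell_{n-t}})z^{2\ell_{n-t-2}}$, $\Gamma_{t+1}(z^{\ell_{n-t}})z^{3\ell_{n-t-2}}$, and $\Delta_{t+1}(z^{\ell_{n-t}})z^{\ell_{n-t-2}}$, which is the claim at stage $t+1$.

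The proof is essentially bookkeeping; the substantive content is simply that the definitions \eqref{Gordon} were engineered to make the inductive step close up. The only real obstacle is keeping the exponents of $z$ straight, and the key simplification is that each application of \eqref{Reginald} doubles the relevant $\ell_k$, which is exactly matched by the substitution $z\mapsto z^2$ in the recursive definitions of $A_{t+1}$, $B_{t+1}$, $\Gamma_{t+1}$, $\Delta_{t+1}$.
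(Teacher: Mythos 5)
Your proposal is correct and follows essentially the same route as the paper's proof: induction on $t$, with the base case $t=1$ collapsing to the fundamental crosscorrelation recursion \eqref{Reginald} and the inductive step substituting that recursion at index $n-t$ into the coefficients of $x_{n-t}\conj{y_{n-t}}$ and $y_{n-t}\conj{x_{n-t}}$, then matching the collected exponents against the defining recursion \eqref{Gordon}. The only cosmetic difference is that the paper introduces the shorthand $w=z^{\ell_{n-t-2}}$ (so $w^8=\zeta$, $w^4=z^{2\ell_{n-t-1}}$, etc.) to streamline the exponent bookkeeping, whereas you track the $\ell_k$'s directly.
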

\begin{proof}
Proceed by induction on $t$.
For $t=1$ the values of $A_1$, $B_1$, $\Gamma_1$, and $\Delta_1$ from \eqref{Gordon} make the equation we need to prove identical to \eqref{Reginald} from \cref{Lamar}.

Now assume that the desired identity holds for some $t \geq 1$ and let $w=z^{\ell_{n-t-2}}$, so we have
\begin{align}
\begin{split}\label{Penelope}
x_n\conj{y_n} = A_t(w^8) w^4 & x_{n-t} \conj{y_{n-t}}
+ B_t(w^8) w^4 y_{n-t} \conj{x_{n-t}} \\
& + \Gamma _t(w^8) w^6 x_{n-t-1} \conj{y_{n-t-1}}
+ \Delta_t(w^8) w^2 y_{n-t-1}\conj{ x_{n-t-1} },
\end{split}
\end{align}
and \eqref{Reginald} with $n-t$ in place of $n$ becomes
\begin{align}
\begin{split}\label{Oswald}
x_{n-t}\conj{y_{n-t}}  = w^2 y_{n-t-1} \conj{x_{n-t-1}} & -w^{-2} x_{n-t-1}\conj{y_{n-t-1}} \\
& +2 (w y_{n-t-2}\conj{x_{n-t-2}} +  w^{-1} x_{n-t-2}\conj{y_{n-t-2}}).
\end{split}
\end{align}
Then we use \eqref{Oswald} to replace $x_{n-t} \conj{y_{n-t}}$ and its conjugate on the right-hand side of \eqref{Penelope} and rearrange to obtain
\begin{align*}
x_n\conj{y_n}
= & \left(-A_t(w^8)+ B_t(w^8) +w^4\Gamma_t(w^8)\right) w^2 x_{n-t-1} \conj{y_{n-t-1}}\\
& \quad + \left(w^4 A_t(w^8)-w^4 B_t(w^8) +\Delta_t(w^8)\right) w^2 y_{n-t-1} \conj{x_{n-t-1}} \\
& \quad + \left(2 A_t(w^8)+ 2 B_t(w^8)\right) w^3 x_{n-t-2} \conj{y_{n-t-2}} \\
& \quad + \left(2 w^4 A_t(w^8)+ 2 w^4 B_t(w^8)\right) w y_{n-t-2} \conj{x_{n-t-2}},
\end{align*}
and then use the recursion in \eqref{Gordon} to obtain the desired identity for $t+1$.
\end{proof}
\begin{corollary}\label{Justine}
Let $s \in \Z$ and $n, t\in \N$ with $0 < t < n$ and write $s=q \ell_{n-t+1}  +r$ with $q, r \in \Z$ and $0 \leq r < \ell_{n-t+1}$.
For any $m \in \N$ and $u \in \Z$, we define $C_{m,u}$ to be the coefficient of $z^u$ in $x_m \conj{y_m}$.
Then
\begin{align*}
C_{n,s}=A_{t,q} C_{n-t,r-\ell_{n-t}} & + B_{t,q} \conj{C_{n-t,-r+\ell_{n-t}}}\\
& + \Gamma_{t,q} C_{n-t-1,r-3 \ell_{n-t-1}}  + \Delta_{t,q}\conj{C_{n-t-1,-r+\ell_{n-t-1}}},
\end{align*}
\end{corollary}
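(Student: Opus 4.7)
The plan is to start from the identity for $x_n\conj{y_n}$ furnished by \cref{Brent} and extract the coefficient of $z^s$ on both sides, using the degree bounds of \cref{Claudia} to ensure that each summand contributes at most one coefficient for a fixed shift. Recalling from \cref{David} that the coefficient of $z^u$ in $x_m\conj{y_m}$ is $C_{m,u}$ and that by \cref{Bob} the coefficient of $z^u$ in $y_m\conj{x_m}$ is $\conj{C_{m,-u}}$, I would expand the four terms of \cref{Brent} as
\[
A_t(z^{\ell_{n-t+1}})=\sum_{j\in\Z} A_{t,j}\, z^{j \ell_{n-t+1}}, \qquad x_{n-t}\conj{y_{n-t}}=\sum_{v\in\Z} C_{n-t,v}\, z^v,
\]
and similarly for $B_t,\Gamma_t,\Delta_t$ and the three remaining correlation factors.

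Next, I would carry out the bookkeeping for each of the four products. Since $\ell_{n-t+1}=2\ell_{n-t}=4\ell_{n-t-1}$, the shift offsets inside the four products are $2\ell_{n-t-1}=\ell_{n-t+1}/2$ (in the first and second terms), $3\ell_{n-t-1}=3\ell_{n-t+1}/4$ (in the third), and $\ell_{n-t-1}=\ell_{n-t+1}/4$ (in the fourth). Using \cref{Claudia}, the nonzero coefficients $C_{n-t,v}$ occur only for $|v|<\ell_{n-t}$, and similarly $|v|<\ell_{n-t-1}$ for $C_{n-t-1,v}$. Consequently, in the first term, the exponent $j\ell_{n-t+1}+2\ell_{n-t-1}+v$ satisfies $2\ell_{n-t-1}+v\in\{1,\ldots,\ell_{n-t+1}-1\}$, so for a given $s=q\ell_{n-t+1}+r$ with $0\le r<\ell_{n-t+1}$, the only way to obtain $z^s$ is $j=q$ and $v=r-2\ell_{n-t-1}=r-\ell_{n-t}$; analogously the second term forces $j=q$ and coefficient $\conj{C_{n-t,-r+\ell_{n-t}}}$, the third forces $j=q$, $v=r-3\ell_{n-t-1}$, and the fourth forces $j=q$, $v=r-\ell_{n-t-1}$ (with the complex conjugate appearing because of the $y_{n-t-1}\conj{x_{n-t-1}}$ factor). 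Summing these four contributions gives precisely the claimed formula.

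The proof is essentially bookkeeping, so I do not expect any real obstacle; the only point that deserves a word of care is verifying that for each of the four summands the exponents that can contribute to the coefficient of $z^s$ lie in a single residue class modulo $\ell_{n-t+1}$ (so that $j$ is uniquely determined by $q$), which is exactly what the degree bounds of \cref{Claudia} yield after noting the identities $2\ell_{n-t-1}=\ell_{n-t}$ and $\ell_{n-t+1}=2\ell_{n-t}$. Once this is observed, the identification of the coefficient is immediate from the expansions above.
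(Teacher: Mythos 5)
Your proposal is correct and follows essentially the same route as the paper: start from the identity of \cref{Brent}, use the degree bounds of \cref{Claudia} to confine each of the four shifted factors $z^{a\ell_{n-t-1}} x_m\conj{y_m}$ (or $y_m\conj{x_m}$) to exponents strictly between $0$ and $\ell_{n-t+1}$, conclude that the $A_t,B_t,\Gamma_t,\Delta_t$ coefficient involved must be the $q$th one, and read off the remaining coefficients via \cref{David} and \cref{Bob}. Your extra observation that $\ell_{n-t+1}=2\ell_{n-t}=4\ell_{n-t-1}$ is exactly the bookkeeping the paper uses implicitly, and the treatment of the two conjugate terms matches as well.
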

\begin{proof}
We read coefficients from the formula in \cref{Brent}.
First observe that $x_m \conj{y_m}$ (and its conjugate) can have nonzero coefficients for $z^u$ only if $|u| < \ell_m$.
Thus, the four terms $z^{2\ell_{n-t-1}} x_{n-t} \conj{y_{n-t}}$, $z^{2\ell_{n-t-1}} y_{n-t} \conj{x_{n-t}}$, $z^{3 \ell_{n-t-1}} x_{n-t-1} \conj{y_{n-t-1}}$, and $z^{\ell_{n-t-1}} y_{n-t-1} \conj{x_{n-t-1}}$ can have nonzero coefficients for $z^u$ only if $0 < u < \ell_{n-t+1}$.
On the other hand, $A_t(z^{\ell_{n-t+1}})$, $B_t(z^{\ell_{n-t+1}})$, $\Gamma_t(z^{\ell_{n-t+1}})$, and $\Delta_t(z^{\ell_{n-t+1}})$ have nonzero coefficients only for terms $z^u$ where $u$ is a multiple of $\ell_{n-t+1}$.
So the only way to get a term with $z^s=z^{q \ell_{n-t+1}+r}$ on the right-hand side of the formula from \cref{Brent} is to multiply the $q$th coefficient of $A_t$ (resp., $B_t$, $\Gamma_t$, $\Delta_t$) with the coefficient of $z^r$ in $z^{2\ell_{n-t-1}} x_{n-t} \conj{y_{n-t}}$ (resp., $z^{2\ell_{n-t-1}} y_{n-t} \conj{x_{n-t}}$, $z^{3 \ell_{n-t-1}} x_{n-t-1} \conj{y_{n-t-1}}$, $z^{\ell_{n-t-1}} y_{n-t-1} \conj{x_{n-t-1}}$), that is, with the coefficient of $z^{r-2\ell_{n-t-1}}$ (resp., the conjugate of the coefficient of $z^{-r+2\ell_{n-t-1}}$, the coefficient of $z^{r-3\ell_{n-t-1}}$, the conjugate of the coefficient of $z^{-r+\ell_{n-t-1}}$) in $x_{n-t} \conj{y_{n-t}}$ (resp., $x_{n-t} \conj{y_{n-t}}$, $x_{n-t-1} \conj{y_{n-t-1}}$, $x_{n-t-1} \conj{y_{n-t-1}}$), which is just $C_{n-t,r-\ell_{n-t}}$ (resp., $\conj{C_{n-t,-r+\ell_{n-t}}}$, $C_{n-t-1,r-3\ell_{n-t-1}}$, $\conj{C_{n-t-1,-r+\ell_{n-t-1}}}$).
\end{proof}

\begin{remark}
We need to compute some values of $A_{t,j}$, $B_{t,j}$, $\Gamma_{t,j}$, and $\Delta_{t,j}$ that will be critical in later proofs, and we record these values in Table \ref{Charles}.
For $t \in \Z_+$ and $j \in \Z$, the initial conditions from \eqref{Gordon} tell us that
\begin{equation}\label{Helen}
\begin{tabular}{ll}
$A_{1,j}=
\begin{cases}
-1 & \text{if $j=-1$,} \\
0 & \text{otherwise,}
\end{cases}$
&
$B_{1,j}=
\begin{cases}
1 & \text{if $j=0$,} \\
0 & \text{otherwise,}
\end{cases}$
\\
\\
$\Gamma_{1,j}=
\begin{cases}
2 & \text{if $j=-1$,} \\
0 & \text{otherwise,}  
\end{cases}$
&
$\Delta_{1,j}=
\begin{cases}
2 & \text{if $j=0$,} \\
0 & \text{otherwise.}
\end{cases}$
\end{tabular}
\end{equation}
The recursion in \eqref{Gordon} tells us that for every $t > 0$, we have
\begin{align}
\begin{split}\label{Ursula}
A_{t+1,j}= & \begin{cases}
-A_{t,j/2}+B_{t,j/2} & \text{if $j$ is even,} \\
\Gamma_{t, (j-1)/2} & \text{if $j$ is odd,}
\end{cases}\\
B_{t+1,j}= & \begin{cases}
\Delta_{t,j/2} & \text{if $j$ is even,} \\
A_{t,(j-1)/2}-B_{t,(j-1)/2} & \text{if $j$ is odd,}
\end{cases}\\
\Gamma_{t+1,j}= & \begin{cases}
2A_{t,j/2}+2B_{t,j/2}& \text{if $j$ is even,} \\
0 & \text{if $j$ is odd,}
\end{cases}\\
\Delta_{t+1,j}= & \begin{cases}
0& \text{if $j$ is even,} \\
2A_{t,(j-1)/2}+2B_{t,(j-1)/2} & \text{ if $j$ is odd.}
\end{cases}
\end{split}
\end{align}
The values of $A_{t,j}$, $B_{t,j}$, $\Gamma_{t,j}$, or $\Delta_{t,j}$ on \cref{Charles} with index $t=1$ come from \eqref{Helen}, and those with index $t>1$ can be computed via \eqref{Ursula} from entries with index $t-1$ that appear earlier on the table.
{\footnotesize
\begin{table}
\begin{center}  
\caption{Selected Values of $A_{t,j}$, $B_{t,j}$, $\Gamma_{t,j}$, and $\Delta_{t,j}$}\label{Charles}
\begin{tabular}{|c||rrrrr|rrrrr|}
\hline
$t$ & $j$ & $A_{t,j}$ & $B_{t,j}$ & $\Gamma_{t,j}$ & $\Delta_{t,j}$ & $j$ & $A_{t,j}$ & $B_{t,j}$ & $\Gamma_{t,j}$ & $\Delta_{t,j}$ \\
\hline
\multirow{-1}{*}{1} & -1 & -1 & 0 & 2 & 0 & 0 & 0 & 1 & 0 & 2 \\ 
\rowcolor{tableshade} 
\multirow{-1}{*}{2} & -1 & 2 & -1 & 0 & -2 & 0 & 1 & 2 & 2 & 0 \\ 
& -2 & -3 & -2 & 2 & 0 & -1 & 0 & 3 & 0 & 2 \\ 
\multirow{-2}{*}{3} & 0 & 1 & 0 & 6 & 0 & 1 & 2 & -1 & 0 & 6 \\ 
\rowcolor{tableshade} 
& -4 & 1 & 0 & -10 & 0 & -3 & 2 & -1 & 0 & -10 \\ 
\rowcolor{tableshade} 
\multirow{-2}{*}{4} & 1 & 6 & 1 & 0 & 2 & 2 & -3 & 6 & 2 & 0 \\ 
& -6 & -3 & -10 & 2 & 0 & 2 & -5 & 2 & 14 & 0 \\ 
\multirow{-2}{*}{5} & 4 & 9 & 0 & 6 & 0 & 5 & 2 & -9 & 0 & 6 \\ 
\rowcolor{tableshade} 
& -12 & -7 & 0 & -26 & 0 & -11 & 2 & 7 & 0 & -26 \\ 
\rowcolor{tableshade} 
& 5 & 14 & -7 & 0 & -6 & 9 & 6 & 9 & 0 & 18 \\ 
\rowcolor{tableshade} 
\multirow{-3}{*}{6} & 10 & -11 & 6 & -14 & 0  & & & & & \\ 
& -23 & -26 & -7 & 0 & -14 & -22 & 5 & -26 & 18 & 0 \\ 
& 10 & -21 & -6 & 14 & 0 & 11 & 0 & 21 & 0 & 14 \\ 
\multirow{-3}{*}{7} & 18 & 3 & 18 & 30 & 0 & 21 & -14 & -17 & 0 & -10 \\ 
\rowcolor{tableshade} 
& -46 & 19 & -14 & -66 & 0 & -43 & 18 & 31 & 0 & -42 \\ 
\rowcolor{tableshade} 
& 21 & 14 & -15 & 0 & -54 & 37 & 30 & -15 & 0 & 42 \\ 
\rowcolor{tableshade} 
\multirow{-3}{*}{8} & 42 & -3 & -10 & -62 & 0  & & & & & \\ 
& -91 & -66 & 33 & 0 & 10 & -86 & 13 & -42 & 98 & 0 \\ 
\multirow{-2}{*}{9} & 42 & -29 & -54 & -2 & 0 & 74 & -45 & 42 & 30 & 0 \\ 
\rowcolor{tableshade} 
& -182 & 99 & 10 & -66 & 0 & -181 & 0 & -99 & 0 & -66 \\ 
\rowcolor{tableshade} 
\multirow{-2}{*}{10} & -171 & 98 & 55 & 0 & -58 & 149 & 30 & -87 & 0 & -6 \\
\hline
\end{tabular}
\end{center}
\end{table}
}
\end{remark}

We can use \cref{Justine} to rapidly compute crosscorrelation values.
For $n,t \in \N$ with $0 < t < n$, we compute the crosscorrelation spectrum of the pair $(x_n,y_n)$ from (i) the crosscorrelation spectrum of the pair $(x_{n-t},y_{n-t})$, (ii) the crosscorrelation spectrum of the pair $(x_{n-t-1},y_{n-t-1})$, and (iii) the values of $A_{t,q}, B_{t,q}, \Gamma_{t,q}, \Delta_{t,q}$ for $-2^{t-1} \leq q < 2^{t-1}$ (a simple inductive argument from \eqref{Helen} and \eqref{Ursula} shows that $A_{t,q}=B_{t,q}=\Gamma_{t,q}=\Delta_{t,q}=0$ for all other values of $q$).
The number of operations it takes to compute the full crosscorrelation spectrum of $(x_n,y_n)$ in this manner is a small multiple of $\ell_n=2^n\ell_0$, and the number of memory locations needed to store (i) and (ii) is a small multiple of $\ell_{n-t}=2^{n-t}\ell_0$, while the memory to store (iii) is a small multiple of of $2^t$.
Thus, if $x_n$ and $y_n$ are the Rudin--Shapiro sequences, and if we keep $t$ around $n/2$, then the number of memory locations needed to compute the crosscorrelation spectrum for $(x_n,y_n)$ in this manner is a small multiple of $2^{n/2}$, and the number of arithmetical operations needed is only a small multiple of the length $\ell_n=2^n$.
Since each crosscorrelation value can be computed in turn, this technique can be used to compute the peak crosscorrelation values for Golay--Rudin--Shapiro sequences that are too large to fit in the computer's memory.

We used such techniques to compute the peak crosscorrelation for the Rudin--Shapiro sequence pairs $(x_n,y_n)$ of length $2^n$ for $0 \leq n \leq 50$.
The results are in \cref{Aeneas}: for each $n$ it lists every shift $s$ that yields maximum magnitude crosscorrelation, that is, every $s$ such that $|C_{x_n,y_n}(s)|=\PCC(x_n,y_n)$, and for each such $s$, it shows the crosscorrelation value $C_{x_n,y_n}(s)$, so that one also knows the sign.
Once we know about the crosscorrelation of the Rudin--Shapiro pair $(x_n,y_n)$, then we use the first equation in \cref{Lamar} (replacing $n$ with $n+1$) to see that for $s>0$ we have $C_{x_{n+1},x_{n+1}}(s)=C_{y_n,x_n}(s-\ell_n)=C_{x_n,y_n}(2^n-s)$; since $C_{x_{n+1},x_{n+1}}(-s)=\conj{C_{x_{n+1},x_{n+1}}(s)}$ by \cref{Bob}, it is redundant to determine the autocorrelation at negative shifts.
Proceeding in this way, we determine which positive shifts $s$ yield maximum autocorrelation $C_{x_{n+1},x_{n+1}}(s)$, i.e., which $s$ with $s>0$ have $|C_{x_{n+1},x_{n+1}}(s)|=\PSL(x_{n+1},x_{n+1})$.
For convenience, these shifts and their corresponding autocorrelation values are shown on \cref{Dido}.
It is redundant to indicate the autocorrelation values for the companion sequence, $y_{n+1}$, since Rudin--Shapiro pairs are Golay pairs.
{\footnotesize \begin{table}\caption{Crosscorrelation values $C_{n,s}=C_{x_n,y_n}(s)$ for Rudin--Shapiro pairs, $(x_n,y_n)$, at all shifts $s$ such that $|C_{n,s}|=\PCC(x_n,y_n)$}\label{Aeneas}
\begin{center}
\begin{tabular}{|c|rr|rr||c|rr|}
\hline
$n$ & $s$ & $C_{n,s}$ & $s$ & $C_{n,s}$ & $n$ & $s$ & $C_{n,s}$ \\
\hline
0 & 0 & 1 & &                         & 26 & -22369613 & 342769 \\                       
\rowcolor{tableshade}
1 & -1 & -1 & 1 & 1                   & 27 & -44739243 & -640933 \\                     
2 & 1 & 3 & &                         & 28 & -89478451 & 860709 \\                      
\rowcolor{tableshade}
3 & -3 & -5 & &                       & 29 & -178956971 & -1624877 \\                   
4 & 3 & 7 & &                         & 30 & -357913941 & 2490985 \\                    
\rowcolor{tableshade}
5 & -11 & -13 & &                     & 31 & -715827885 & -4188609 \\                   
6 & 13 & 19 & &                       & 32 & -1431655765 & 7618449 \\                   
\rowcolor{tableshade}
7 & -45 & -33 & &                     & 33 & -2863311539 & -10688117 \\                 
8 & -107 & 53 & &                     & 34 & -5726623061 & 20617465 \\                  
\rowcolor{tableshade}
9 & -179 & -85 & &                    & 35 & -11453246123 & -29999429 \\                
10 & -341 & 153 & &                   & 36 & -22906492245 & 51521697 \\                 
\rowcolor{tableshade}
11 & -717 & -217 & &                  & 37 & -45812984491 & -90947021 \\                
12 & -1451 & 373 & &                  & 38 & -91625968979 & 133991557 \\                
\rowcolor{tableshade}
13 & -2867 & -557 & &                 & 39 & -183251937963 & -255886741 \\              
14 & -5453 & 961 & &                  & 40 & -366503875925 & 372089521 \\               
\rowcolor{tableshade}
15 & -10923 & -1717 & &               & 41 & -733007751851 & -668060317 \\              
16 & -22955 & 2445 & &                & 42 & -1466015503701 & 1099665689 \\             
\rowcolor{tableshade}
17 & -43691 & -4285 & &               & 43 & -2932031007403 & -1724813029 \\            
18 & -91733 & 6257 & &                & 44 & -5864062014805 & 3146759617 \\             
\rowcolor{tableshade}
19 & -174765 & -11153 & &             & 45 & -11728124029611 & -4701529197 \\           
20 & -349525 & 19041 & &              & 46 & -23456248059221 & 8491242153 \\            
\rowcolor{tableshade}
21 & -699059 & -28293 & &             & 47 & -46912496118443 & -13498854709 \\          
22 & -1398101 & 53321 & &             & 48 & -93824992236885 & 22289746385 \\           
\rowcolor{tableshade}
23 & -2796237 & -72905 & &            & 49 & -187649984473771 & -38672931645 \\         
24 & -5592403 & 129485 & &            & 50 & -375299968947541 & 59901979961 \\          
\rowcolor{tableshade}
25 & -11184811 & -214365 & &          &    & & \\
\hline
\end{tabular}
\end{center}
\end{table}}
{\footnotesize
\begin{table}\caption{Autocorrelation values $D_{n,s}=C_{x_n,x_n}(s)$ for Rudin--Shapiro sequences, $x_n$, at all shifts $s>0$ such that $|D_{n,s}|=\PSL(x_n)$}\label{Dido}
\begin{center}
\begin{tabular}{|c|rr|rr||c|rr|}
\hline
$n$ & $s$ & $D_{n,s}$ & $s$ & $D_{n,s}$ & $n$ & $s$ & $D_{n,s}$ \\
\hline
0 & all $s>0$ & 0 & &               & 26 & 44739243 & -214365 \\                   
\rowcolor{tableshade}                                                                    
1 & 1 & 1 & &                         & 27 & 89478477 & 342769 \\                    
2 & 1 & 1 & 3 & -1                    & 28 & 178956971 & -640933 \\                  
\rowcolor{tableshade}                                                                    
3 & 3 & 3 & &                         & 29 & 357913907 & 860709 \\                   
4 & 11 & -5 & &                       & 30 & 715827883 & -1624877 \\                 
\rowcolor{tableshade}                                                                    
5 & 13 & 7 & &                        & 31 & 1431655765 & 2490985 \\                 
6 & 43 & -13 & &                      & 32 & 2863311533 & -4188609 \\                
\rowcolor{tableshade}                                                                    
7 & 51 & 19 & &                       & 33 & 5726623061 & 7618449 \\                 
8 & 173 & -33 & &                     & 34 & 11453246131 & -10688117 \\              
\rowcolor{tableshade}                                                                    
9 & 363 & 53 & &                      & 35 & 22906492245 & 20617465 \\               
10 & 691 & -85 & &                    & 36 & 45812984491 & -29999429 \\              
\rowcolor{tableshade}                                                                    
11 & 1365 & 153 & &                   & 37 & 91625968981 & 51521697 \\               
12 & 2765 & -217 & &                  & 38 & 183251937963 & -90947021 \\             
\rowcolor{tableshade}                                                                    
13 & 5547 & 373 & &                   & 39 & 366503875923 & 133991557 \\             
14 & 11059 & -557 & &                 & 40 & 733007751851 & -255886741 \\            
\rowcolor{tableshade}                                                                    
15 & 21837 & 961 & &                  & 41 & 1466015503701 & 372089521 \\            
16 & 43691 & -1717 & &                & 42 & 2932031007403 & -668060317 \\           
\rowcolor{tableshade}                                                                    
17 & 88491 & 2445 & &                 & 43 & 5864062014805 & 1099665689 \\           
18 & 174763 & -4285 & &               & 44 & 11728124029611 & -1724813029 \\         
\rowcolor{tableshade}                                                                    
19 & 353877 & 6257 & &                & 45 & 23456248059221 & 3146759617 \\          
20 & 699053 & -11153 & &              & 46 & 46912496118443 & -4701529197 \\         
\rowcolor{tableshade}                                                                    
21 & 1398101 & 19041 & &              & 47 & 93824992236885 & 8491242153 \\          
22 & 2796211 & -28293 & &             & 48 & 187649984473771 & -13498854709 \\       
\rowcolor{tableshade}                                                                    
23 & 5592405 & 53321 & &              & 49 & 375299968947541 & 22289746385 \\        
24 & 11184845 & -72905 & &            & 50 & 750599937895083 & -38672931645 \\       
\rowcolor{tableshade}                                                                    
25 & 22369619 & 129485 & &            & 51 & 1501199875790165 & 59901979961 \\
\hline
\end{tabular}
\end{center}
\end{table}}

\subsection{Bounds on the peak crosscorrelation}\label{Ingrid}
Now we show how to use \cref{Justine} to obtain bounds on the peak crosscorrelation of  Golay--Rudin--Shapiro sequences from bounds for sequences appearing earlier in the recursion.
This method almost provides an inductive proof of our bounds in Theorems \ref{Gale}--\ref{Tommy}, but we find that at each step there are sets of exceptional shifts for which we cannot continue the inductive chain.
These exceptional sets (named $E_t$ in \cref{Velma} below) tend to become smaller and smaller as we iterate the fundamental crosscorrelation recursion more times (corresponding to larger $t$ in \cref{Justine}), but we were not able to eliminate entirely these exceptional shifts after iterating many times.
Therefore we deemed it impractical to attempt an inductive proof Theorems \ref{Gale}--\ref{Tommy} based solely on this method of repeated iteration, and instead turned to another technique (described in \cref{Shifts}) to fill in the gaps.
\begin{notation}\label{Nancy}
For the remainder of this section and for \cref{Hildegard}, the following notations, conventions and definitions are used. 
\begin{itemize}
\item For every $k \in \N$, we write $C_k$ as a shorthand for $x_k \conj{y_k}$, and we write the expansion of $C_k$ as $C_k(z)=\sum_{j \in \Z} C_{k,j} z^j$. 
\item We let $M_k=\max_{j\in \Z} |C_{k,j}| $, which is just the maximum magnitude crosscorrelation between $x_k$ and $y_k$, i.e., $\PCC(x_k,y_k)$.
\item We use interval notation for subsets of $\Z$ rather than $\R$, so that $(a,b)$ (resp., $(a,b]$, $[a,b)$, $[a,b]$) is the set of all $s \in \Z$ such that $a<s<b$ (resp., $a<s\leq b$, $a \leq s<b$, $a\leq s\leq b$).
\item If $S \subseteq \Z$ and $c$ is a positive integer, we write $S \cdot c$ to mean $\{sc: s \in S\}$. For example, $\left((1,2]\cup(3,4)\right)\cdot c=(c,2 c]\cup(3 c,4 c)$.
\end{itemize}
\end{notation}
Now we use \cref{Justine} to produce general bounds on peak crosscorrelation.
\begin{lemma}\label{Nellie}
Let $n, t \in \N$ with $0 < t < n$, let $s\in\Z$, and write $s=q \ell_{n-t+1} +r$ with $q,r \in \Z$ and $0 \leq r < \ell_{n-t+1}$.
Then
\[
|C_{n,s}| \leq
\begin{cases}
(|A_{t,q}|+|B_{t,q}|) M_{n-t} + |\Delta_{t,q}| M_{n-t-1} & \text{if $0 < r < \ell_{n-t}$}, \\
(|A_{t,q}|+|B_{t,q}|) M_{n-t} + |\Gamma_{t,q}| M_{n-t-1} & \text{if $\ell_{n-t} < r < \ell_{n-t+1}$},\\
(|A_{t,q}|+|B_{t,q}|) M_{n-t} & \text{if $r=\ell_{n-t}$},
\end{cases}
\]
and $C_{n,s}=0$ if $r=0$.	
\end{lemma}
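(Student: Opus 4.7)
The plan is to apply the triangle inequality to the identity in \cref{Justine} and then use the degree constraints from \cref{Claudia} to kill off certain terms depending on where $r$ lies in the interval $[0,\ell_{n-t+1})$. Since $C_k(z)=x_k\conj{y_k}$ and both $x_k$ and $y_k$ have degree strictly less than $\ell_k$, the coefficient $C_{k,u}$ is zero whenever $|u|\geq \ell_k$, and otherwise $|C_{k,u}|\leq M_k$. So the only work is to determine, for each of the four indices
\[
r-\ell_{n-t},\quad -r+\ell_{n-t},\quad r-3\ell_{n-t-1},\quad -r+\ell_{n-t-1},
\]
which cases force the corresponding $C$-coefficient to vanish and which only permit the bound by the appropriate $M_{n-t}$ or $M_{n-t-1}$.

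First I would handle $r=0$: then all four indices equal $\pm\ell_{n-t}$ or $\pm 3\ell_{n-t-1}$ or $\pm\ell_{n-t-1}$, each of absolute value at least $\ell_{n-t}$ or $\ell_{n-t-1}$ as appropriate. Using $\ell_{n-t}=2\ell_{n-t-1}$, each of these indices falls outside the support of the corresponding Laurent polynomial, forcing $C_{n,s}=0$. Next, for $r=\ell_{n-t}$, the first two indices become $0$ (which is legitimate, giving the $M_{n-t}$ contribution), while $r-3\ell_{n-t-1}=-\ell_{n-t-1}$ and $-r+\ell_{n-t-1}=-\ell_{n-t-1}$ have absolute value $\ell_{n-t-1}$, so those two coefficients vanish and only the $A_{t,q}$ and $B_{t,q}$ terms survive.

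For $0<r<\ell_{n-t}$, the first two indices lie strictly between $-\ell_{n-t}$ and $\ell_{n-t}$ and so contribute; $-r+\ell_{n-t-1}$ lies strictly between $-\ell_{n-t-1}$ and $\ell_{n-t-1}$, contributing the $\Delta_{t,q}$ term; but $r-3\ell_{n-t-1}$ lies in $(-3\ell_{n-t-1},-\ell_{n-t-1})$, killing the $\Gamma_{t,q}$ term. The remaining case $\ell_{n-t}<r<\ell_{n-t+1}$ is symmetric in flavor: the first two indices remain inside the support (so bounded by $M_{n-t}$), this time $r-3\ell_{n-t-1}\in(-\ell_{n-t-1},\ell_{n-t-1})$ provides the $\Gamma_{t,q}$ contribution, and $-r+\ell_{n-t-1}\in(-3\ell_{n-t-1},-\ell_{n-t-1})$ kills the $\Delta_{t,q}$ term.

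No step poses a genuine obstacle; the whole argument is bookkeeping on the four index expressions using $\ell_{n-t+1}=2\ell_{n-t}=4\ell_{n-t-1}$. The one place to be careful is at the boundaries $r=0$ and $r=\ell_{n-t}$, where some indices hit exactly $\pm\ell_{n-t}$ or $\pm\ell_{n-t-1}$; since the support condition in \cref{Claudia} is strict (degree \emph{less than} $\ell_k$), these boundary indices still yield zero, which is exactly what is needed to make the $r=0$ equation and the $r=\ell_{n-t}$ bound come out cleanly.
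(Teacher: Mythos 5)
Your proposal is correct and follows the same route as the paper: invoke \cref{Justine} to get the four-term identity, use the degree bounds from \cref{Claudia} (which force $C_{k,u}=0$ whenever $|u|\geq\ell_k$) to drop the vanishing terms in each $r$-range, then apply the triangle inequality and bound each surviving coefficient by the appropriate $M_k$. The bookkeeping on the four index expressions, including the boundary cases $r=0$ and $r=\ell_{n-t}$ where indices land exactly on $\pm\ell_{n-t}$ or $\pm\ell_{n-t-1}$, matches the paper's argument exactly.
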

\begin{proof}
By \cref{Justine}, the coefficient of $z^s$ in $x_n \conj{y_n}$ is
\begin{align}
\begin{split}\label{Mabel}
C_{n,s}=A_{t,q} C_{n-t,r-\ell_{n-t}} & + B_{t,q} \conj{C_{n-t,\ell_{n-t}-r}}\\
& +\Gamma_{t,q} C_{n-t-1,r-3 \cdot\ell_{n-t-1}}  + \Delta_{t,q}\conj{C_{n-t-1,\ell_{n-t-1}-r}}.
\end{split}
\end{align}
If $r=0$, then we cannot get a contribution from any of the four terms, since the shift for that term is larger in magnitude than the degrees of the sequences being correlated (cf.~\cref{Claudia}), and so $C_{n,s} =0$.
Likewise, if $0<r \leq\ell_{n-t}$, we cannot get a contribution from  $C_{n-t-1,r-3 \cdot\ell_{n-t-1}}$, while if $\ell_{n-t}\leq r<\ell_{n-t+1}$, we cannot get a contribution from  $\conj{C_{n-t-1,\ell_{n-t-1}-r}}$.
We prove the bounds for the three cases by dropping these non-contributory terms from \eqref{Mabel}, then taking absolute values of both sides, using the triangle inequality on the right-hand side, and bounding the absolute value of each specific correlation value $C_{k,u}$ with its largest possible magnitude $M_k$.
\end{proof}
The following technical result gives bounds on crosscorrelations for our Golay pairs $(x_n,y_n)$ in terms of the crosscorrelation and autocorrelation of the seed pair $(x_0,y_0)$, and is used in establishing the constant prefactor in the bound of \cref{Tommy}.
\begin{lemma}\label{Derrel}
If $s \in \Z$, $n \in \Z_+$, and $q=\floor{s/\ell_2}$, then
\begin{align*}
|C_{x_n,y_n}(s) |
\leq \big(|A_{n-1,q}| + |B_{n-1,q} | & +|\Gamma_{n-1,q} | + |\Delta_{n-1,q}|\big) \cdot  \PCC(x_0,y_0) \\
& + \big(|A_{n-1,q} | + |B_{n-1,q} |\big) \cdot 2 \PSL(x_0).
\end{align*}
\end{lemma}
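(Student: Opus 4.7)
The plan is to invoke \cref{Justine} with the single-step parameter $t = n-1$, which collapses the iterated fundamental crosscorrelation recursion all the way from level $n$ down to the seed pair $(x_0, y_0)$ and the first-stage pair $(x_1, y_1)$ in one step. With $t = n-1$, we have $\ell_{n-t+1} = \ell_2$, so writing $s = q\ell_2 + r$ with $0 \leq r < \ell_2$ identifies $q$ as $\floor{s/\ell_2}$, and \cref{Justine} (together with \cref{David}, to interpret each $C_{m,u}$ as the genuine crosscorrelation $C_{x_m, y_m}(u)$) yields
\begin{multline*}
C_{x_n, y_n}(s) = A_{n-1,q}\, C_{x_1, y_1}(r - \ell_1) + B_{n-1,q}\, \conj{C_{x_1, y_1}(\ell_1 - r)} \\
{}+ \Gamma_{n-1,q}\, C_{x_0, y_0}(r - 3\ell_0) + \Delta_{n-1,q}\, \conj{C_{x_0, y_0}(\ell_0 - r)}.
\end{multline*}

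From this point the proof is straightforward bookkeeping. I would apply the triangle inequality to the four-term sum and then bound each crosscorrelation factor by its peak: every occurrence of $|C_{x_0, y_0}(\cdot)|$ is bounded above by $\PCC(x_0, y_0)$ directly from the definition, while every occurrence of $|C_{x_1, y_1}(\cdot)|$ is bounded by $\PCC(x_1, y_1)$, which in turn is at most $2\PSL(x_0) + \PCC(x_0, y_0)$ by \cref{Demetrius}. Grouping like terms gives
\begin{align*}
|C_{x_n, y_n}(s)| & \leq \bigl(|A_{n-1,q}| + |B_{n-1,q}|\bigr)\bigl(2\PSL(x_0) + \PCC(x_0, y_0)\bigr) \\
& \quad + \bigl(|\Gamma_{n-1,q}| + |\Delta_{n-1,q}|\bigr) \PCC(x_0, y_0),
\end{align*}
which rearranges immediately into the stated inequality.

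The argument as described requires $n \geq 2$, so that the hypothesis $0 < t < n$ of \cref{Justine} is satisfied when $t = n-1$. The case $n = 1$ is not a genuine obstacle: \cref{Demetrius} already yields $|C_{x_1, y_1}(s)| \leq 2\PSL(x_0) + \PCC(x_0, y_0)$, which matches the claimed bound under the natural convention on the symbols $A_{0,q}, B_{0,q}, \Gamma_{0,q}, \Delta_{0,q}$ (or, equivalently, by treating $n = 1$ as a trivial base case). There is no deep technical obstacle anywhere in the argument, since the substantive content has already been proved in \cref{Justine} and \cref{Demetrius}; this lemma is essentially their direct assembly.
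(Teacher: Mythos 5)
Your proof is correct and takes essentially the same approach as the paper: the paper's proof applies \cref{Nellie} with $t=n-1$ (which is itself the triangle-inequality consequence of \cref{Justine}) and then invokes \cref{Demetrius} to bound $M_1 = \PCC(x_1,y_1)$, so working directly from \cref{Justine} as you do is only a cosmetic difference. You also rightly flag that $t = n-1$ requires $n \geq 2$ to satisfy the hypothesis $0 < t < n$; the paper's own proof has the same restriction without comment, and in fact \cref{Derrel} is only ever invoked in the proof of \cref{Generic Destiny} for $2 \leq n \leq 10$ (the $n=0$ and $n=1$ cases there are handled separately), so this is a dormant edge case in the lemma's stated hypothesis rather than a defect in your argument.
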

\begin{proof}
Apply \cref{Nellie} with $t=n-1$ (noting that $q$ as defined here is the same $q$ as in the lemma) to see that
\[
|C_{n,s}| \leq (|A_{n-1,q}|+|B_{n-1,q}|) M_1 + (|\Delta_{n-1,q}|+|\Gamma_{n-1,q}|) M_0,
\]
and then note that $M_0=\PCC(x_0,y_0)$ and \cref{Demetrius} says that $M_1=\PCC(x_1,y_1) \leq 2 \PSL(x_0)+\PCC(x_0,y_0)$.
\end{proof}
\subsection{Inductive bounds}\label{Hildegard}
Recall the algebraic number $\alpha_0$, which appears in our bounds in Theorems \ref{Gale}--\ref{Tommy}, and whose technical details are described in \cref{Ollie}.
We show that if we have established an exponential bound with base $\alpha_0$ for crosscorrelation of Golay--Rudin--Shapiro sequences appearing earlier in the recursion, then in most cases we can deduce such a bound for Golay--Rudin--Shapiro sequences appearing later in the recursion, but this does not amount to a full inductive proof because there are exceptional sets of shifts (called $E_t$ in the result below) for which the inductive chain does not continue.
\begin{proposition}\label{Velma}
Let $n \in \N$ and suppose that there is some positive real number $K$ such that for all $m < n$ we have $\PCC(x_m) \leq K \alpha_0^m$.  For each $t \in \{1,2,\ldots,10\}$, define $E_t$ by
\begin{enumerate}
\item $E_1=(-1,1) \cdot \ell_{n-1}$,
\item $E_2=(-2,2) \cdot \ell_{n-2}$,
\item $E_3=\big( (-4  ,-2 )\cup (1,3 )\big)\cdot \ell_{n-3}$,
\item $E_4=\big( [-6 ,-5 )\cup (2 , 3 ) \cup [4 , 6)\big) \cdot \ell_{n-4}$,
\item $E_5=\big ([-12 ,-10 ) \cup (5 ,6) \cup (9,11 )\big)\cdot \ell_{n-5}$,
\item $E_6=\big((-23 , -21 ) \cup (10,12) \cup (18 , 19 ) \cup (21, 22 ) )\cdot \ell_{n-6}$,
\item $E_7=\big( (-46, -45 ) \cup (-43 ,  -42 ) \cup (21,22) \cup (37 , 38 ) \cup (42 ,43 ) \big) \cdot \ell_{n-7}$,
\item $E_8=\big( (-91 , -90 ) \cup(-86 , -85 ) \cup (42,43)  \cup (74 , 75 )\big)\cdot \ell_{n-8}$,
\item $E_9=\big((-182 , -180 ) \cup (-171 , -170) \cup (149 , 150  )\big)\cdot \ell_{n-9}$, and
\item $E_{10}=(-342, -341) \cdot \ell_{n-10}$.
\end{enumerate}
Then for each $t \in \{1,\ldots,10\}$, if $n > t$ and $s \not\in E_t$, then $|C_{n,s}| \leq K \alpha_0^n$.
\end{proposition}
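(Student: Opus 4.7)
The plan is to deduce each of the ten statements by the same mechanism, differing only in the combinatorial bookkeeping. Fix $t \in \{1,\ldots,10\}$ and $n > t$. Given $s \in \Z$, write $s = q\ell_{n-t+1} + r$ with $0 \leq r < \ell_{n-t+1}$. \cref{Nellie} gives a three-case bound on $|C_{n,s}|$ in terms of $M_{n-t}$ and $M_{n-t-1}$; applying the inductive hypothesis $M_m \leq K\alpha_0^m$ and factoring out $K\alpha_0^{n-t-1}$ converts the desired inequality $|C_{n,s}| \leq K\alpha_0^n$ into one of the following purely algebraic conditions on the coefficients from Table \ref{Charles}:
\begin{align*}
 \bigl(|A_{t,q}|+|B_{t,q}|\bigr)\alpha_0 + |\Delta_{t,q}| &\leq \alpha_0^{\,t+1} \quad \text{if } 0 < r < \ell_{n-t},\\
 \bigl(|A_{t,q}|+|B_{t,q}|\bigr)\alpha_0 + |\Gamma_{t,q}| &\leq \alpha_0^{\,t+1} \quad \text{if } \ell_{n-t} < r < \ell_{n-t+1},\\
 \bigl(|A_{t,q}|+|B_{t,q}|\bigr)\alpha_0 &\leq \alpha_0^{\,t+1} \quad \text{if } r = \ell_{n-t},
\end{align*}
while the case $r=0$ gives $C_{n,s}=0$ automatically. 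So $s \notin E_t$ should be exactly the condition that the appropriate one of these three inequalities is satisfied at the relevant $q$.

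The first step is to enumerate, for each $t \in \{1,\ldots,10\}$, the finitely many $q \in [-2^{t-1}, 2^{t-1})$ for which any of $A_{t,q}$, $B_{t,q}$, $\Gamma_{t,q}$, $\Delta_{t,q}$ is nonzero; at all other $q$ the required inequality is trivially $0 \leq \alpha_0^{t+1}$. The nonzero coefficients are read off from the recursion \eqref{Ursula} bootstrapped from \eqref{Helen}, consistent with the values displayed in Table \ref{Charles}. For each such $q$ one verifies whether the relevant inequality among the three displayed above holds. Each such verification reduces to comparing a rational-coefficient polynomial in $\alpha_0$ of degree $\leq t+1$ with $0$; using $\alpha_0^3 = -\alpha_0^2 + 2\alpha_0 + 4$ (from $m(\alpha_0)=0$) to reduce to the $\Q$-basis $\{1,\alpha_0,\alpha_0^2\}$ and then applying the signifier from \cref{Abby} and \cref{James} carries out each comparison in exact rational arithmetic.

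The second step is translation. For every $q$ at which the $\Delta$-inequality fails, the entire interval $r \in (0, \ell_{n-t})$, i.e., the shifts $s \in q \ell_{n-t+1} + (0,\ell_{n-t})$, must be placed in $E_t$; and for every $q$ at which the $\Gamma$-inequality fails, the interval $q \ell_{n-t+1} + (\ell_{n-t}, \ell_{n-t+1})$ must be placed in $E_t$. Combining all such offending $(q, \text{side})$ pairs into a single subset and rewriting the result in the compact scaled-interval notation of \cref{Nancy} is what produces the explicit descriptions of $E_1,\ldots,E_{10}$ in the statement. For instance, in the $t=1$ case the only nonzero coefficients come from $q \in \{-1,0\}$, and one checks that the $\Delta$-inequality fails at $q=0$ (contributing $(0,\ell_{n-1})$) while the $\Gamma$-inequality fails at $q=-1$ (contributing $(-\ell_{n-1}, 0)$), yielding $E_1 = (-1,1)\cdot \ell_{n-1}$ after absorbing the trivial shift $s=0$ into the set without harm. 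The same mechanical procedure produces $E_2,\ldots,E_{10}$, using the rows of Table \ref{Charles} for the coefficient values.

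The main obstacle will be sheer bookkeeping rather than conceptual difficulty: for $t=10$ there are on the order of $2^{10}$ values of $q$ to inspect, and for each nonzero coefficient tuple one must compute two signifier inequalities in $\Q(\alpha_0)$ and check which side of zero they land on. The argument is tedious but entirely mechanical once the reduction to the three algebraic inequalities above has been made, and the inequalities can be certified by exact rational computation via the signifier machinery set up in \cref{Ollie}. After completing this enumeration for $t=1,\ldots,10$, the stated descriptions of $E_1,\ldots,E_{10}$ follow, establishing the proposition.
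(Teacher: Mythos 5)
Your proposal shares the right building blocks with the paper's proof: it is indeed \cref{Nellie} that does the work, the three algebraic inequalities you display are the correct form once the inductive hypothesis and the $K\alpha_0^{n-t-1}$ normalization are applied, and the signifier/reduction machinery from \cref{Ollie} is the correct tool for certifying them in exact arithmetic. But your overall strategy is not the one the paper uses, and it has a genuine gap.

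You propose, for each fixed $t$, to verify the three $t$-step inequalities at \emph{every} $q$ and then claim that the set of $(q,\text{side})$ pairs at which a verification fails is \emph{exactly} the set $E_t$ given in the statement (``$s \notin E_t$ should be exactly the condition that the appropriate one of these three inequalities is satisfied''). That is a much stronger claim than the proposition asserts and it is not one you prove; you mark it with ``should be.'' The proposition only requires that whenever $s\notin E_t$, the bound $|C_{n,s}|\le K\alpha_0^n$ holds --- it does not require that the $t$-step iteration \emph{by itself} furnishes the bound there. In fact nothing in \cref{Nellie} guarantees that the $t$-step inequalities hold at every $q$ outside $E_t$; the paper's proof never establishes this, and it is far from obvious for large $t$, where the coefficients $A_{t,q},B_{t,q},\Gamma_{t,q},\Delta_{t,q}$ can be large (for $t=10$ the table already shows entries like $99$ and $98$, and it lists only four of the roughly $1024$ values of $q$ that you would need to inspect). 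If at any uninspected $q$ the $t$-step inequality fails while $s\notin E_t$, your argument breaks. You have not ruled this out, and \cref{Charles} does not contain the data you would need to rule it out.

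What the paper actually does is cumulative rather than per-$t$: it sets $E_0=(-\ell_n,\ell_n)$, defines $F_t = E_{t-1}\smallsetminus E_t$, proves the claims in increasing order of $t$, and for the $t$th claim verifies the inequality only for $s\in F_t$, relying on the previously established $(t-1)$st claim for all $s\notin E_{t-1}$. This makes the argument robust --- the set of shifts where the single-$t$ iteration fails is allowed to be strictly larger than $E_t$, as long as the excess is already covered by some earlier $t'<t$ --- and it drastically reduces the verification to a handful of intervals per $t$ (exactly the $q$ values that appear in \cref{Charles}). Your plan would replace this with on the order of $2000$ signifier comparisons, and more seriously, would stand or fall on a monolithic claim (failure set $=E_t$) that you neither prove nor even precisely formulate as something to be checked. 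To repair the proposal you would either need to prove that the $t$-step failure set is contained in $E_t$ for every $t\le 10$ (a very heavy computation you have not carried out), or, better, adopt the nesting/difference-set argument that the paper uses, which needs only the $q$ values in \cref{Charles}.
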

\begin{proof}
Set $E_0=(-\ell_n,\ell_n)$.  When we prove the $t=1$ case, we may confine ourselves to $s \in E_0$, since otherwise $C_{n,s}=0$ (because $\deg x_n,\deg y_n < \ell_n$ by \cref{Claudia}). For each $t \in \{1,\ldots,10\}$, define $F_t=E_{t-1}\smallsetminus E_t$.
Since we prove cases with $t=1,2,\ldots,10$ in ascending order, the $t$th instance needs only be shown for $s \in F_t$.
We work out what the sets $F_1,\ldots,F_{10}$ are:
\begin{enumerate}
\item $F_1=\big((-2,-1] \cup [1,2)\big) \cdot \ell_{n-1}$,
\item $F_2=\emptyset$,
\item $F_3= \big([-2 ,1] \cup [3,4)\big)\cdot \ell_{n-3}$,
\item $F_4=\big((-8 , -6)\cup [-5 , -4 ) \cup [3 , 4 )\big)\cdot \ell_{n-4}$,
\item $F_5=\big((4, 5 ] \cup [8 , 9 ] \cup [11 , 12)\big)\cdot \ell_{n-5}$,
\item $F_6=\big([-24, -23 ] \cup [-21 , -20 ) \cup [19, 21]\big)\cdot \ell_{n-6}$,
\item $F_7=\big([-45, -43] \cup (20,21] \cup [22,24) \cup  (36 , 37 ] \cup [43 , 44))\cdot \ell_{n-7}$,
\item $F_8=\big((-92, -91 ] \cup [-85 , -84)\cup [43,44)  \cup [75 , 76) \cup (84 , 86 )\big) \cdot \ell_{n-8}$,
\item $F_9=\big((-172, -171 ] \cup(84,86) \cup (148, 149 ]\big)\cdot \ell_{n-9}$, and
\item $F_{10}=\big((-364, -360) \cup [-341 , -340 ) \cup (298 , 300)\big) \cdot \ell_{n-10}$.
\end{enumerate}
Now we write $s \in F_t$ as $s=q \cdot \ell_{n-t+1}+r$ where $q,r \in \Z$ and $0 \leq r < \ell_{n-t+1}$.
For each value of $t$, we give all the possibilities for $q$ and $r$, arranged into subcases.
\begin{enumerate}
\item For $t=1$, we have $q=-1$ and $r \in (0,\ell_{n-1}]$; or $q=0$ and $r \in [\ell_{n-1},\ell_n)$.
\item For $t=2$, there is nothing to prove.
\item For $t=3$, we have $q=-1$ and $r \in [0, \ell_{n-3})$; or $q=1$ and $r \in [\ell_{n-3},\ell_{n-2})$.
\item For $t=4$, we have $q=-4$ and $r \in (0,\ell_{n-3})$; $q=-3$ and  $r \in [\ell_{n-4},\ell_{n-3})$; or $q=1$ and $r \in [\ell_{n-4},\ell_{n-3})$. 
\item For $t=5$, we have $q=2$ and $r \in (0,\ell_{n-5}]$; $q=4$ and $r \in [0,\ell_{n-5}]$; or $q=5$ and $r \in [\ell_{n-5},\ell_{n-4})$.
\item For $t=6$, we have $q=-12$ and $r \in [0,\ell_{n-6}]$; $q=-11$ and $r \in [\ell_{n-6},\ell_{n-5})$; $q=9$ and $r \in [\ell_{n-6}, \ell_{n-5})$; or $q=10$ and $r \in [0, \ell_{n-6}]$. 
\item For $t=7$, we have $q=-23$ and $r \in [\ell_{n-7}, \ell_{n-6})$; $q=-22$ and $r \in [0,\ell_{n-7}]$; $q=10$ and $r \in (0,\ell_{n-7}]$; $q=11$ and $r \in [0,\ell_{n-6})$; $q=18$ and $r \in (0,\ell_{n-7}]$; or $q=21$ and $r \in [\ell_{n-7},\ell_{n-6})$.
\item For $t=8$, we have $q=-46$ and $r \in (0,\ell_{n-8}]$; $q=-43$ and $r \in [\ell_{n-8},\ell_{n-7})$; $q=21$ and $r \in [\ell_{n-8},\ell_{n-7})$;  $q=37$ and $r \in [\ell_{n-8},\ell_{n-7})$; or $q=42$ and $r \in (0,\ell_{n-7})$.
\item For $t=9$, we have
$q=-86$ and $r \in (0, \ell_{n-9}]$; $q=42$ and $r \in (0,\ell_{n-8})$; or $q=74$ and $r \in (0,\ell_{n-9}]$.
\item For $t=10$, we have $q=-182$ and $r \in (0,\ell_{n-9})$; $q=-181$ and $r \in [0,\ell_{n-9})$; $q=-171$ and $r \in [\ell_{n-10},\ell_{n-9})$; or $q=149$ and $r \in (0,\ell_{n-9})$.
\end{enumerate}
Then for each value of $t$, and for each specified $q$ and range of values of $r$ that we have listed for that $t$, we use the appropriate case of \cref{Nellie} with values of $A_{t,q}$, $B_{t,q}$, $\Gamma_{t,q}$, and $\Delta_{t,q}$ from Table \ref{Charles} to give a bound on $|C_{n,s}|$ where $s=q \cdot \ell_{n}+r$.
The bounds we get for each value of $t$ (considering all of its subcases listed above) are as follows.
\begin{enumerate}
\item For $t=1$, our two subcases give $|C_{n,s}| \leq M_{n-1}$; or $|C_{n,s}| \leq M_{n-1}$.
\item For $t=2$, there is nothing to prove.
\item For $t=3$, our two subcases give $|C_{n,s}| \leq 3 M_{n-3} + 2 M_{n-4}$; or  $|C_{n,s}| \leq 3 M_{n-3}$.
\item For $t=4$, our three subcases give $|C_{n,s}| \leq M_{n-4}+10 M_{n-5}$; $|C_{n,s}|\leq 3 M_{n-4}$; or  $|C_{n,s}| \leq 7 M_{n-4}$.
\item For $t=5$, our three subcases give $|C_{n,s} |\leq 7 M_{n-5}$; $|C_{n,s}| \leq 9 M_{n-5}$; or  $|C_{n,s}| \leq 11 M_{n-5}$.
\item For $t=6$, our four subcases give $|C_{n,s} |\leq 7 M_{n-6}$; $|C_{n,s}| \leq 9 M_{n-6}$; $|C_{n,s}| \leq 15 M_{n-6}$; or  $|C_{n,s}| \leq 17M_{n-6}$.
\item For $t=7$, our six subcases give $|C_{n,s} |\leq 33 M_{n-7}$; $|C_{n,s}| \leq 31 M_{n-7}$; $|C_{n,s}| \leq 27 M_{n-7}$; $|C_{n,s}| \leq 21M_{n-7}+14 M_{n-8}$; $|C_{n,s}| \leq 21M_{n-7}$; or  $|C_{n,s}| \leq 31M_{n-7}$.
\item For $t=8$, our five subcases give $|C_{n,s} |\leq 33 M_{n-8}$; $|C_{n,s}| \leq 49 M_{n-8}$; $|C_{n,s}| \leq 29 M_{n-8}$; $|C_{n,s}| \leq 45M_{n-8}$; or  $|C_{n,s}| \leq 13M_{n-8}+62 M_{n-9}$.
\item For $t=9$, our three subcases give $|C_{n,s} |\leq 55M_{n-9}$; $|C_{n,s}| \leq 83 M_{n-9} + 2 M_{n-10}$; or  $|C_{n,s}| \leq 87 M_{n-9}$.
\item For $t=10$, our four subcases give $|C_{n,s} |\leq 109 M_{n-10} + 66 M_{n-11}$; $|C_{n,s}| \leq 99 M_{n-10} + 66M_{n-11}$; $|C_{n,s}| \leq 153 M_{n-10}$; or  $117 M_{n-10} + 6 M_{n-11}$.
\end{enumerate}
Now by our hypothesis that $M_j \leq K \alpha_0^j$ for all $j <n$, so for each of the cases we know that if $s \in F_t$, then we have the following.
\begin{enumerate}
\item If $t=1$, then $|C_{n,s}| \leq K \alpha_0^{n-1}$.
\item If $t=2$, then nothing needs to be proved.
\item If $t=3$, then $|C_{n,s}| \leq  3 K \alpha_0^{n-3} + 2 K \alpha_0^{n-4}$.
\item If $t=4$, then $|C_{n,s}| \leq \max\{K \alpha_0^{n-4} + 10 K \alpha_0^{n-5}, 7 K \alpha_0^{n-4}\}$.
\item If $t=5$, then $|C_{n,s}| \leq 11 K \alpha_0 ^{n-5}$.
\item If $t=6$, then $|C_{n,s}| \leq 17 K \alpha_0 ^{n-6}$.
\item If $t=7$, then $|C_{n,s}| \leq \max\{33 K \alpha_0^{n-7}, 21 K \alpha_0^{n-7}+14 K \alpha_0^{n-8}\} $.
\item If $t=8$, then $|C_{n,s}| \leq \max\{49 K \alpha_0^{n-8}, 13 K \alpha_0^{n-8}+ 62 K \alpha_0^{n-9}\} $.
\item If $t=9$, then $|C_{n,s}| \leq \max\{87 K \alpha_0^{n-9}, 83 K \alpha_0^{n-9}+2 K \alpha_0^{n-10}\} $.
\item If $t=10$, then $|C_{n,s}| \leq \max\{109 K \alpha_0^{n-10}+66 K \alpha_0^{n-11}$$, $$153K \alpha_0^{n-10}$, $117 K\alpha_0^{n-10}+6 K \alpha_0^{n-11}\}.$ 
\end{enumerate}
Then one must show that these upper bounds are less than or equal to $K\alpha_0^n$, which one does by dividing the above bound for the $t$th case by $K \alpha_0^{n-t-1}$ and showing that this quotient is less than $\alpha_0^{t+1}$.
For example, if $t=7$, we must show that $33 \alpha_0 \leq \alpha_0^8$ and $21 \alpha_0 +14  \leq  \alpha_0^8$.  \cref{Ollie} shows how to verify such inequalities using purely rational arithmetic.
\end{proof}

\section{Correlations for recursive shifts}\label{Shifts}

In the last section, we iterated the fundamental crosscorrelation recursion \eqref{Reginald} from \cref{Lamar} many times in an attempt to provide an inductive proof of Theorems \ref{Gale}--\ref{Tommy}, but we found that there were exceptional shifts for which the induction could not be continued.
This section provides another inductive argument that also fails to work for some shifts, but will be shown to work for the critical shifts for which the method of \cref{Iterating} fails.

Recall that throughout this paper $\ell_n$, $x_n$, $y_n$ are always as in \cref{Stan}: $\ell_0$ is a positive integer and $x_0$ and $y_0$ are nonzero polynomials in $\C[z]$ of degree less than $\ell_0$ such that $(x_0,y_0)$ is a Golay complementary pair with $C_{x_0,x_0}(0)=C_{y_0,y_0}(0)$, and for every $n \in \N$, we have $\ell_n=2^n \ell_0$ and $(x_n,y_n)$ is the $n$th Golay pair obtained from $\ell_0$ and the seed pair $(x_0,y_0)$ via the Golay--Rudin--Shapiro recursion: $x_n(z)=x_{n-1}(z)+z^{\ell_{n-1}} y_{n-1}(z)$ and $y_n(z)=x_{n-1}(z)-z^{\ell_{n-1}} y_{n-1}(z)$.

The specific technique of this section is to construct a sequence $s_0,s_1,\ldots$ of shifts that obeys a particular recursion (see \cref{Thor}), and then use generating functions to express $C_{x_n,y_n}(s_n)$ in terms of the values $C_{x_m,y_m}(s_m)$ and $C_{x_m,y_m}(-s_m)$ for some $m < n$ (see \cref{Ares}).
This provides a bound (see \cref{Mars}) that will be used in \cref{Final Bounds} to fill in the gap left by the bounds in \cref{Iterating} (see the discussion before \cref{Nancy}).

\subsection{A recursion on shifts}\label{Thor}
We begin by defining a recursion for a sequence $s_0,s_1,\ldots$ of shifts, and then we explore the properties of sequences obeying this recursion.
\begin{definition}[Shift Recursion Rule]
We say that an infinite sequence of integers $s_0,s_1, \ldots $ follows the {\it shift recursion rule} to mean that $s_{n+1}=-s_n-\ell_n$ for each $n \in \N$.
\end{definition}
\begin{definition}[Standard Shift Sequence]\label{Kirk}
For each $n \in \N$, define $t_n = ((-1)^n-2^n)\ell_{0}/{3}.$ We call  the sequence $t_0,t_1, \ldots $ the {\it standard shift sequence}.
\end{definition}
\begin{remark}\label{Bethany}
Observe that the standard shift sequence $t_0,t_1, \ldots$ follows the shift recursion rule, that is, $ t_{n+1} = -t_n- \ell_{n}$.
\end{remark}
\begin{lemma}\label{Eric}
Suppose that $s_0, s_1,\ldots $ follows the shift recursion rule and let $t_0, t_1, \ldots$ be the standard shift sequence.  Then
$s_n=t_n + (-1)^n s_0$.
\end{lemma}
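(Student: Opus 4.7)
The plan is to proceed by straightforward induction on $n$, using the observation from \cref{Bethany} that the standard shift sequence itself obeys the shift recursion rule.

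For the base case $n=0$, I would simply verify that $t_0 = ((-1)^0 - 2^0)\ell_0/3 = 0$, so that the claimed identity $s_0 = t_0 + (-1)^0 s_0$ reduces to $s_0 = s_0$.

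For the inductive step, suppose $s_n = t_n + (-1)^n s_0$. Applying the shift recursion rule to $s_n$ gives
\[
s_{n+1} = -s_n - \ell_n = -t_n - (-1)^n s_0 - \ell_n.
\]
By \cref{Bethany}, the standard shift sequence satisfies $t_{n+1} = -t_n - \ell_n$, so $-t_n - \ell_n = t_{n+1}$. Also $-(-1)^n = (-1)^{n+1}$, so $s_{n+1} = t_{n+1} + (-1)^{n+1} s_0$, which completes the induction.

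There is no real obstacle here: the key conceptual step is simply recognising that the shift recursion rule $s_{n+1} = -s_n - \ell_n$ is an affine recurrence in $s_n$ whose associated homogeneous part has multiplier $-1$, so any two solutions differ by the solution of the homogeneous recurrence with initial data equal to their difference at $n=0$. The standard shift sequence $t_n$ is the particular solution with $t_0 = 0$, and $(-1)^n s_0$ is the homogeneous solution with initial value $s_0$, so their sum is the unique solution with initial value $s_0$.
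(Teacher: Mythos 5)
Your proof is correct and takes essentially the same approach as the paper: induction on $n$, with $t_0 = 0$ handling the base case and the shift recursion rule (applied to both $s_n$ and $t_n$, using \cref{Bethany}) handling the inductive step. The concluding remark about the affine/homogeneous structure is a nice conceptual gloss but adds nothing beyond what the induction already establishes.
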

\begin{proof}
We induct on $n \in \N$ with the base case being  trivial since $t_0=0$.
Then we suppose that the statement holds for some $n\geq0$, add $\ell_n$ to both sides, and negate our equation to get $-s_n-\ell_n=-t_n-\ell_n + (-1)^{n+1} s_0$ . Then use the shift recursion rule to see that $s_{n+1} = t_{n+1} + (-1)^{n+1}s_0$.
\end{proof}
\begin{lemma}\label{Lody}
Suppose $s_0,s_1,\ldots$ follows the shift recursion rule.
Then there is some $n \in \N$ such that $|s_n| < \ell_n$.
Let $m$ be the least nonnegative integer such that $|s_m|<\ell_m$.
If $|s_0| < \ell_0$, then $m=0$.
If $s_0 \geq \ell_0$ (resp., $s_0 \leq -\ell_0$) then $m$ is the least even (resp., odd) integer with $m > -2 + \log_2\left({(-1)^m(3 s_0 + \ell_{0})/}{\ell_{0}}\right)$.
\end{lemma}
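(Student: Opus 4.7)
The plan is to substitute the explicit form from \cref{Eric}, namely $s_n = (-1)^n(s_0+\ell_0/3) - 2^n\ell_0/3$, into the defining inequality $|s_n| < \ell_n = 2^n\ell_0$. A short rearrangement turns this into a pair of two-sided linear bounds on $s_0$: even $n$ gives $-(2^{n+1}+1)\ell_0/3 < s_0 < (2^{n+2}-1)\ell_0/3$, while odd $n$ gives $-(2^{n+2}+1)\ell_0/3 < s_0 < (2^{n+1}-1)\ell_0/3$. Since each of these intervals grows without bound as $n \to \infty$, some $n$ certainly satisfies $|s_n| < \ell_n$, establishing existence of $m$.

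The case $|s_0|<\ell_0$ follows immediately, because the even $n=0$ condition reduces to $-\ell_0 < s_0 < \ell_0$. For $s_0 \geq \ell_0 > 0$ the lower bounds are automatic, and the upper bounds for $n=2k$ and $n=2k+1$ both simplify to the identical condition $2^{2k+2} > (3 s_0 + \ell_0)/\ell_0$. Thus within each consecutive pair $(2k,2k+1)$ both indices pass or both fail, and when they pass the smaller (even) index $2k$ wins. Hence $m$ is even, and it is the smallest even integer with $m+2 > \log_2((3s_0+\ell_0)/\ell_0)$, which is the claimed formula (noting $(-1)^m=1$). For $s_0\leq -\ell_0$ the mirror analysis applies: the upper bounds are now automatic, and by pairing $(2k-1,2k)$ instead of $(2k,2k+1)$, both indices' lower-bound conditions reduce to $2^{2k+1} > -(3s_0+\ell_0)/\ell_0$, so the smaller (odd) index $2k-1$ wins, yielding $m$ odd and $m > -2 + \log_2(-(3s_0+\ell_0)/\ell_0)$.

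The main care required is in the parity-pairing step: the correct pair of consecutive indices differs between the two cases (adjacent pair $(2k,2k+1)$ for $s_0\geq\ell_0$ but a shifted pair $(2k-1,2k)$ for $s_0\leq -\ell_0$), because the relevant binding inequality changes from an upper bound to a lower bound. Once the correct pairing is identified, the whole argument reduces to one rearrangement of exponents and one application of $\log_2$, and no further case analysis beyond the three regimes of $s_0$ is needed.
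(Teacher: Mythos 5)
Your proof is correct and follows essentially the same route as the paper: both substitute the explicit formula from \cref{Eric} into $|s_n|<\ell_n$, observe that the resulting condition on $s_0$ is identical for the two indices of a consecutive pair of the appropriate parity, and conclude that $m$ lands on the smaller index of the first pair that works. The paper phrases the parity step as a contradiction (wrong-parity $m$ would force $m-1$ to work too) while you verify the pairing symmetry directly, but the underlying observation and the final log-formula manipulation are the same.
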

\begin{proof}
Let $t_0,t_1,\ldots$ be the standard shift sequence. 
By \cref{Eric} we have $s_n   = t_n + (-1)^n s_0  = {((-1)^n-2^n)\ell_{0}}/{3}  + (-1)^n s_0 = {(-1)^n(3 s_0+ \ell_{0})/}{3} - {\ell_n/}{3}$, so that
\begin{equation}\label{Shaun}
-\ell_{n} < s_n < \ell_{n} \text{ if and only if} -\frac{2}{3} \ell_{n} < \frac{(-1)^n (3 s_0 + \ell_{0})}{3} < \frac{4}{3} \ell_{n}.
\end{equation}
Since $\lim\limits_{n \to \infty} \ell_{n}= \lim\limits_{n \to \infty} 2^n \ell_{0} = \infty$, there is some $n \in \N$ such that $- \ell_{n} < s_n < \ell_{n}$, thus proving existence.
So we may indeed define $m$ to be the smallest nonnegative value of $n$ such that the equivalent statements in \eqref{Shaun} hold.
If $|s_0| < \ell_0$, then it is immediate that $m=0$ by the definition of $m$.

Suppose, for a contradiction, that $s_0 \geq \ell_{0}$ and $m$ is odd, or that $s_0 \leq - \ell_{0}$ and $m$ is even.
Then $-2 \ell_{m-1}/3 < 0 < (-1)^{m-1}(3s_0 + \ell_{0})/3$, and by negating the third inequality in \eqref{Shaun} with $n=m$ we have $(-1)^{m-1}(3s_0 + \ell_{0})/3 <  2 \ell_m /3 = 4 \ell_{m-1}/3$.
Therefore  $-2 \ell_{m-1}/3 < (-1)^{m-1}(3s_0 + \ell_{0})/3 < 4 \ell_{m-1}/3$, which by \eqref{Shaun} contradicts the minimality assumption on $m$.
Thus, $m$ must be even (resp., odd) if $s_0 \geq \ell_0$ (resp., $s_0 \leq -\ell_0$).
When $s_0 \geq \ell_0$ (resp., $s_0 \leq -\ell_0$) and $n$ is even (resp., odd), the third inequality in \eqref{Shaun} always holds, so $m$ must be the least even (resp., odd) value of $n$ that makes the fourth inequality in \eqref{Shaun} hold, i.e., the least even (resp., odd) integer such that $(-1)^m (3 s_0 + \ell_{0})/3 < 2^{m+2} \ell_0/3$.
\end{proof}
\begin{lemma}\label{Botswana}
If $ s_0, s_1,\ldots $ follows the shift recursion rule, then the following hold.
\begin{enumerate}[(i)]
\item\label{Eustace} If $0 \leq s_n < \ell_n$, then $-\ell_{n+1} < s_{n+1} \leq -\ell_n$.
\item\label{Ophelia} If $-\ell_n < s_n <0$, then $-\ell_n<s_{n+1}<0$.
\item\label{Jimmy} If $s_n \leq -\ell_n$, then $s_{n+1} \geq 0$.
\item\label{Eugene} If $s_n \geq \ell_n$, then $s_{n+1} \leq -\ell_{n+1}$.
\end{enumerate}
\end{lemma}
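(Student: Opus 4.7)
The plan is to verify each of the four implications by direct computation from the defining equation $s_{n+1}=-s_n-\ell_n$ together with the fact that $\ell_{n+1}=2\ell_n$ (from \cref{Stan}). Each case is just a matter of negating the given inequality on $s_n$, subtracting $\ell_n$ from all sides, and reinterpreting the result in terms of $\ell_n$ and $\ell_{n+1}=2\ell_n$; no induction or auxiliary identity is required.

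Concretely, for \eqref{Eustace}, from $0\le s_n<\ell_n$ I would multiply by $-1$ to get $-\ell_n<-s_n\le 0$, then subtract $\ell_n$ to get $-2\ell_n<-s_n-\ell_n\le-\ell_n$, which is exactly $-\ell_{n+1}<s_{n+1}\le-\ell_n$. For \eqref{Ophelia}, from $-\ell_n<s_n<0$ I would negate to get $0<-s_n<\ell_n$, then subtract $\ell_n$ to get $-\ell_n<s_{n+1}<0$. For \eqref{Jimmy}, from $s_n\le-\ell_n$ I would negate and subtract $\ell_n$ to get $s_{n+1}=-s_n-\ell_n\ge\ell_n-\ell_n=0$. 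For \eqref{Eugene}, from $s_n\ge\ell_n$ I would negate and subtract $\ell_n$ to get $s_{n+1}\le-\ell_n-\ell_n=-2\ell_n=-\ell_{n+1}$.

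There is no serious obstacle here: the four statements are essentially a case split on where $s_n$ lies on the real line relative to $\pm\ell_n$, and the recursion $s_{n+1}=-s_n-\ell_n$ is linear. The only thing to be careful about is matching strict versus non-strict inequalities on each side and remembering the identity $\ell_{n+1}=2\ell_n$ when converting the intermediate bound $-2\ell_n$ to $-\ell_{n+1}$. The lemma is essentially a bookkeeping result that will later be used (together with \cref{Eric} and \cref{Lody}) to track where the shifts $s_n$ produced by the shift recursion rule can land, which is the setup needed for the generating-function argument mentioned in the introduction to \cref{Shifts}.
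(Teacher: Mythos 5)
Your computations are correct and take exactly the route the paper does: the paper simply notes that all four statements follow easily from the shift recursion rule $s_{n+1}=-s_n-\ell_n$ (together with $\ell_{n+1}=2\ell_n$), and you have just written out the arithmetic explicitly. Nothing is missing.
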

\begin{proof}
All four statements follow easily from the shift recursion rule. 
\end{proof}
\begin{corollary}\label{Nelson}
 Let $s_0,s_1,\ldots$ follow the shift recursion rule, and let $m$ be the smallest nonnegative integer such that $|s_m| < \ell_m$ (which exists by \cref{Lody}). Then we have the following:
\begin{enumerate}[(i)]
\item \label{Greg} $0 \leq s_m < \ell_m$ if $m>0$ or $\ell_0=1$,
\item \label{Jessica} $-\ell_n<s_n<0$ for all $n>m$, and 
\item For all $n \in \N$ with $n<m$, we have $s_n \geq \ell_n$ if $n \equiv m \pmod{2}$ and $s_n \leq -\ell_n$ if $n \not\equiv m \pmod{2}$. 
\end{enumerate}
\end{corollary}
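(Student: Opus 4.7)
The plan is to prove the three parts in the order (i), (iii), (ii), leveraging the minimality of $m$ together with \cref{Botswana}. Part (i) splits into two cases. If $m=0$ and $\ell_0=1$, then $|s_0|<1$ together with $s_0\in\Z$ forces $s_0=0$, so $0\leq s_0<\ell_0$. If $m>0$, the minimality of $m$ yields $|s_{m-1}|\geq\ell_{m-1}$. The possibility $s_{m-1}\geq\ell_{m-1}$ would, by part (iv) of \cref{Botswana}, give $s_m\leq-\ell_m$, contradicting $|s_m|<\ell_m$; so $s_{m-1}\leq-\ell_{m-1}$, and then part (iii) of \cref{Botswana} yields $s_m\geq 0$. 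Combined with $|s_m|<\ell_m$, this gives $0\leq s_m<\ell_m$.

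For part (iii), I will argue by reverse induction on $n$ from $m-1$ down to $0$, assuming $m>0$ (otherwise the statement is vacuous). The work above already provides the base case $s_{m-1}\leq-\ell_{m-1}$, which matches the parity condition because $m-1\not\equiv m\pmod{2}$. For the inductive step, suppose the parity pattern holds at some $n+1<m$, and rewrite the shift recursion as $s_n=-s_{n+1}-\ell_n$. Using $\ell_{n+1}=2\ell_n$: if $n+1\equiv m\pmod{2}$, then $s_{n+1}\geq\ell_{n+1}$ gives $s_n\leq-\ell_{n+1}-\ell_n=-3\ell_n\leq-\ell_n$, matching $n\not\equiv m\pmod{2}$; if $n+1\not\equiv m\pmod{2}$, then $s_{n+1}\leq-\ell_{n+1}$ gives $s_n\geq\ell_{n+1}-\ell_n=\ell_n$, matching $n\equiv m\pmod{2}$. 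The minimality of $m$ guarantees $|s_n|\geq\ell_n$ for $n<m$, so only the sign needs tracking, which the induction delivers.

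Part (ii) is handled by forward induction on $n\geq m+1$. For the base case $n=m+1$, the bound $-\ell_m<s_m<\ell_m$ splits into two subcases: if $0\leq s_m<\ell_m$, part (i) of \cref{Botswana} gives $-\ell_{m+1}<s_{m+1}\leq-\ell_m<0$; if $-\ell_m<s_m<0$, part (ii) of \cref{Botswana} gives $-\ell_m<s_{m+1}<0$, which implies $-\ell_{m+1}<s_{m+1}<0$ since $\ell_{m+1}>\ell_m$. The inductive step from $n$ to $n+1$ is then an immediate application of part (ii) of \cref{Botswana}. The argument throughout is careful bookkeeping of sign ranges under the shift recursion; the only real subtlety lies in part (i), where the statement is asymmetric and one must separately dispatch the corner case $m=0$ with $\ell_0=1$ (where $s_0=0$ is forced by integrality) and the case $m>0$ (where \cref{Botswana} eliminates the possibility $s_{m-1}\geq\ell_{m-1}$).
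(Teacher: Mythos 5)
Your proof is correct and takes essentially the same approach as the paper's, decomposing into the same three parts and anchoring both (ii) and (iii) on the fact that $s_{m-1}\leq-\ell_{m-1}$ when $m>0$, which you each extract from the proof of (i) via \cref{Botswana}. The only minor stylistic difference is in part (iii), where you propagate the parity pattern by reverse induction using the shift recursion $s_n=-s_{n+1}-\ell_n$ directly, while the paper instead cites the alternating-sign consequences of \cref{Botswana}\eqref{Jimmy} and \eqref{Eugene}; these are equivalent observations.
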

\begin{proof}
If $m=0$ and $\ell_0=1$, then our assumption that $|s_m| < \ell_m$ forces $s_m=s_0=0$, and so $0 \leq s_m < \ell_m$.
If $m>0$, then $s_m>-\ell_m$ by assumption, so the contrapositive of \cref{Botswana}\eqref{Eugene} makes $s_{m-1}<\ell_{m-1}$, but we must have $|s_{m-1}| \geq \ell_{m-1}$ by the minimality of $m$, so that $s_{m-1}\leq -\ell_{m-1}$, which in turn implies $s_m \geq 0$ by \cref{Botswana}\eqref{Jimmy}, and thus our initial assumption that $|s_m| < \ell_m$ implies $0 \leq s_m < \ell_m$.
This proves \eqref{Greg}.

Now we prove \eqref{Jessica}.
Once we have an $n$ such that $-\ell_n<s_n<0$, then $- \ell_{n+1}<-\ell_n < s_{n+1}<0$ by \cref{Botswana}\eqref{Ophelia}. So if $-\ell_m<s_m<0$, we are done. Otherwise, $0 \leq s_m < \ell_m$, and then \cref{Botswana}\eqref{Eustace} shows that $-\ell_{m+1} < s_{m+1} \leq -\ell_m<0$.

For the last statement, since $|s_n| \geq \ell_n$ for all $n \in \N$ with $n < m$, \cref{Botswana}\eqref{Jimmy} and \eqref{Eugene} show that $s_0,s_1,\ldots,s_{m-1}$ is a sequence of nonzero integers with alternating signs, and since $s_m > -\ell_m$, its immediate predecessor, $s_{m-1}$, must be less than or equal to $-\ell_{m-1}$.
\end{proof}

\subsection{A generating function for crosscorrelations}\label{Ares}
We now prove some relations on the crosscorrelation of Golay--Rudin--Shapiro sequences evaluated at shifts satisfying the shift recursion rule. 
\begin{lemma}\label{Nestor}
If $s_0, s_1,\ldots $ follows the shift recursion rule, and $n > 1$ with $s_n < 0$, then
\begin{align*}
C_{x_n,y_n}(s_n) & = -C_{x_{n-1},y_{n-1}}(-s_{n-1}) + 2 C_{x_{n-2},y_{n-2}}(s_{n-2}), \text{and} \\
C_{x_n,y_n}(-s_n) & = \conj{C_{x_{n-1},y_{n-1}}(-s_{n-1}) }+2\conj{C_{x_{n-2},y_{n-2}}(s_{n-2})}.
\end{align*}
\end{lemma}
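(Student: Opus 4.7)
The plan is to reduce the lemma to a direct application of \cref{Geoff} combined with two simple identities derived from the shift recursion rule. The core observation is that the shifts appearing on the right-hand side of \cref{Geoff} (namely $\ell_{n-1}+s$ and $\ell_{n-2}+s$ for negative $s$, or $\ell_{n-1}-s$ and $\ell_{n-2}-s$ for positive $s$) can be rewritten cleanly in terms of earlier terms of the shift sequence.

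First I would establish the two key arithmetic identities. Using the shift recursion rule $s_n = -s_{n-1}-\ell_{n-1}$, we immediately have
\[
\ell_{n-1}+s_n = -s_{n-1}.
\]
Iterating the rule once more, so $s_{n-1}=-s_{n-2}-\ell_{n-2}$, and recalling $\ell_{n-1}=2\ell_{n-2}$, we obtain
\[
\ell_{n-2}+s_n = \ell_{n-2} - s_{n-1} - \ell_{n-1} = \ell_{n-2} - s_{n-1} - 2\ell_{n-2} = -s_{n-1}-\ell_{n-2} = s_{n-2}.
\]
These are the two substitutions that will convert \cref{Geoff} into the claimed form.

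Next I would apply \cref{Geoff} to $C_{x_n,y_n}(s_n)$. Since $s_n<0$ by hypothesis and $n\geq 2$, the second case of \cref{Geoff} gives
\[
C_{x_n,y_n}(s_n) = -C_{x_{n-1},y_{n-1}}(\ell_{n-1}+s_n) + 2 C_{x_{n-2},y_{n-2}}(\ell_{n-2}+s_n),
\]
and substituting the two identities above yields the first claimed equation. For the second equation, I would apply \cref{Geoff} instead at the positive shift $-s_n>0$, giving
\[
C_{x_n,y_n}(-s_n) = \conj{C_{x_{n-1},y_{n-1}}(\ell_{n-1}+s_n)} + 2\conj{C_{x_{n-2},y_{n-2}}(\ell_{n-2}+s_n)},
\]
and the same two substitutions finish the proof.

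There is essentially no technical obstacle: once the identities $\ell_{n-1}+s_n=-s_{n-1}$ and $\ell_{n-2}+s_n=s_{n-2}$ are in hand, the result is a single line in each case. The only point that deserves attention is verifying that \cref{Geoff} is applicable, which requires $n\geq 2$ (given) and the appropriate sign of the shift (given in the hypothesis for the first equation, and automatic from $-s_n>0$ for the second).
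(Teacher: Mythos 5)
Your proof is correct and follows exactly the same route as the paper's proof: apply \cref{Geoff} with shift $s_n<0$ for the first identity and with shift $-s_n>0$ for the second, then rewrite the resulting shifts via the shift recursion rule (which give $\ell_{n-1}+s_n=-s_{n-1}$ and $\ell_{n-2}+s_n=s_{n-2}$). The two arithmetic identities and the casework in \cref{Geoff} are handled correctly.
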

\begin{proof}
Use \cref{Geoff} with $s=s_n<0$ (resp., $s=-s_n>0$) and rewrite the shifts using the shift recursion rule to obtain the first (resp., second) relation.
\end{proof}
\begin{proposition}\label{Cecilia}
If $s_0,s_1, \ldots$ follows the shift recursion rule, $n > 2$, $s_{n-1} < 0$, and $s_n < 0$, then we have
\[
C_{x_n,y_n}(s_n) = \conj{C_{x_{n-1},y_{n-1}}(s_{n-1}) }+ 2 C_{x_{n-2},y_{n-2}}(s_{n-2}) -4 \conj{C_{x_{n-3},y_{n-3}}(s_{n-3})}.
\]
\end{proposition}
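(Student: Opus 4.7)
The plan is to apply \cref{Nestor} twice: once at index $n$ and once at index $n-1$, both of which are permitted since $s_n<0$ and $s_{n-1}<0$ by hypothesis (and $n>2$ ensures the shifts $s_{n-1}, s_{n-2}, s_{n-3}$ are all defined with appropriate indices for the lemma).

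First I would invoke the first equation of \cref{Nestor} at index $n$ to write
\[
C_{x_n,y_n}(s_n) = -C_{x_{n-1},y_{n-1}}(-s_{n-1}) + 2 C_{x_{n-2},y_{n-2}}(s_{n-2}).
\]
The goal now is to rewrite the term $-C_{x_{n-1},y_{n-1}}(-s_{n-1})$ in a way that introduces $\conj{C_{x_{n-1},y_{n-1}}(s_{n-1})}$ and the desired $C_{x_{n-3},y_{n-3}}$ contribution. To this end I would apply the second equation of \cref{Nestor} at index $n-1$ (valid because $s_{n-1}<0$) to obtain
\[
C_{x_{n-1},y_{n-1}}(-s_{n-1}) = \conj{C_{x_{n-2},y_{n-2}}(-s_{n-2})} + 2\conj{C_{x_{n-3},y_{n-3}}(s_{n-3})},
\]
and substitute this into the previous display, giving
\[
C_{x_n,y_n}(s_n) = -\conj{C_{x_{n-2},y_{n-2}}(-s_{n-2})} - 2\conj{C_{x_{n-3},y_{n-3}}(s_{n-3})} + 2 C_{x_{n-2},y_{n-2}}(s_{n-2}).
\]

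Next I would eliminate the stray term $-\conj{C_{x_{n-2},y_{n-2}}(-s_{n-2})}$ by using the first equation of \cref{Nestor} at index $n-1$, taken in its conjugated form:
\[
\conj{C_{x_{n-1},y_{n-1}}(s_{n-1})} = -\conj{C_{x_{n-2},y_{n-2}}(-s_{n-2})} + 2\conj{C_{x_{n-3},y_{n-3}}(s_{n-3})},
\]
so $-\conj{C_{x_{n-2},y_{n-2}}(-s_{n-2})} = \conj{C_{x_{n-1},y_{n-1}}(s_{n-1})} - 2\conj{C_{x_{n-3},y_{n-3}}(s_{n-3})}$. Substituting this into the expression for $C_{x_n,y_n}(s_n)$ and collecting the two copies of $\conj{C_{x_{n-3},y_{n-3}}(s_{n-3})}$ yields exactly the claimed identity.

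There is no real obstacle here: the proof is a bookkeeping exercise in applying \cref{Nestor} twice and then using its first equation (conjugated) to swap one undesired conjugate term for the desired $\conj{C_{x_{n-1},y_{n-1}}(s_{n-1})}$. The only care needed is to track signs and complex conjugates correctly, and to verify at the outset that the sign hypotheses $s_n<0$ and $s_{n-1}<0$ justify each application of \cref{Nestor}.
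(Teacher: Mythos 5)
Your proof is correct and takes essentially the same approach as the paper: both proofs are linear combinations of the first relation of \cref{Nestor} at index $n$ together with both relations of \cref{Nestor} at index $n-1$ (with a conjugation applied to the first relation at $n-1$). The only difference is the order of substitutions — the paper first combines the two $n-1$ relations to express $C_{x_{n-1},y_{n-1}}(-s_{n-1})$ directly, then substitutes once, whereas you substitute twice — but the ingredients and bookkeeping are identical.
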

\begin{proof}
Consider the two relations in \cref{Nestor} when $n$ is replaced by $n-1$; if we add the conjugate of the first to the second and rearrange, we obtain
\[
C_{x_{n-1},y_{n-1}}(-s_{n-1}) = -\conj{C_{x_{n-1},y_{n-1}}(s_{n-1})} + 4\conj{C_{x_{n-3},y_{n-3}}(s_{n-3})},
\]
and if we write the expression on the right-hand side of this equation in place of the $C_{x_{n-1},y_{n-1}}(-s_{n-1})$ that appears in the first relation of \cref{Nestor}, we obtain the desired result.
\end{proof}
Now we use the above relations to obtain a generating function that organizes crosscorrelations of Golay--Rudin--Shapiro sequences at shifts following the shift recursion rule.
The generating function involves the algebraic number $\alpha_0=1.658967\ldots$, which is the unique real root of $X^3+X^2-2 X-4$, along with the two non-real conjugate roots, $\alpha_1$ and $\alpha_2$, all of which are discussed in \cref{Ollie}.
\begin{proposition}\label{Lily}
Let $s_0,s_1,\ldots$ be a sequence of integers following the shift recursion rule, and let $m$ be the least nonnegative integer such that $|s_m| < \ell_m$.
Let $\sigma \colon \C \to \C$ denote the complex conjugation automorphism, so $\sigma^k(x)=x$ if $k$ is even and $\sigma^k(x)=\conj{x}$ if $k$ is odd.
For each $v \in \{0,1\}$ and $k \in \N$, let $f_{v,k}=\sigma^{v+k}(C_{x_k,y_k}((-1)^v s_k))$.
Then for every $n \geq m$, we have
\[
f_{0,n} =\sum_{j \in \Z/3\Z} \sum_{v \in \{0,1\}}  E_{j,v} f_{v,m} (-\alpha_j)^{n-m},
\]
where 
$E_{j,0} = (2+ \alpha_{j+1} \alpha_{j+2})/ ((\alpha_j- \alpha_{j+1}) (\alpha_j - \alpha_{j+2}))$ and $E_{j,1} = -(1+ \alpha_{j+1} + \alpha_{j+2})/((\alpha_j- \alpha_{j+1}) (\alpha_j - \alpha_{j+2}))$ for each $j \in \Z/3\Z$.
\end{proposition}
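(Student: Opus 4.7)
The plan is to derive a scalar three-term linear recurrence for the sequence $(f_{0,n})_{n\ge m}$, identify its characteristic roots as $-\alpha_0,-\alpha_1,-\alpha_2$, and pin down the three constants by computing the initial data $f_{0,m}$, $f_{0,m+1}$, $f_{0,m+2}$ as explicit linear combinations of $f_{0,m}$ and $f_{1,m}$; inverting the resulting Vandermonde system then produces the coefficients $E_{j,v}$.

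For the recurrence: by \cref{Nelson}\eqref{Jessica} we have $s_n<0$ for every $n>m$, so the hypotheses of \cref{Cecilia} are met for all $n\ge m+3$. Applying $\sigma^n$ to both sides of the identity in \cref{Cecilia} and exploiting $\sigma^2=\mathrm{id}$ (so $\sigma^{n+1}=\sigma^{n-1}$ and $\sigma^n=\sigma^{n-2}$), the conjugates on the right-hand side all collapse to yield the clean recurrence $f_{0,n}=f_{0,n-1}+2f_{0,n-2}-4f_{0,n-3}$ for every $n\ge m+3$. Its characteristic polynomial $X^3-X^2-2X+4$ equals $-m(-X)$, so its three distinct roots are $-\alpha_0,-\alpha_1,-\alpha_2$, and every solution has the form $f_{0,n}=\sum_{j\in\Z/3\Z}c_j(-\alpha_j)^{n-m}$.

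For the initial data: $f_{0,m}$ is tautological. For $f_{0,m+1}$ and $f_{0,m+2}$, I invoke \cref{Nestor}. When $m\ge 1$, the minimality of $m$ gives $|s_{m-1}|\ge\ell_{m-1}$, so \cref{Claudia} forces $C_{x_{m-1},y_{m-1}}(\pm s_{m-1})=0$ and the second summands in both Nestor identities at $n=m+1$ vanish; unwinding the conjugations yields $f_{0,m+1}=-f_{1,m}$ together with the auxiliary relation $f_{1,m+1}=f_{1,m}$. Feeding both into \cref{Nestor} at $n=m+2$ then gives $f_{0,m+2}=2f_{0,m}-f_{1,m}$. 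For the boundary case $m=0$, where \cref{Nestor} is not available at $n=1$, the same two linear relations are recovered by expanding $C_{x_1,y_1}(s_1)$ directly from \cref{Matilda} and applying the Golay cancellation $C_{x_0,x_0}(s)+C_{y_0,y_0}(s)=0$ for $s\ne 0$.

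Finally, substituting $n=m,m+1,m+2$ into the general solution produces a $3\times 3$ Vandermonde system for $(c_0,c_1,c_2)$ with right-hand side $(f_{0,m},-f_{1,m},2f_{0,m}-f_{1,m})$. Lagrange interpolation at the nodes $-\alpha_0,-\alpha_1,-\alpha_2$ solves it, and collecting coefficients of $f_{0,m}$ and $f_{1,m}$ reduces to matching
\[
\frac{2+\alpha_{j+1}\alpha_{j+2}}{(\alpha_j-\alpha_{j+1})(\alpha_j-\alpha_{j+2})}=E_{j,0},\qquad \frac{-(1+\alpha_{j+1}+\alpha_{j+2})}{(\alpha_j-\alpha_{j+1})(\alpha_j-\alpha_{j+2})}=E_{j,1},
\]
which is immediate from the stated definitions. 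The main obstacle is the $m=0$ case, where the conjugation-vanishing trick used for $m\ge 1$ breaks down and the two initial relations must be recovered by hand from \cref{Matilda} and the Golay property of the seed pair.
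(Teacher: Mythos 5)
Your proposal follows essentially the same route as the paper's proof: invoke \cref{Cecilia} together with \cref{Nelson}\eqref{Jessica} to obtain the constant-coefficient three-term recurrence on $(f_{0,n})_{n\ge m}$, identify the characteristic roots $-\alpha_0,-\alpha_1,-\alpha_2$, use \cref{Nestor} (with minimality of $m$ to suppress the out-of-range terms $C_{x_{m-1},y_{m-1}}(\pm s_{m-1})$) to reduce the initial data to $f_{0,m}$ and $f_{1,m}$, and invert the resulting Vandermonde system. For $m\ge 1$ your derivation of $f_{0,m+1}=-f_{1,m}$, $f_{1,m+1}=f_{1,m}$, and $f_{0,m+2}=2f_{0,m}-f_{1,m}$ is correct and reproduces the paper's.

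Your handling of $m=0$, however, contains a concrete error. You are right to flag this boundary case, since \cref{Nestor} at $n=m+1$ requires $m+1\ge 2$ — a point the paper's own proof does not explicitly address. But the repair you propose does not work: expanding $C_{x_1,y_1}(s_1)$ via \cref{Matilda} gives $C_{x_1,y_1}(s_1)=C_{x_0,x_0}(s_1)-C_{y_0,y_0}(s_1)-C_{x_0,y_0}(-s_0)$ (the fourth term of \cref{Matilda} drops because $|\ell_0-s_1|=2\ell_0+s_0>\ell_0$), and the Golay identity $C_{x_0,x_0}(s_1)+C_{y_0,y_0}(s_1)=0$ turns the leading difference into $2C_{x_0,x_0}(s_1)$ — it \emph{doubles} that term rather than killing it, since the Golay property constrains the sum, not the difference. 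The extra $2C_{x_0,x_0}(s_1)$ vanishes only when $|s_1|=s_0+\ell_0\ge\ell_0$, i.e. when $s_0\ge 0$, and \cref{Nelson}\eqref{Greg} guarantees $s_0\ge 0$ at $m=0$ only when $\ell_0=1$. For a general seed pair with $\ell_0>1$ and $-\ell_0<s_0<0$ your argument does not establish $f_{0,1}=-f_{1,0}$ (and the analogous issue appears for $f_{1,1}=f_{1,0}$), so the $m=0$ branch is not closed as claimed.
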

\begin{proof}
Since $|s_k| \geq \ell_k$ for $k \in \{0,1,\ldots, m-1\}$, we know that $f_{0,k}=0$ for $k < m$.
Then for all $k \geq m$ we have $f_{0,k+3}-f_{0,k+2}-2f_{0,k+1}+4f_{0,k}=0$ by \cref{Cecilia} since $s_{k+3}$ and $s_{k+2}$ are negative by \cref{Nelson}\eqref{Jessica}.
This recursion has characteristic polynomial $X^3-X^2-2 X+4$ with roots $-\alpha_0$, $-\alpha_1$, and $-\alpha_2$.
Therefore,
\begin{equation}\label{Lawrence}
f_{0,n}= c_0 (-\alpha_0)^n + c_1 (-\alpha_1)^n + c_2 (-\alpha_2)^n
\end{equation}
for some constants $c_1$, $c_2$ and $c_3$. So we can solve a linear system to express $c_0$, $c_1$, and $c_2$ in terms of $f_{0,m}$, $f_{0,m+1}$, and $f_{0,m+2}$: \[
c_j = \frac{\alpha_{j+1} \alpha_{j+2} f_{0,m} + (\alpha_{j+1}+\alpha_{j+2}) f_{0,m+1} +f_{0,m+2}}{(-\alpha_j) ^m(\alpha_j - \alpha_{j+1})(\alpha_j - \alpha_{j+2})}
\]
for each $j \in \Z/3\Z$.
Now \cref{Nestor} with $n = m+2$ gives $f_{0,m+2}= - f_{1, m+1} + 2 f_{0,m}$ and when $n = m+1$, \cref{Nestor} gives $f_{1,m+1} = f_{1,m}$ (since $f_{0,k}=0$ when $k < m$) so that $f_{0, m+2} = -f_{1,m} + 2 f_{0,m}$. Also, \cref{Nestor} with $n=m+1$ gives $f_{0,m+1} = - f_{1,m}$ so we get
\[
c_j =(-\alpha_j)^{-m}\left[\frac{(2+\alpha_{j+1} \alpha_{j+2}) f_{0,m}-(1+\alpha_{j+1}+\alpha_{j+2})f_{1,m}}{(\alpha_j- \alpha_{j+1}) (\alpha_j - \alpha_{j+2})}\right]
\]
for $j \in \Z / 3\Z$.  We obtain the final result by substituting these values of $c_j$ into \eqref{Lawrence}.
\end{proof}
\subsection{Bounds from the generating function}\label{Mars}
Now we show how to use \cref{Lily} to obtain bounds on the peak crosscorrelation of Golay--Rudin--Shapiro sequences from bounds for sequences appearing earlier in the recursion.
This method almost provides an inductive proof of our bounds in Theorems \ref{Gale}--\ref{Tommy}, but it works only for crosscorrelation values at shifts satisfying a particular condition (the $n \geq m+9$ hypothesis in \cref{Mark} below), and so the chain of induction cannot always be completed.
It is possible to prove results similar to \cref{Mark} with less stringent conditions, but were not able to find a version that works in full generality.
Therefore we deemed it impractical to attempt an inductive proof Theorems \ref{Gale}--\ref{Tommy} based solely this generating function method, but we did find that this method works in the critical cases where the method from \cref{Iterating} fails; this enables us to prove our theorems in \cref{Final Bounds}.
\begin{proposition}\label{Mark}
Let $s_0,s_1,\ldots$ be a sequence of integers following the shift recursion rule, and let $m$ be the least nonnegative integer such that $|s_m| < \ell_m$. 
Suppose that $n \geq m+9$ and that $K$ is a positive real number and that $|C_{x_j,y_j}(s)| \leq K \alpha_0^j$ for all $s \in \Z$ and all $0 \leq j < n$.
Then $|C_{x_n,y_n}(s_n)| \leq K \alpha_0^n$.
\end{proposition}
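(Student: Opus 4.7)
The plan is to apply \cref{Lily} to express $f_{0,n} = \sigma^n(C_{x_n,y_n}(s_n))$ as a real linear combination of $f_{0,m}$ and $f_{1,m}$, bound each coefficient, and reduce the desired inequality to a combination of an analytic tail bound and a finite check verified in $\Q(\alpha_0)$ via the signifier method of \cref{Ollie}.

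First I set $\lambda_k := \sum_{j \in \Z/3\Z} E_{j,0}(-\alpha_j)^k$ and $\mu_k := \sum_{j \in \Z/3\Z} E_{j,1}(-\alpha_j)^k$, so that \cref{Lily} yields $f_{0,n} = \lambda_{n-m} f_{0,m} + \mu_{n-m} f_{1,m}$. Using $\alpha_0+\alpha_1+\alpha_2 = -1$ and $\alpha_0\alpha_1+\alpha_0\alpha_2+\alpha_1\alpha_2 = -2$ from Vieta, the coefficients simplify to $E_{j,0} = \alpha_j(1+\alpha_j)/m'(\alpha_j)$ and $E_{j,1} = \alpha_j/m'(\alpha_j)$, where $m'(X) = 3X^2+2X-2$. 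Since complex conjugation permutes $\{\alpha_1,\alpha_2\}$ and correspondingly $\{E_{1,v},E_{2,v}\}$, both $\lambda_k$ and $\mu_k$ are real; moreover, they satisfy the linear recursion $c_{k+3} = c_{k+2} + 2c_{k+1} - 4c_k$ (whose characteristic polynomial has roots $-\alpha_0,-\alpha_1,-\alpha_2$) with initial values $(\lambda_0,\lambda_1,\lambda_2) = (1,0,2)$ and $(\mu_0,\mu_1,\mu_2) = (0,-1,-1)$, which I would obtain from \cref{Nestor} via $f_{0,m+1} = -f_{1,m}$ and $f_{0,m+2} = 2f_{0,m} - f_{1,m}$; hence $\lambda_k,\mu_k \in \Z$. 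Because $|\sigma^a(x)| = |x|$, the hypothesis $|C_{x_m,y_m}(s)| \leq K\alpha_0^m$ gives $|f_{v,m}| \leq K\alpha_0^m$, so the triangle inequality reduces the proposition to the purely algebraic statement
\begin{equation*}
|\lambda_k| + |\mu_k| \leq \alpha_0^k \qquad \text{for every } k \geq 9.
\end{equation*}

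I would then split this inequality into a tail case and a finite case. For the tail, let $\rho := |\alpha_1|/\alpha_0 = \sqrt{(\alpha_0^2-1)/2}$ (from \cref{Gilda}). Combining $|E_{2,v}| = |E_{1,v}|$ with the triangle inequality on the closed forms of $\lambda_k$ and $\mu_k$ yields
\begin{equation*}
\frac{|\lambda_k| + |\mu_k|}{\alpha_0^k} \leq \left(|E_{0,0}| + |E_{0,1}|\right) + 2\left(|E_{1,0}| + |E_{1,1}|\right)\rho^k,
\end{equation*}
and a direct calculation with the simplified $E_{j,v}$ gives $|E_{0,0}|+|E_{0,1}| = (9\alpha_0^2+4\alpha_0+6)/59 < 1$. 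Since $\rho < 1$, the right-hand side is decreasing in $k$, so I would identify the least integer threshold $T$ for which it drops to at most $1$ and verify that single inequality between elements of $\Q(\alpha_0)$ via the signifier method. For each remaining $k \in \{9,10,\ldots,T-1\}$, I would compute the specific integers $\lambda_k$ and $\mu_k$ using the recursion and verify $|\lambda_k|+|\mu_k| \leq \alpha_0^k$ by comparing the left-hand integer with the element $\alpha_0^k \in \Q(\alpha_0)$, again through the signifier method of \cref{James}.

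The main obstacle is that the naive tail bound in the second inequality above fails at $k=9$: the oscillating contributions from $\alpha_1,\alpha_2$ are not yet dominated by their exponential decay, so the worst-case bound there numerically exceeds $1$. The success of the argument for small $k$ depends on the observation that for each specific value $k \in \{9,\ldots,T-1\}$ the phases of the oscillating components do not conspire to achieve that worst-case sum, so the actual integers $|\lambda_k|+|\mu_k|$ remain below $\alpha_0^k$. Verifying this for each small $k$ requires explicit computation rather than a clean analytic estimate, which is why the hypothesis $n \geq m+9$ (rather than something much larger) can be maintained only at the cost of a finite case-check.
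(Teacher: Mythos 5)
Your proposal correctly reproduces the skeleton of the paper's argument — it invokes \cref{Lily}, reduces via the triangle inequality and the inductive hypothesis $|f_{v,m}|\le K\alpha_0^m$ to the clean algebraic claim $|\lambda_k|+|\mu_k|\le\alpha_0^k$ for $k\ge9$ (in the paper's notation this is $|F_{0,k}|+|F_{1,k}|\le 1$), and correctly identifies that the comparisons will ultimately be settled via the signifier machinery of \cref{Ollie}. Your simplification $E_{j,0}=\alpha_j(1+\alpha_j)/m'(\alpha_j)$, $E_{j,1}=\alpha_j/m'(\alpha_j)$ is also a pleasant cleanup that the paper does not bother with.

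Where you diverge — and where you correctly self-diagnose trouble — is in how you bound $|\lambda_k|+|\mu_k|$. You apply the triangle inequality to $\lambda_k$ and $\mu_k$ \emph{separately}, giving
\[
\frac{|\lambda_k|+|\mu_k|}{\alpha_0^k}\le\bigl(|E_{0,0}|+|E_{0,1}|\bigr)+2\bigl(|E_{1,0}|+|E_{1,1}|\bigr)\rho^k,
\]
and as you observe, this is simply too loose at $k=9$ (numerically the right-hand side is around $1.28$ there), forcing you into a finite case-check out to roughly $k\approx 17$ or $18$, none of which you have actually carried out. The paper's key trick, which you miss, is to avoid decoupling $\lambda_k$ and $\mu_k$: since both are real, one has $|\lambda_k|+|\mu_k|=\max_{u\in\{0,1\}}|\lambda_k+(-1)^u\mu_k|$, and $\lambda_k+(-1)^u\mu_k=\pm\alpha_0^k\bigl(G_{0,u}+2\Re(G_{1,u}\rho^k)\bigr)$ with $G_{j,u}=E_{j,0}+(-1)^uE_{j,1}$. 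Bounding $|\Re(G_{1,u}\rho^k)|\le|G_{1,u}|\,|\rho|^k$ and using $|G_{1,u}|=|E_{1,0}\pm E_{1,1}|$, which can be substantially smaller than $|E_{1,0}|+|E_{1,1}|$, is precisely what makes the estimate tight enough to close at $k=9$ with no finite check at all: one verifies the single squared inequality $|\rho|^{18}\le(1-G_{0,u})^2/(4G_{1,u}G_{2,u})$ for $u\in\{0,1\}$ directly in $\Q(\alpha_0)$. So your route is not wrong, but it buys you nothing over the paper's and costs you roughly nine extra integer-vs-irrational signifier checks (plus a threshold inequality that, as written, involves a sum of two square roots and would need to be squared twice before it becomes a $\Q(\alpha_0)$ comparison). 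You should also note that your claim $|E_{0,0}|+|E_{0,1}|=(9\alpha_0^2+4\alpha_0+6)/59$ tacitly assumes $E_{0,0},E_{0,1}\ge0$, which is true but needs its own signifier verification before the absolute values can be dropped.
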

\begin{proof}
For each $j \in \Z/3\Z$, $v \in \{0,1\}$, and $k \in \N$, define $E_{j,v}$ and $f_{v,k}$ as in \cref{Lily}.
Define $F_{v,g}  =\sum_{j \in \Z/3\Z} E_{j,v} (-\alpha_j/\alpha_0)^g $ for each $v \in \{0,1\}$ and $g \in \N$. 
We change the order of summation in the expression from \cref{Lily} to obtain
\[
f_{0,n} = \sum_{v \in \{0,1\}} f_{v,m} F_{v,n-m} \alpha_0^{n-m}.
\]
Divide both sides by $K \alpha_0^n$ and take the absolute value to get 
\begin{align*}
\frac{|f_{0,n}|}{K\alpha_0^n} & = \left| \sum_{v \in \{0,1\}} \frac{f_{v,m}}{K \alpha_0^m} F_{v,n-m} \right| \\
& \leq \sum_{v \in \{0,1\}} |F_{v,n-m}|,
\end{align*}
where the inequality comes from the triangle inequality and the fact that for each $v \in \{0,1\}$, we have $|f_{v,m}|/(K\alpha_0^m)=|C_{x_m,y_m}((-1)^v s_m)|/(K\alpha_0^m)$, which lies in the real interval $[0,1]$ by assumption.
Therefore, if we can show $F_{0,n-m} + (-1)^u F_{1,n-m}$ lies in the real interval $[-1,1]$ for each $u \in \{0,1\}$, then we will be done.

Define $G_{j,u} = E_{j,0} + (-1)^u E_{j,1}$ for each $u \in \{0,1\}$ and $j \in \Z/3\Z$.
So it suffices to show that $\sum_{j \in \Z/3\Z} G_{j,u} (-\alpha_j/\alpha_0)^{n-m}$ lies in $[-1,1]$, or equivalently, that $\sum_{j \in \Z/3\Z} G_{j,u} (\alpha_j/\alpha_0)^{n-m}$ lies in $[-1,1]$, for each $u \in \{0,1\}$.

Since $\alpha_0$ is real and $\alpha_1$ and $\alpha_2$ are a conjugate pair of non-reals, we see that $E_{0,v}$ is real and $E_{1,v}$ and $E_{2,v}$ are a conjugate pair for each $v \in \{0,1\}$, so that $G_{0,u}$ is real and $G_{1,u}$ and $G_{2,u}$ are a conjugate pair for each $u \in \{0,1\}$.
So it suffices to show that $ G_{0,u} + 2\Re(\rho^{n-m} G_{1,u})$ lies in $[-1,1]$, where $\rho=\alpha_1/\alpha_0$, or equivalently, it suffices to show that $\Re(\rho ^{n-m} G_{1,u})$ lies in $[(-1-G_{0,u})/2, (1-G_{0,u})/2]$ for each $u \in \{0,1\}$.

We may now use the techniques of \cref{Ollie} to show that $0< G_{0,u} <1$ for $u \in \{0,1\}$.
Thus it will suffice to show that $\Re(\rho ^{n-m} G_{1,u})$ lies in the smaller interval $[(-1+G_{0,u})/2, (1-G_{0,u})/2]$ for each $u \in \{0,1\}$.
At the end of \cref{Ollie}, it was noted that $|\alpha_1| < |\alpha_0|$, so $|\rho|<1 $.
Since $n-m \geq 9$, it suffices to prove that $|\rho|^9 \leq {(1-G_{0,u})}/{(2 |G_{1,u}|)}$, which is true if and only if $|\rho|^{18}\leq {(1-G_{0,u})^2}/{(4 |G_{1,u}|^2)}$, i.e., $\alpha_1^9 \alpha_2^9/\alpha_0^{18}\leq {(1-G_{0,u})^2}/{(4 G_{1,u} G_{2,u})}$, a fact that can be checked for each $u \in \{0,1\}$ using the procedures laid out in \cref{Ollie}.
\end{proof}

\section{Bounds from shifts in the recursion}\label{Final Bounds}

We are now ready to combine the bound from \cref{Velma} with the bound from \cref{Mark} into an inductive proof that gives a bound on the peak crosscorrelation of sequence pairs produced by the Golay--Rudin--Shapiro recursion.
Then we deduce bounds on the peak sidelobe level from these.
In \cref{Penny}, we show that shifts for which we cannot get a good bound on their crosscorrelation using \cref{Velma} are shifts for which we can get a good bound with \cref{Mark}.
Then in \cref{Eleanor}, we prove bounds on correlation for the particular case of  the Rudin--Shapiro sequences, and finally in \cref{Marisol} we prove more general bounds that cover the full range of Golay pairs generated by the Golay--Rudin--Shapiro recursion (\cref{Stan}).

Recall that throughout this paper $\ell_n$, $x_n$, $y_n$ are always as in \cref{Stan}: $\ell_0$ is a positive integer and $x_0$ and $y_0$ are nonzero polynomials in $\C[z]$ of degree less than $\ell_0$ such that $(x_0,y_0)$ is a Golay complementary pair with $C_{x_0,x_0}(0)=C_{y_0,y_0}(0)$, and for every $n \in \N$, we have $\ell_n=2^n \ell_0$ and $(x_n,y_n)$ is the $n$th Golay pair obtained from $\ell_0$ and the seed pair $(x_0,y_0)$ via the Golay--Rudin--Shapiro recursion: $x_n(z)=x_{n-1}(z)+z^{\ell_{n-1}} y_{n-1}(z)$ and $y_n(z)=x_{n-1}(z)-z^{\ell_{n-1}} y_{n-1}(z)$.

\subsection{Complementarity of Propositions \ref{Velma} and \ref{Mark}}\label{Penny}
We now show that the small range of shifts that the $t=10$ case of \cref{Velma} cannot handle can be dealt with using \cref{Mark}.
\begin{lemma}\label{Phyllis}
Let $n,s \in \Z$ with $n \geq 11$, and let $s_0,s_1,\ldots$ be the unique sequence following the shift recursion rule with $s_n=s$. Suppose that $s_n \in (-342 \ell_{n-10}, -341 \ell_{n-10})$, and  let $m$ be the least nonnegative integer such that $|s_m| < \ell_m$. Then $n-m \geq 11$.
\end{lemma}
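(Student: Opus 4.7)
The plan is to iterate the shift recursion rule backwards from $s_n$, computing an interval containing each $s_{n-k}$ for $k=1,2,\ldots,10$. Because $\ell_{n-j}=2^{10-j}\ell_{n-10}$ for $0\leq j\leq 10$, every bound along the way is an integer multiple of $\ell_{n-10}$, so the computation is purely arithmetic and preserves strict inequalities at every step. The goal is to derive $-\ell_{n-10}<s_{n-10}<0$, at which point \cref{Nelson}\eqref{Greg} will upgrade the naive conclusion $m\leq n-10$ to $m\leq n-11$.

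Rewriting the shift recursion rule in the form $s_{k-1}=-s_k-\ell_{k-1}$ and starting from $s_n\in(-342\ell_{n-10},-341\ell_{n-10})$, the first step uses $\ell_{n-1}=512\ell_{n-10}$ to give $s_{n-1}\in(-171\ell_{n-10},-170\ell_{n-10})$. Continuing in the same manner, I obtain in succession
\begin{gather*}
s_{n-2}\in(-86\ell_{n-10},-85\ell_{n-10}),\quad s_{n-3}\in(-43\ell_{n-10},-42\ell_{n-10}),\\
s_{n-4}\in(-22\ell_{n-10},-21\ell_{n-10}),\quad s_{n-5}\in(-11\ell_{n-10},-10\ell_{n-10}),\\
s_{n-6}\in(-6\ell_{n-10},-5\ell_{n-10}),\quad s_{n-7}\in(-3\ell_{n-10},-2\ell_{n-10}),\\
s_{n-8}\in(-2\ell_{n-10},-\ell_{n-10}),\quad s_{n-9}\in(-\ell_{n-10},0),\quad s_{n-10}\in(-\ell_{n-10},0).
\end{gather*}
In particular $|s_{n-10}|<\ell_{n-10}$, so by the definition of $m$ we immediately have $m\leq n-10$.

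To improve this to $m\leq n-11$, I would argue by contradiction. The interval $(-\ell_{n-10},0)$ contains only strictly negative integers, so $s_{n-10}<0$. Since $n\geq 11$, the index $n-10$ is at least $1$, so if $m=n-10$ then $m>0$ and \cref{Nelson}\eqref{Greg} would force $s_m=s_{n-10}\geq 0$, contradicting $s_{n-10}<0$. Hence $m<n-10$, i.e., $n-m\geq 11$.

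The ten backward applications of the shift recursion rule are mechanical, so no step poses a real obstacle. The one conceptual point worth noticing is that $s_{n-10}$ lands strictly on the negative side of the interval $(-\ell_{n-10},\ell_{n-10})$, and it is precisely this asymmetry that \cref{Nelson}\eqref{Greg} exploits to rule out $m=n-10$.
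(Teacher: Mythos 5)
Your proof is correct, and it follows a genuinely different route from the paper's. The paper's own argument traces all the way back to $s_0$: after disposing of the case $|s_0|<\ell_0$, it expresses $(-1)^n s_0 = s_n - t_n$ via \cref{Eric}, bounds $s_0$ from the given interval for $s_n$, and then applies the logarithmic characterization of $m$ from \cref{Lody}, splitting into four subcases according to the sign of $s_0$ and the parity of $n$. You instead run the shift recursion backwards exactly ten steps, propagating the interval through $s_{n-1},\ldots,s_{n-10}$; the endpoints, which are integer multiples of $\ell_{n-10}$, collapse from $(-342,-341)$ down to $(-1,0)$ (times $\ell_{n-10}$), and I verified that each of your intermediate intervals is correct. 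This gives $m\leq n-10$ at once, and the key last step --- noticing that $s_{n-10}$ lands strictly on the negative side, so that $m=n-10$ would contradict \cref{Nelson}\eqref{Greg} since $n-10\geq 1$ --- delivers $m\leq n-11$. Your approach is more elementary and entirely avoids the parity/sign case analysis; the paper's closed-form method via \cref{Lody} is less hands-on but would adapt more systematically if the number of iterations (here $10$) were changed, since it does not require recomputing a chain of intervals. Both proofs ultimately harvest the same one-step improvement from the asymmetry of the terminal interval.
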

\begin{proof}
If $|s_0| < \ell_0$, then $m=0$ and $n-m=n-0 \geq11$ by assumption.
So let us assume $|s_0| \geq \ell_0$ henceforth.
Let $t_0, t_1, \ldots$ be the standard shift sequence as in \cref{Kirk}.
Recall from \cref{Eric} that $(-1)^n s_0 = s_n-t_n$. Since $s_n $ lies in $(-342 \ell_{n-10}, -341 \ell_{n-10})$ and $t_n = ((-1)^n-2^n)\ell_0/3$, we have 
\[
(-1)^ n s_0 =s_n-t_n \in \left(\frac{-2 \cdot 2^{n-10} \ell_{0} -(-1)^n \ell_{0}}{3}, \frac{2^{n-10}\ell_{0} -(-1)^n \ell_{0}}{3}\right).
\]
In the case where $s_0 \geq \ell_0$, \cref{Lody} says that $m$ is the least even integer with $m>-2 + \log _2\left({(3 s_0 + \ell_{0})}/{\ell_{0}}\right)$.
When $n$ is even (resp., odd), we have $s_0 \in \left[\ell_0, {(2^{n-10} \ell_0 -\ell_{0})}/{3}\right)$ (resp., $s_0 \in \left[\ell_0, ({2^{n-9} \ell_{0} -\ell_{0}})/{3}\right)$), so then $-2 + \log _2\left( {(3 s_0 + \ell_{0})}/{\ell_{0}}\right)$ lies in $[0,n-12)$ (resp., $[0,n-11)$), and since $m$ is the least even integer greater than this, we have $m \leq n-12$ (resp., $m \leq n-11$). 
	
In the case where $s_0 \leq -\ell_0$, \cref{Lody} says that $m$ is the least odd integer with $m>-2 + \log _2\left({-(3 s_0 + \ell_{0})}/{\ell_{0}}\right)$.
When $n$ is even (resp., odd), we have $s_0 \in \left({(-2^{n-9} \ell_{0} -\ell_{0})}/{3},- \ell_{0}\right]$ (resp., $s_0 \in  \left({(-2^{n-10} \ell_{0} -\ell_{0})}/{3},- \ell_{0}\right]$), so then $-2 + \log _2\left( {-(3 s_0 + \ell_{0})}/{\ell_{0}}\right)$ lies in $[-1,n-11)$ (resp., $[-1,n-12)$), and since $m$ is the least odd integer greater than this, we have $m \leq n-11$ (resp., $m \leq n-12$). 
\end{proof}

\subsection{Bounds on peak correlation for Rudin--Shapiro sequences}\label{Eleanor}

Recall the algebraic number $\alpha_0= 1.658967\ldots$, which appears in our bounds in Theorems \ref{Gale}--\ref{Tommy}, and whose technical details are described in \cref{Ollie}.
\cref{Phyllis} now enables us to use \cref{Velma} and \cref{Mark} to prove the upper bounds in Theorems \ref{Gale} and \ref{Gabby}.
\begin{theorem}\label{Destiny}
If $n \in \N$ and $x_n$ and $y_n$ are the $n$th Rudin--Shapiro sequence and its companion, then $\PCC(x_n,y_n) \leq 5 \alpha_0^{n-3}=(1.095107\ldots) \alpha_0 ^n$. This upper bound becomes an equality if $n=3$; otherwise it is a strict inequality.
We also have $\PSL(x_n)= \PSL(y_n)\leq 5 \alpha_0^{n-4}=(0.660113...)\alpha_0 ^n$.
This upper bound becomes an equality if $n=4$; otherwise it is a strict inequality.
\end{theorem}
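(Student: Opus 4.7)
The plan is to prove the peak crosscorrelation bound by strong induction on $n$ with the constant $K = 5\alpha_0^{-3}$, establishing $\PCC(x_n,y_n) \leq K\alpha_0^n$ for all $n \in \N$, and then deduce the peak sidelobe bound from \cref{Kenneth}. First I would handle the base cases $0 \leq n \leq 10$ directly: the values in \cref{Aeneas} give $\PCC(x_n,y_n) \in \{1,1,3,5,7,13,19,33,53,85,153\}$, and the signifier techniques of \cref{Ollie} verify that each of these is strictly less than $5\alpha_0^{n-3}$ except at $n=3$, where the equality $5 = 5\alpha_0^0$ is immediate. This both anchors the induction and settles the equality and strictness claims for small $n$.

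For the inductive step, fix $n \geq 11$ and assume the bound $\PCC(x_m,y_m) \leq K\alpha_0^m$ for all $m < n$. Given any $s \in \Z$, I split into two cases depending on whether $s$ belongs to the exceptional set $E_{10} = (-342,-341) \cdot \ell_{n-10}$ from \cref{Velma}. If $s \notin E_{10}$, then \cref{Velma} applied with $t = 10$ (and the hypothesis supplied by the induction hypothesis) yields $|C_{n,s}| \leq K\alpha_0^n$ immediately. If $s \in E_{10}$, I invoke the complementary machinery of \cref{Shifts}: let $s_0,s_1,\ldots$ be the unique sequence following the shift recursion rule with $s_n = s$, and let $m$ be the least nonnegative integer with $|s_m| < \ell_m$. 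By \cref{Phyllis}, we have $n - m \geq 11 \geq 9$, so \cref{Mark} applies and gives $|C_{x_n,y_n}(s_n)| \leq K\alpha_0^n$. Taking the maximum over all $s$ completes the inductive step.

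To upgrade to strict inequality for $n \neq 3$, I would observe that every inequality invoked in the inductive step is strict: the finitely many arithmetic comparisons at the end of \cref{Velma} (such as $33\alpha_0 < \alpha_0^8$ and $21\alpha_0 + 14 < \alpha_0^8$) are strict, and the comparison $|\rho|^9 < (1-G_{0,u})/(2|G_{1,u}|)$ inside the proof of \cref{Mark} is also strict since $|\rho| < 1$ and $n - m \geq 11 > 9$. Combined with strict inequality at the base cases $n \neq 3$, this propagates to all $n \geq 11$, $n \neq 3$. The peak sidelobe level bound then follows from \cref{Kenneth}: for $n \geq 1$, we have $\PSL(x_n) = \PSL(y_n) = \PCC(x_{n-1},y_{n-1}) \leq K\alpha_0^{n-1} = 5\alpha_0^{n-4}$, with equality if and only if $n-1 = 3$, i.e., $n = 4$; the trivial case $n = 0$ gives $\PSL(x_0) = 0 < 5\alpha_0^{-4}$.

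The main obstacle will be the bookkeeping that ensures \cref{Velma} and \cref{Mark} together cover every shift. \cref{Velma} leaves exactly the narrow window $E_{10}$ uncovered, and the content of \cref{Phyllis} is precisely that shifts in this window correspond via the shift recursion to a starting index $m$ so far back that $n - m \geq 11$, meeting the hypothesis of \cref{Mark}. Once this complementarity is in place, the remaining work is routine: tracking strict versus non-strict inequalities through the chain of estimates, and applying the rational-arithmetic certification scheme of \cref{Ollie} to validate the several algebraic comparisons involving $\alpha_0$ that appear in both proofs.
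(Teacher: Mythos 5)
Your proposal follows the same overall architecture as the paper's proof: strong induction with $K=5\alpha_0^{-3}$, base cases for $n\leq 10$, and for $n\geq 11$ the complementary use of \cref{Velma} (with $t=10$) and \cref{Mark} via \cref{Phyllis}. The induction anchor is handled a bit differently — you verify $\PCC(x_n,y_n)\leq 5\alpha_0^{n-3}$ directly from the tabulated peaks in \cref{Aeneas}, whereas the paper applies \cref{Velma} with $t=n-1$ even for $2\leq n\leq 10$ to cut down the shifts to be checked and then reads off crosscorrelation values from \cref{Destiny Table}. Both are valid; yours is more direct but leans wholly on the computed peaks, while the paper's exercises the same reduction machinery at small $n$ and so needs only the specific correlation values in \cref{Destiny Table}.

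The place where your route diverges most from the paper, and where it becomes needlessly delicate, is the strictness claim for $n\neq 3$. You propose propagating strictness through the inductive chain by observing that the algebraic comparisons at the end of \cref{Velma} and in \cref{Mark} are strict. This can be made to work (the internal comparisons such as $109\alpha_0+66<\alpha_0^{11}$, $153\alpha_0<\alpha_0^{11}$, and $|\rho|^{n-m}<|\rho|^9\leq(1-G_{0,u})/(2|G_{1,u}|)$ are indeed strict when $n-m\geq 11>9$), but it requires you to recheck every arithmetic comparison for strictness, including the potentially non-strict inductive hypothesis at $m=3$ where $\PCC(x_3,y_3)=K\alpha_0^3$ exactly. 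The paper sidesteps all of this with one clean observation (from \cref{Ulysses}): $5\alpha_0^{n-3}$ is irrational whenever $n\neq 3$, while $\PCC(x_n,y_n)$ is an integer because $x_n,y_n$ are binary sequences; hence the non-strict upper bound is automatically strict. That argument is simpler, more robust, and requires no additional numerical verification; I would adopt it in place of the strictness-propagation approach. The $\PSL$ deduction via \cref{Sylvester}, \cref{Kenneth}, and the $n=0$ trivial case is handled the same way in both.
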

\begin{proof}
We begin by proving the bound on $\PCC(x_n,y_n)$.
We write $C_n$ for $x_n\conj{y_n}$, and then $C_{n,s}$ is the coefficient of $z^s$ in $C_n$; this coefficient is the same as $C_{x_n,y_n}(s)$.
We let $K=5\alpha_0^{-3}$, and we need to show that $|C_{n,s}| \leq K \alpha_0^n$ for all $n \in \N$ and $s \in \Z$.
We proceed by induction on $n$.
We shall prove the $n=0$ and $n=1$ cases directly below.
When $n \geq 2$, we set $t=n-1$ and use \cref{Velma} (noting that $\ell_{n-t}=\ell_1=2\ell_0=2$ for our Rudin--Shapiro sequences) to establish the bound for most shifts, leaving only some exceptional shifts that need to be checked.
When $n > 0$, we also pass over even values of $s$ in silence, since $C_{n,s}=0$ for those values by \cref{Vladwick}.
In what follows, when we assert an inequality involving powers of $\alpha_0$, we have verified it using the methods of \cref{Ollie}.
\begin{enumerate}
\setcounter{enumi}{-1}
\item If $n=0$, then since $x_0$ and $y_0$ are of degree $0$, we need only consider $s=0$, for which, using \cref{Destiny  Table}, we find that $C_{0,0} = 1 \leq 5 \alpha_0^{-3} = K \alpha_0^0$.
\item If $n=1$, then since $x_1$ and $y_1$ are of degree $1$, we need only consider $s \in \{-1,1\}$ for which, using \cref{Destiny Table}, we find that $|C_{1,s}| = 1 \leq 5 \alpha_0^{-2}= K \alpha_0^1 $.
\item If $n=2$, then \cref{Velma} with $t=1$ shows that we need only consider $s \in \{-1,1\}$ for which, using \cref{Destiny Table}, we find that $|C_{2,s}| \leq 3 \leq 5 \alpha_0^{-1} = K \alpha_0^2$.
\item If $n=3$, then \cref{Velma} with $t=2$ shows that we need only consider $s \in \{-3,1,1,3\}$ for which, using \cref{Destiny Table}, we find that $|C_{3,s}| \leq 5=K \alpha_0^3$.
\item If $n=4$, then \cref{Velma} with $t=3$ shows that we need only consider $s \in \{-7,-5,3,5\}$ for which, using \cref{Destiny Table}, we find that $|C_{4,s}| \leq 7 \leq 5\alpha_0 = K \alpha_0^4$.
\item If $n=5$, then \cref{Velma} with $t=4$ shows that we need only consider $s \in \{-11,5,9,11\}$ for which, using \cref{Destiny Table}, we find that $|C_{5,s}| \leq 13 \leq 5\alpha_0^2 = K \alpha_0^5$.
\item If $n=6$, then \cref{Velma} with $t=5$ shows that we need only consider  $s \in \{-23,-21,11,19,21\}$ for which, using \cref{Destiny Table}, we find that $|C_{6,s}| \leq 15\leq 5\alpha_0^3= K \alpha_0^6$.
\item If $n=7$, then \cref{Velma} with $t=6$ shows that we need only consider $s \in \{-45,-43,21,23,37,43\}$ for which, using \cref{Destiny Table}, we find that $|C_{7,s}| \leq 33\leq 5\alpha_0^4 = K \alpha_0^7$.
\item If $n=8$, then \cref{Velma} with $t=7$ shows that we need only consider $s \in \{-91,-85,43,75,85\}$ for which, using \cref{Destiny Table}, we find that $|C_{8,s}| \leq 49\leq 5\alpha_0^5 = K \alpha_0^8$.
\item If $n=9$, then \cref{Velma} with $t=8$ shows that we need only consider $s \in \{-181,-171,85,149\}$ for which, using \cref{Destiny Table}, we find $|C_{9,s}| \leq 83\leq 5\alpha_0^6 =K \alpha_0^9$.
\item If $n=10$, then \cref{Velma} with $t=9$ shows that we need only consider $s \in \{-363,-361,-341,299\}$ for which, using \cref{Destiny Table}, we find that $|C_{10,s}| \leq 153\leq 5\alpha_0^7 =K \alpha_0^{10}$.
\item Now suppose that $n \geq 11$ and that $|C_{j,s}|\leq K \alpha_0 ^j$ for all $0 \leq j<n$ and $s \in \Z$.
Then \cref{Velma} with $t=10$ shows us that we need only consider $s \in (-342 \ell_{n-10},-341 \ell_{n-10})$. Define $s_0,s_1,\ldots$ to be the unique sequence following the shift recursion rule with $s_n=s$.  Define $m$ to be the least nonnegative integer such that $|s_m| < \ell_m$. Then by \cref{Phyllis}, $n-m \geq 11 \geq 9$. Therefore \cref{Mark} shows $|C_{n,s} |\leq K \alpha_0 ^n$. 
\end{enumerate}
Thus, we have proved that $\PCC(x_n,y_n) \leq 5 \alpha_0 ^{n-3}$ for every $n\in\N$.
This inequality becomes an equality when $n=3$, because $|C_{x_3,y_3}(-3)|=|-5|=5 \alpha_0^{3-3}$ (see \cref{Destiny Table}).
When $n\not=3$, we know that $5\alpha_0^{n-3}$ is irrational by \ref{Ulysses}, and since $\PCC(x_n,y_n)$ must be an integer (since $x_n$ and $y_n$ are binary sequences), the upper bound must be a strict equality.

Since $x_0=y_0=1$, we have $C_{x_0,x_0}(s)=C_{y_0,y_0}(s)=0$ for all nonzero $s$, so $\PSL(x_0)=\PSL(y_0)=0 < 5\alpha_0^{0-4}$.
For $n > 0$ we have $\PSL(y_n) = \PSL(x_n) = \PCC(x_{n-1},y_{n-1}) \leq 5 \alpha_0 ^{(n-1)-3}$ (where the inequality becomes an equality if and only if $n-1=3$); the first equality comes from \cref{Sylvester}, the second equality is from \cref{Kenneth}, and the inequality is what we just proved.
Furthermore, we can show $5 \alpha_0^{n-3}=1.095107\ldots$ and $5 \alpha_0^{-4}=0.660113...$ using the methods of \cref{Ollie}.
\end{proof}
One might wonder if it is possible to devise an exponential bound for $\PCC(x_n,y_n)$ with a lower base than $\alpha_0$.
The following result shows that this is not possible.
\begin{proposition}\label{George}
If $x_n$ and $y_n$ the $n$th Rudin--Shapiro sequence and its companion for each $n \in \N$, and if $t_0,t_1,\ldots$ is the standard shift sequence, then
\[
\left|\frac{C_{x_n,y_n}(t_n)}{(-\alpha_0)^n} - \frac{9 \alpha_0^2 + 4 \alpha_0 + 6}{59}\right| \leq  2 \sqrt{\frac{-4\alpha_0^2 + 2\alpha_0 + 14}{59}} \left(\sqrt{\frac{\alpha_0^2-1}{2}}\right)^n,
\]
that is,
\[
\left|\frac{C_{x_n,y_n}(t_n)}{(-\alpha_0)^n} - 0.633990\ldots\right| \leq (0.654022\ldots)(0.935994\ldots)^n,
\]
and so,
\begin{align*}
\frac{\PCC(x_n,y_n)}{\alpha_0^n}
& \geq \frac{9 \alpha_0^2 + 4 \alpha_0 + 6}{59} - 2 \sqrt{\frac{-4\alpha_0^2 + 2\alpha_0 + 14}{59}} \left(\sqrt{\frac{\alpha_0^2-1}{2}}\right)^n \\
&  = (0.633990\ldots) - (0.654022\ldots)(0.935994\ldots)^n.
\end{align*}
We also have
\begin{align*}
\frac{\PSL(x_n)}{\alpha_0^n}
& \geq \frac{3\alpha_0^2+21\alpha_0+2}{118} - \sqrt{\frac{6\alpha_0^2 + 6\alpha_0 -16}{59}} \left(\sqrt{\frac{\alpha_0^2-1}{2}}\right)^n \\
&  = (0.382159\ldots) - (0.421193\ldots)(0.935994\ldots)^n.
\end{align*}
\end{proposition}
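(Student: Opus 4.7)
The plan is to specialize \cref{Lily} to the Rudin--Shapiro setup, exhibit $|C_{x_n,y_n}(t_n)|$ as a lower bound on $\PCC(x_n,y_n)$, and then deduce the $\PSL$ bound via \cref{Kenneth}. For the Rudin--Shapiro sequences we have $\ell_0=1$ and $x_0=y_0=1$, so the standard shift $t_0=0$ satisfies $|t_0|<\ell_0$ and the least $m$ with $|t_m|<\ell_m$ is $m=0$. Since $C_{x_0,y_0}(0)=1$, both initial data $f_{0,0}$ and $f_{1,0}$ equal $1$, and \cref{Lily} yields
\[
f_{0,n}=\sum_{j\in\Z/3\Z}G_j(-\alpha_j)^n,\qquad G_j:=E_{j,0}+E_{j,1}.
\]
Because $C_{x_n,y_n}(t_n)$ is an integer (hence fixed by $\sigma$), we have $f_{0,n}=C_{x_n,y_n}(t_n)$. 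A short algebraic simplification collapses the definitions of $E_{j,0}$ and $E_{j,1}$ into the Lagrange-basis form
\[
G_j=\frac{(1-\alpha_{j+1})(1-\alpha_{j+2})}{(\alpha_j-\alpha_{j+1})(\alpha_j-\alpha_{j+2})}.
\]

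Dividing by $(-\alpha_0)^n$ produces
\[
\frac{C_{x_n,y_n}(t_n)}{(-\alpha_0)^n}=G_0+G_1(\alpha_1/\alpha_0)^n+G_2(\alpha_2/\alpha_0)^n,
\]
in which $G_0$ is real and $G_1,G_2$ are complex conjugates (as are $\alpha_1,\alpha_2$), so the last two terms sum to a real quantity of absolute value at most $2|G_1|\,|\alpha_1/\alpha_0|^n$. Using the Lagrange form together with $m(1)=-4$ and $m'(\alpha_0)=(\alpha_0-\alpha_1)(\alpha_0-\alpha_2)=3\alpha_0^2+2\alpha_0-2$, I would compute $G_0=4/[(\alpha_0-1)(3\alpha_0^2+2\alpha_0-2)]$, and (using that the discriminant of $m$ is $-236$, so $(\alpha_1-\alpha_2)^2=-236/(3\alpha_0^2+2\alpha_0-2)^2$) $G_1G_2=(\alpha_0-1)(3\alpha_0^2+2\alpha_0-2)/59$; reducing both expressions modulo $m(\alpha_0)=0$ via the procedure of \cref{Ollie} yields
\[
G_0=\frac{9\alpha_0^2+4\alpha_0+6}{59},\qquad |G_1|^2=G_1G_2=\frac{-4\alpha_0^2+2\alpha_0+14}{59}.
\]
Invoking \cref{Gilda} for $|\alpha_1/\alpha_0|=\sqrt{(\alpha_0^2-1)/2}$ then gives the first displayed bound on $C_{x_n,y_n}(t_n)/(-\alpha_0)^n$; the reverse triangle inequality (trivial when the right-hand side is negative) plus $\PCC(x_n,y_n)\geq|C_{x_n,y_n}(t_n)|$ yields the stated lower bound on $\PCC(x_n,y_n)/\alpha_0^n$.

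For the $\PSL$ bound, \cref{Kenneth} gives $\PSL(x_n)=\PCC(x_{n-1},y_{n-1})$ for $n\geq1$, and the case $n=0$ is trivial because the claimed right-hand side is negative. Shifting the $\PCC$ bound by one and dividing by $\alpha_0^n$ produces the prefactor $G_0/\alpha_0$ and decay coefficient $2|G_1|/(\alpha_0|\rho|)$ with $\rho=\alpha_1/\alpha_0$. Using the identity $1/\alpha_0=(\alpha_0^2+\alpha_0-2)/4$ obtained from $m(\alpha_0)=0$, I would expand and reduce to obtain $G_0/\alpha_0=(3\alpha_0^2+21\alpha_0+2)/118$, and squaring the decay coefficient and reducing similarly would give $(2|G_1|/(\alpha_0|\rho|))^2=(6\alpha_0^2+6\alpha_0-16)/59$, matching the stated constants.

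The main obstacle is purely computational: the Lagrange expressions for $G_0$ and $G_1G_2$ must be reduced to canonical form in $\Q(\alpha_0)$, and a second pair of reductions is required to convert the $\PCC$ prefactors into the $\PSL$ prefactors. None of the steps is subtle, but each involves multiplying out several cubic factors and applying the relations $\alpha_0^3=-\alpha_0^2+2\alpha_0+4$ and $1/\alpha_0=(\alpha_0^2+\alpha_0-2)/4$; the machinery of \cref{Ollie} renders these verifications rigorous via purely rational arithmetic, and the signifier function of \cref{Abby} confirms that the decimal approximations claimed in the proposition are correct.
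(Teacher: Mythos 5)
Your proposal is correct and follows essentially the same route as the paper: specialize \cref{Lily} to $m=0$ with $f_{0,0}=f_{1,0}=1$, split off the real term $E_0(-\alpha_0)^n$, bound the conjugate pair $E_1(-\alpha_1)^n+E_2(-\alpha_2)^n$ by $2|E_1||\alpha_1/\alpha_0|^n$, invoke \cref{Gilda}, and then transfer to $\PSL$ through \cref{Kenneth} with one extra division by $\alpha_0$. The one place you go further than the paper is in the algebraic bookkeeping: you notice that $E_{j,0}+E_{j,1}$ collapses to the Lagrange form $(1-\alpha_{j+1})(1-\alpha_{j+2})/\bigl((\alpha_j-\alpha_{j+1})(\alpha_j-\alpha_{j+2})\bigr)$, which lets you express $G_0$ and $G_1G_2$ cleanly via $m(1)=-4$, $m'(\alpha_0)=3\alpha_0^2+2\alpha_0-2$, and the discriminant $-236$, rather than appealing directly to the generic reduction machinery of \cref{Ollie}. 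That is a nice explicit presentation of the same computation the paper performs implicitly; both yield the identical constants, and your remaining steps (reverse triangle inequality, positivity of $G_0$, the trivial $n=0$ case for $\PSL$, and the identity $1/\alpha_0=(\alpha_0^2+\alpha_0-2)/4$) all check out.
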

\begin{proof}
We first prove the bound involving $C_{x_n,y_n}(t_n)$.
Recall that the standard shift sequence $t_0,t_1,\ldots$ follows the shift recursion rule by \cref{Bethany}, and it has $-\ell_0 < t_0=0 < \ell_0$.
So we can use $t_0,t_1,\ldots$ as $s_0,s_1,\ldots$ with $m=0$ in \cref{Lily}, whose notations for $\sigma$, $f_{v,k}$ and $E_{j,v}$ we also adopt.
Since $t_0=0$ and $m=0$, we have $f_{0,m}=C_{x_0,y_0}(0)=1$ and $f_{1,m}=\conj{C_{x_0,y_0}(-0)}=1$.
For each $j \in \Z / 3 \Z$, set $E_j=E_{j,0}+E_{j,1}$.
Then, \cref{Lily} tells us that $\sigma^n(C_{x_n,y_n}(t_n))=E_0 (-\alpha_0)^n + E_1 (-\alpha_1)^n + E_2 (-\alpha_2)^n$, and since $x_n$ and $y_n$ are the Rudin--Shapiro sequences, we know that $C_{x_n,y_n}(t_n) \in \Z$, so in fact $C_{x_n,y_n}(t_n)=E_0 (-\alpha_0)^n + E_1 (-\alpha_1)^n + E_2 (-\alpha_2)^n$.
If we divide both sides by $-\alpha_0^n$, subtract $E_0$, and take the absolute value (recalling that, we obtain
\[
\left|\frac{C_{x_n,y_n}(t_n)}{(-\alpha_0)^n} - E_0\right|=\left|E_1\left(\frac{\alpha_1}{\alpha_0}\right)^n +E_2\left(\frac{\alpha_2}{\alpha_0}\right)^n\right|.
\]
Since $\alpha_0$ is real while $\alpha_1$ and $\alpha_2$ are complex conjugates, one can deduce from the definition of $E_{j,v}$ in \cref{Lily} that $E_{1,v}$ and $E_{2,v}$ are a conjugate pair for each $v \in \{0,1\}$, and so $E_1$ and $E_2$ are complex conjugates.
Thus
\begin{equation}\label{Leslie}
\left|\frac{C_{x_n,y_n}(t_n)}{(-\alpha_0)^n} -E_0\right| \leq 2 |E_1| \left|\frac{\alpha_1}{\alpha_0}\right|^n.
\end{equation}
The methods of \cref{Ollie} show that $E_0=E_{0,0}+ E_{0,1}$ is a positive number equal to $(9 \alpha_0^2 + 4 \alpha_0 + 6)/59 = 0.633990\ldots$.
Since $E_1$ and $E_2$ are conjugates, we have $|E_1|^2 = E_1 E_2$; the methods of \cref{Ollie} show this to be equal to $(-4\alpha_0^2 + 2\alpha_0 + 14)/59$, and they also show that $0.654022^2 < 4 E_1 E_2 < 0.654023^2$, so that $2 |E_1|=\sqrt{4 E_1 E_2}=0.654022\ldots$.
\cref{Gilda} shows that $|\alpha_1/\alpha_0|=\sqrt{(\alpha_0^2-1)/2}=0.935994\ldots$.

By definition, $\PCC(x_n,y_n) \geq |C_{x_n,y_n}(t_n)|$, so we have
\begin{align}
\begin{split}\label{Clarence}
\frac{\PCC(x_n,y_n)}{\alpha_0^n}
& \geq \left|\frac{C_{x_n,y_n}(t_n)}{(-\alpha_0)^n}\right| \\
& \geq E_0 - 2 |E_1| \left|\frac{\alpha_1}{\alpha_0}\right|^n,
\end{split}
\end{align}
where the second inequality uses \eqref{Leslie}, and numerical approximations of the quantities in the last expression have already been discussed above.

We now prove the bound on $\PSL(x_n)$.
Suppose that $n>0$.
Then \cref{Kenneth} shows that $\PSL(x_n)=\PCC(x_{n-1},y_{n-1})$, and we can use the bound we already proved to show that
\begin{align*}
\frac{\PSL(x_n)}{\alpha_0^n}
& = \frac{1}{\alpha_0} \cdot \frac{\PCC(x_{n-1},y_{n-1})}{\alpha_0^{n-1}} \\
& \geq \frac{1}{\alpha_0} \left(E_0 - 2 |E_1| \left|\frac{\alpha_1}{\alpha_0}\right|^{n-1}\right) \\
& =\frac{E_0}{\alpha_0} - 2 \left|\frac{E_1}{\alpha_1}\right| \left|\frac{\alpha_1}{\alpha_0}\right|^n,
\end{align*}
where we use \eqref{Clarence} to obtain the inequality.
We already have an expression for $|\alpha_1/\alpha_0|$.
We can use the methods of \cref{Ollie} to show that $E_0/\alpha_0$ is a positive number equal to $(3\alpha_0^2+21\alpha_0+2)/118=0.382159\ldots$.
Since $\alpha_0$ is positive real, while $\alpha_1$ and $\alpha_2$ are complex conjugates, and $E_1$ and $E_2$ are complex conjugates, we know that $(2|E_1/\alpha_1|)^2=4 E_1 E_2/(\alpha_1\alpha_2)$; the methods of \cref{Ollie} show this to be equal to $(6\alpha_0^2 + 6\alpha_0 -16)/59$, and they also show that $0.421193^2 < 4 E_1 E_2/(\alpha_1\alpha_2) < 0.421194^2$, so that $2 |E_1/\alpha_1|=\sqrt{4 E_1 E_2/(\alpha_1\alpha_2)}=0.431193\ldots$.
This proves the desired bound on $\PSL(x_n)$ for $n>0$, and the desired bound is trivial for $n=0$, since it is asserting that $\PSL(x_0)$ is larger than a negative number.
\end{proof}
\begin{remark}\label{Sergio}
\cref{Destiny} and \cref{George}, when taken together, show that $0.633990\ldots  \leq \limsup_{n \to \infty} \PCC(x_n,y_n)/\alpha_0^n \leq 1.095107\ldots$ when $x_n$ and $y_n$ are the $n$th Rudin--Shapiro sequence and its companion.
Therefore, our bound in \cref{Destiny} is sharp in the following sense:
\begin{itemize}
\item No exponential bound with a base $\beta < \alpha_0$ is possible.
\item If the bound is to hold for all $n \in \N$, we cannot lower the constant prefactor $1.095107\ldots$ since when $n=3$, the upper bound is met.
\end{itemize}
Similarly, we have $0.382159\ldots  \leq \limsup_{n \to \infty} \PSL(x_n)/\alpha_0^n \leq 0.660113\ldots$, and our bound on $\PSL(x_n)$ (and on $\PSL(y_n)$ since $\PSL(x_n)=\PSL(y_n)$ by \cref{Sylvester}) is sharp in the same sense as the bound for $\PCC(x_n,y_n)$ is.
\end{remark}
\cref{George} gives a lower bound on $\PCC(x_n,y_n)$ which is not very strong when $n$ is small (indeed, it is negative when $n=0$).
We can use our computer calculations of $\PCC(x_n,y_n)$ for small $n$ to prove a lower bound that is much better for small $n$, although for large $n$ the bound in \cref{George} is better.
\begin{proposition}\label{Vanessa}
If $x_n$ and $y_n$ are the $n$th Rudin--Shapiro sequence and its companion for each $n \in \N$, then we have $\PCC(x_n,y_n) \geq 133991557 \alpha_0^{n-38} = (0.593256\ldots) \alpha_0^n$, where the $\geq$ becomes an equality when $n=38$, but is a strict inequality otherwise.
This lower bound on $\PCC(x_n,y_n)$ is strictly stronger than that of \cref{George} for $n\leq 41$, but strictly weaker for $n\geq 42$.
For $n\geq 1$, we have $\PSL(x_n)=\PSL(y_n)\geq 133991557 \alpha_0^{n-39}=(0.357605\ldots) \alpha_0^n$, where the $\geq$ becomes an equality when $n=39$, but is a strict inequality otherwise.
This lower bound on $\PSL(x_n)$ is strictly stronger than that of \cref{George} for $n\leq 42$, but strictly weaker for $n\geq 43$.
\end{proposition}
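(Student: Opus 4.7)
The plan is to split the argument at $n = 41$: for $0 \leq n \leq 41$ I would verify the bound directly from the precomputed data in \cref{Aeneas}, and for $n \geq 42$ I would derive it from the asymptotic bound of \cref{George}. The anchor is row $n = 38$ of \cref{Aeneas}, which records $C_{x_{38},y_{38}}(-91625968979) = 133991557$ as a peak value, yielding $\PCC(x_{38},y_{38}) = 133991557 = 133991557\,\alpha_0^{38-38}$ and establishing the equality case.

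For the small range, I would proceed entry by entry. For each $n \in \{0, 1, \ldots, 41\}$, read the integer $\PCC(x_n, y_n)$ from \cref{Aeneas}, form the element $v_n = \PCC(x_n,y_n) - 133991557\,\alpha_0^{n-38}$ of $\Q(\alpha_0)$ in the basis $\{1,\alpha_0,\alpha_0^2\}$ via the reduction procedure of \cref{Ollie}, and certify $v_n > 0$ (or $v_n = 0$ when $n = 38$) via the signifier test of \cref{Abby} and \cref{James}. This is $42$ routine rational-arithmetic checks.

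For the large range, set $L = 133991557\,\alpha_0^{-38}$; \cref{George} asserts
\[
\frac{\PCC(x_n,y_n)}{\alpha_0^n} \geq E_0 - 2|E_1|\,r^n,
\]
where $E_0 = (9\alpha_0^2+4\alpha_0+6)/59 \in \Q(\alpha_0)$, $|E_1|^2 = (-4\alpha_0^2+2\alpha_0+14)/59 \in \Q(\alpha_0)$, and $r = |\alpha_1/\alpha_0|$ with $r^2 = (\alpha_0^2-1)/2 \in \Q(\alpha_0)$ by \cref{Gilda}. Since $r < 1$, the right-hand side is strictly increasing in $n$, so it suffices to prove $E_0 - 2|E_1|\,r^{42} \geq L$. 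Both $E_0 - L$ and $2|E_1|r^{42}$ are positive (direct inspection of the approximations), so squaring reduces this to $(E_0 - L)^2 \geq 4|E_1|^2 r^{84}$, a comparison between two computable elements of $\Q(\alpha_0)$ verified by one more signifier check.

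The strict inequality for $n \neq 38$ is immediate from \cref{Ulysses}: $\alpha_0^{n-38}$ is irrational for $n \neq 38$, so $133991557\,\alpha_0^{n-38}$ is irrational while $\PCC(x_n,y_n) \in \Z$ (since $x_n, y_n$ are binary by \cref{Claudia}), ruling out equality. The PSL portion then follows at once from \cref{Sylvester} and \cref{Kenneth}, which together give $\PSL(x_n) = \PSL(y_n) = \PCC(x_{n-1}, y_{n-1})$ for $n \geq 1$, shifting the PCC bound by one index and relocating the equality case to $n = 39$. The two comparison claims with \cref{George} reduce, after the same isolate-and-square manoeuvre, to signifier-method comparisons of $L$ with $E_0 - 2|E_1|r^n$ at $n = 41$ and $n = 42$ for PCC, and of $L/\alpha_0$ with the corresponding PSL prefactor $(3\alpha_0^2+21\alpha_0+2)/118 - \sqrt{(6\alpha_0^2+6\alpha_0-16)/59}\,r^n$ at $n=42$ and $n=43$ for PSL. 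The main obstacle is logistical rather than conceptual: one must mechanize on the order of $45$ individual $\Q(\alpha_0)$ comparisons; no new ideas beyond the signifier machinery of \cref{Ollie}, the data of \cref{Aeneas}, and the asymptotic bound of \cref{George} are needed.
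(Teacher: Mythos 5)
Your proposal is correct and follows essentially the same route as the paper: verify the bound for $n\leq 41$ directly against the data in \cref{Aeneas} via the signifier machinery, derive it for $n\geq 42$ from \cref{George} after showing the asymptotic bound overtakes the constant-prefactor bound at the cutoff (via the isolate-and-square reduction to a $\Q(\alpha_0)$ comparison), invoke \cref{Ulysses} for strictness off $n=38$, and shift by one index via \cref{Sylvester} and \cref{Kenneth} for the PSL claims. The only cosmetic difference is that the paper handles the PSL comparison with \cref{George} by observing that both PSL bounds are the respective PCC bounds at $n-1$ rescaled by $1/\alpha_0$, making the PSL cutoffs $n=42,43$ immediate from the PCC cutoffs $n-1=41,42$ with no further signifier checks, whereas you propose re-deriving them from scratch.
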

\begin{proof}
Let $A=(9 \alpha_0^2 + 4 \alpha_0 + 6)/59$, $B=4(-4\alpha_0^2 + 2\alpha_0 + 14)/59$, and $C=(\alpha_0^2-1)/2$, so that the lower bound in \cref{George} says that $\PCC(x_n,y_n) \geq (A-\sqrt{B C^n}) \alpha_0^n$.
Let $D=133991557/\alpha_0^{38}$, which the methods of \cref{Ollie} show to be $0.593256\ldots$; the lower bound we wish to prove here is $\PCC(x_n,y_n) \geq D \alpha_0^n$.
Showing that this bound is strictly stronger (resp., weaker) than the bound of \cref{George} for a given $n$ is equivalent to showing that $A-D$ is less (resp., greater) than $\sqrt{B C^n}$, and since $A-D$ is positive (as can be verified using the methods of \cref{Ollie}), this is equivalent to showing that $(A-D)^2$ is less (resp., greater) than $B C^n$.
We know from \cref{George} that $C < 1$ (since $\sqrt{C}=0.935994\ldots$), and the methods of \cref{Ollie} can be used to show that $(A-D)^2 < B C^{41}$ but $(A-D)^2 > B C^{42}$, thus verifying the claim about the strength of the bound of \cref{George} relative to the bound $\PCC(x_n,y_n) \leq D \alpha_0^n$, which we wish to prove.
Thus we only need to prove the latter bound for $n \leq 41$, and from the values of $\PCC(x_n,y_n)$ we obtain by perusing \cref{Aeneas}, we use the methods of \cref{Ollie} to check that $\PCC(x_n,y_n) \leq 133991557 \alpha_0^{n-38}$ is indeed true for all such $n$.
When $n=38$, the value of $\PCC(x_n,y_n)$ from \cref{Aeneas} shows that this upper bound becomes an equality.
When $n\not=38$, we know that $133991557 \alpha_0^{n-38}$ is irrational by \ref{Ulysses}, and since $\PCC(x_n,y_n)$ must be an integer (since $x_n$ and $y_n$ are binary sequences), the lower bound must be a strict inequality.

For every $n > 0$ we have $\PSL(y_n) = \PSL(x_n) = \PCC(x_{n-1},y_{n-1}) \geq  133991557 \alpha_0^{(n-1)-38}$ (where the inequality becomes an equality if and only if $n-1=38$); the first equality comes from \cref{Sylvester}, the second equality is from \cref{Kenneth}, and the inequality is what we just proved.
The claim about the strength of this lower bound on $\PSL(x_n)$ relative to that of \cref{George} follows if one notes that both of these lower bounds are just the lower bounds for $\PCC(x_{n-1},y_{n-1})$ from here and from \cref{George}, and we know that former is stronger for $n-1\leq 41$ and the latter is stronger for $n-1\geq 42$.
The methods of \cref{Ollie} show that $133991557/\alpha_0^{39}$ is $0.357605\ldots$.
\end{proof}

\subsection{General bounds}\label{Marisol}
In the previous section, we focused specifically on Rudin--Shapiro sequences, but in this section, we consider Golay--Rudin--Shapiro sequences produced by \cref{Stan} in its full generality.
For this more general class of sequences, we obtain the correlation bounds given in \cref{Tommy}, which we restate here.
These general bounds are written in terms of the peak correlation values for the seed pair $(x_0,y_0)$, i.e., $\PCC(x_0,y_0)$ and $\PSL(x_0)$.
\begin{theorem}\label{Generic Destiny}
Suppose that $\ell_n$, $x_n$, and $y_n$ are as in \cref{Stan}, and that $K=9\alpha_0^{-4} \PCC(x_0,y_0) +18 \alpha_0^{-5} \PSL(x_0)$.
Then $\PCC(x_n,y_n)\leq K \alpha_0 ^n$ for every $n \in \N$ and $\PSL(x_n)= \PSL(y_n)\leq K \alpha_0 ^{n-1}$ for every $n > 0$.
\end{theorem}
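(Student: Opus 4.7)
The plan is to prove $\PCC(x_n, y_n) \leq K\alpha_0^n$ by strong induction on $n$, following the same scaffolding as \cref{Destiny} but replacing its table-based base case checks with bounds derived from \cref{Derrel}. Once the crosscorrelation bound is in hand, the peak sidelobe bound follows immediately: for $n \geq 1$, combining \cref{Sylvester} with \cref{Kenneth} yields $\PSL(x_n) = \PSL(y_n) = \PCC(x_{n-1}, y_{n-1}) \leq K\alpha_0^{n-1}$.

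For the base cases $n = 0, 1, \ldots, 10$, the approach is as follows. The case $n = 0$ reduces to checking $9\alpha_0^{-4} \geq 1$, i.e.\ $\alpha_0^4 \leq 9$, which is an inequality verifiable by the signifier techniques of \cref{Ollie}; this yields $K \geq \PCC(x_0, y_0)$. The case $n = 1$ combines \cref{Demetrius} with the inequalities $9\alpha_0^{-3} \geq 1$ and $18\alpha_0^{-4} \geq 2$ (again from \cref{Ollie}) to give $\PCC(x_1, y_1) \leq 2\PSL(x_0) + \PCC(x_0, y_0) \leq K\alpha_0$. For $n \in \{2, \ldots, 10\}$, \cref{Derrel} bounds $|C_{x_n, y_n}(s)|$ by
\[
\bigl(|A_{n-1,q}| + |B_{n-1,q}| + |\Gamma_{n-1,q}| + |\Delta_{n-1,q}|\bigr)\PCC(x_0, y_0) + 2\bigl(|A_{n-1,q}| + |B_{n-1,q}|\bigr)\PSL(x_0),
\]
where $q = \floor{s/\ell_2}$. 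Setting $c_n$ and $d_n$ equal to the maxima of these two bracketed quantities over $|q| < 2^{n-2}$, one verifies $c_n \leq 9\alpha_0^{n-4}$ and $d_n \leq 9\alpha_0^{n-5}$ for each such $n$, using an extension of Table \ref{Charles} to compute the relevant coefficients and the methods of \cref{Ollie} to discharge the resulting algebraic inequalities. These ten verifications are precisely what pin down the constants $9$ and $18$ appearing in the definition of $K$.

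The inductive step at $n \geq 11$ will proceed exactly as in the proof of \cref{Destiny}. The hypothesis $\PCC(x_m, y_m) \leq K\alpha_0^m$ for $m < n$ allows one to apply \cref{Velma} with $t = 10$, immediately yielding $|C_{n,s}| \leq K\alpha_0^n$ for every shift $s$ outside the residual exceptional set $E_{10} = (-342\ell_{n-10}, -341\ell_{n-10})$. For $s \in E_{10}$, define $s_0, s_1, \ldots$ by $s_n = s$ under the shift recursion rule and let $m$ be the least index with $|s_m| < \ell_m$; \cref{Phyllis} then supplies $n - m \geq 11 \geq 9$, so the same inductive hypothesis lets \cref{Mark} furnish $|C_{x_n, y_n}(s)| \leq K\alpha_0^n$. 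This closes the induction.

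The main obstacle will be the systematic verification of the base case inequalities for $n \in \{2, \ldots, 10\}$. Each requires enumerating $c_n$ and $d_n$ across up to $2^{n-1}$ values of $q$—hence needing entries of $A_{t,q}, B_{t,q}, \Gamma_{t,q}, \Delta_{t,q}$ well beyond those displayed in Table \ref{Charles}—and then checking each resulting inequality in $\Q(\alpha_0)$ by the signifier procedure of \cref{Ollie}. The coefficients $9$ and $18$ must be chosen large enough for all ten of these inequalities to hold simultaneously, yet as small as possible so that the resulting bound is useful; their specific values are precisely the ones that make the tightest base case bind.
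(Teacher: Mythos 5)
Your proposal is correct and matches the paper's overall architecture, but it handles the cases $n \in \{2, \ldots, 10\}$ by a genuinely different route. The paper first applies \cref{Velma} with $t = n-1$ (invoking the inductive hypothesis for $m < n$) to dispatch all shifts outside a small exceptional set $E_{n-1}$, and then applies \cref{Derrel} only to the handful of quotient indices $q$ that arise from $E_{n-1}$; for example, at $n=10$ the paper checks Derrel only for $q \in \{-91, -86, 74\}$. Your approach applies \cref{Derrel} \emph{unconditionally} to every $q$ with $|q| < 2^{n-2}$ and never touches \cref{Velma} until $n \geq 11$, which makes $n = 0,\ldots,10$ genuine base cases (no inductive hypothesis needed). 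This buys a cleaner logical structure at the cost of a much heavier computation: you need the full array of coefficients $A_{n-1,q},\ldots,\Delta_{n-1,q}$ over roughly $2^{n-1}$ values of $q$ at each $n$ (around a thousand entries in total, versus a few dozen in \cref{Charles}), as you acknowledge. The bounds $c_n \leq 9\alpha_0^{n-4}$ and $d_n \leq 9\alpha_0^{n-5}$ over \emph{all} $q$ do appear to hold — direct enumeration gives $(c_n, d_n) = (3,1), (5,3), (9,5), (13,9), (21,13), (35,21)$ for $n = 2,\ldots,7$ with the ratio $c_n / (9\alpha_0^{n-4})$ peaking at exactly $1$ when $n=4$ and $d_n/(9\alpha_0^{n-5})$ at exactly $1$ when $n=5$, and declining thereafter — so the constants $9$ and $18$ are pinned by the same tight cases in both proofs, and the two approaches yield the same statement. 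The one caveat is that your proposal asserts these ten verifications rather than carrying them out; since they are the load-bearing computational step and the exceptional-shift approach in the paper proves a strictly weaker claim (the same bound, but only for the exceptional $q$), a complete writeup would need to exhibit or automate the enumeration. Everything else — the $n=0,1$ bases via \cref{Demetrius}, the inductive step via \cref{Velma} at $t=10$ plus \cref{Phyllis} and \cref{Mark}, and the transfer to $\PSL$ via \cref{Sylvester} and \cref{Kenneth} — matches the paper exactly.
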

\begin{proof}
We begin by proving the bound on $\PCC(x_n,y_n)$.
We write $C_n$ for $x_n\conj{y_n}$ and then $C_{n,s}$ is the coefficient of $z^s$ in $C_n$; this coefficient is the same as $C_{x_n,y_n}(s)$.
We let $L=\PCC(x_0,y_0)$ and $L'=\PSL(x_0)$. 
We proceed by induction on $n$.
We shall prove the $n=0$ and $n=1$ cases directly below.
When $n \geq 2$, we set $t=n-1$ and use \cref{Velma} to establish the bound for most shifts, leaving only some exceptional shifts that need to be checked.
For $2 \leq n \leq 10$, we check exceptional shift values $s$ using \cref{Derrel} with $q=\floor{s/\ell_2}$.
In what follows, when we assert an inequality involving powers of $\alpha_0$, we have verified it using the methods of \cref{Ollie}.
\begin{enumerate}
\setcounter{enumi}{-1}
\item If $n=0$, then $|C_{0,s}|\leq L \leq 9 \alpha_0^{-4} L+ 18 \alpha_0^{-5} L' = K \alpha_0^0$, where the first inequality follows from the definition of $L$, and the second follows from $1 \leq 9\alpha_0^{-4}$.
\item If $n=1$, then $|C_{1,s}|\leq L+2 L' \leq 9\alpha_0^{-3} L+18\alpha_0^{-4} L' =K \alpha_0^{1}$, where the first inequality follows by \cref{Demetrius}, and the second follows from $1 \leq 9\alpha_0^{-3}$ and $2 \leq 18\alpha_0^{-4}$.
\item If $n=2$, then \cref{Velma} with $t=1$  shows that we need only consider $s \in (- \ell_1,\ell_1)$, so that $q= \floor{s/{\ell_{2}}} \in \{-1,0\}$.
Then we have $|C_{2,s}| \leq 3 L + 1 \cdot 2 L' \leq 9\alpha_0^{-2} L+ 18 \alpha_0^{-3} L' =K \alpha_0^2$, where the first inequality follows by \cref{Derrel} using \cref{Charles} with $t=1$, and the second follows from $3 \leq 9\alpha_0^{-2}$ and $2 \leq 18 \alpha_0^{-3}$.
\item If $n=3$, then \cref{Velma} with $t=2$ shows that we need only consider $s \in (-2,2) \cdot \ell_1$, so that $q= \floor{s/{\ell_{2}}} \in \{-1,0\}$.
Then we have $|C_{3,s}| \leq 5 L + 3 \cdot 2 L' \leq 9\alpha_0^{-1} L+ 18\alpha_0^{-2} L' = K \alpha_0^3$, where the first inequality follows by \cref{Derrel} using \cref{Charles} with $t=2$, and the second follows from $5 \leq 9\alpha_0^{-1}$ and $6 \leq 18\alpha_0^{-2}$.
\item If $n=4$, then \cref{Velma} with $t=3$ shows that we need only consider $s \in \left((-4 ,-2 ) \cup (1,3 )\right)\cdot \ell_{1}$, so that $q=\floor{{s}/{\ell_{2}}} \in \{-2,0,1\}$.
Then we have $|C_{4,s}| \leq \max\{7 L + 5 \cdot 2 L', 9 L + 3 \cdot 2 L' \} \leq 9 L + 18\alpha_0^{-1} L' = K \alpha_0^4$, where the first inequality follows by \cref{Derrel} using \cref{Charles} with $t=3$, and the second follows from $10 \leq 18\alpha_0^{-1}$.
\item If $n=5$, then \cref{Velma} with $t=4$ shows that we need only consider $s \in ( [-6,-5) \cup (2,3) \cup [4,6))\cdot \ell_{1}$, so that $q = \floor{{s}/{\ell_{2}}} \in \{-3,1,2\}$.
Then we have $|C_{5,s}| \leq \max \{13 L+3 \cdot 2 L', 11 L + 9 \cdot 2 L' \} \leq  9 \alpha_0 L + 18 L'=K \alpha_0^5$, where the first inequality follows by \cref{Derrel} using \cref{Charles} with $t=4$, and the second follows from $13 \leq 9\alpha_0$.
\item If $n=6$, then \cref{Velma} with $t=5$  shows that we need only consider $s \in ([-12 ,-10 ) \cup (5,6) \cup (9 ,11)) \cdot \ell_1$, so that $q=\floor{{s}/{\ell_{2}}} \in \{-6,2,4,5\}$.
Then we have $|C_{6,s} | \leq \max \{15 L + 13\cdot 2 L',21 L + 7\cdot 2 L',17 L + 11\cdot 2 L'\} \leq 9 \alpha_0^2 L +18 \alpha_0 L'= K \alpha_0^6$, where the first inequality follows by \cref{Derrel} using \cref{Charles} with $t=5$, and the second follows from $21 \leq 9\alpha_0^2$ and $26 \leq 18\alpha_0$.
\item If $n=7$, then \cref{Velma} with $t=6$ shows that we need only consider $s \in\big((-23 , -21 ) \cup (10,12) \cup (18 , 19 ) \cup (21, 22 ) )\cdot \ell_{1} $, so that $q=\floor{{s}/{\ell_{2}}} \in \{-12,-11,5,9,10\}$.
Then we have $|C_{7,s}| \leq \max \{35 L+9\cdot 2 L', 27 L+21 \cdot 2 L', 33  L+15 \cdot 2 L', 31 L+17 \cdot 2 L' \} \leq 9 \alpha_0^3 L+18 \alpha_0^2 L' =K \alpha_0^7$, where the first inequality follows by \cref{Derrel} using \cref{Charles} with $t=6$, and the second inequality follows from $35 \leq 9\alpha_0^3$ and $42 \leq 18\alpha_0^2$.
\item If $n=8$, then \cref{Velma} with $t=7$  shows that we need only consider $s \in \big( (-46, -45 ) \cup (-43 ,  -42 ) \cup (21,22) \cup (37 , 38 ) \cup (42 ,43 ) \big) \cdot \ell_{1}$, so that $q=\floor{{s}/{\ell_{2}}} \in \{-23,-22,10,18,21\}$. 
Then we have  $|C_{8,s}| \leq \max \{47 L+33\cdot 2 L', 49 L+31 \cdot 2 L', 51 L+21 \cdot 2 L'\} \leq 9 \alpha_0^4 L+18 \alpha_0^3 L' =K \alpha_0^8$, where the first inequality follows by \cref{Derrel} using \cref{Charles} with $t=7$, and the second follows from $51 \leq 9\alpha_0^4$ and $66 \leq 18\alpha_0^3$.
\item If $n=9$, then \cref{Velma} with $t=8$ shows that we need only consider $s \in\big( (-91 , -90 ) \cup(-86 , -85 ) \cup (42,43)  \cup (74 , 75 ))\cdot \ell_{1}$, so that $q=\floor{{s}/{\ell_{2}}} \in \{-46,-43,21,37\}$.
Then we have  $|C_{9,s}| \leq \max\{99 L+33\cdot 2 L', 91 L+49 \cdot 2 L'\} \leq 9 \alpha_0^5 L+18 \alpha_0^4 L' = K \alpha_0^9$, where the first inequality follows by \cref{Derrel} using \cref{Charles} with $t=8$, and the second follows from $99 \leq 9\alpha_0^5$ and $98 \leq 18 \alpha_0^4$.
\item If $n=10$, then \cref{Velma} with $t=9$ shows that we need only consider  $s \in( (-182 , -180 ) \cup (-171 , -170 ) \cup (149 , 150 )) \cdot \ell_{1}$, so that $q=\floor{{s}/{\ell_{2}}} \in \{-91,-86,74\}$. 
Then we have $|C_{10,s}| \leq \max \{109 L+99\cdot 2 L', 153 L+55\cdot 2 L', 117 L+87 \cdot 2 L'\} \leq 9 \alpha_0^6 L+18 \alpha_0^5 L'=K \alpha_0^{10}$, where the first inequality follows by \cref{Derrel} using \cref{Charles} with $t=9$, and the second follows from $153 \leq 9\alpha_0^6$ and $198 \leq 18\alpha_0^5$.
\item Now suppose that $n \geq 11$ and that $|C_{j,s} | \leq K \alpha_0 ^j$ for all $0 \leq j<n$ and $s \in \Z$.
Then \cref{Velma} with $t=10$ shows that we need only consider $s \in (-342 \ell_{n-10},-341 \ell_{n-10})$.
Define $s_0,s_1,\ldots$ to be the unique sequence following the shift recursion rule that has $s_n=s$.
Define $m$ to be the least nonnegative integer  such that $|s_m| < \ell_m$.
Then by \cref{Phyllis}, $n-m \geq 11 \geq 9$.
Therefore, \cref{Mark} shows that $|C_{n,s} | \leq K \alpha_0 ^n$.
\end{enumerate}
Thus, we have proved that $\PCC(x_n,y_n) \leq K \alpha_0 ^n$ for every $n\in\N$.
For $n > 0$ we have $\PSL(y_n) = \PSL(x_n) = \PCC(x_{n-1},y_{n-1}) \leq K \alpha_0 ^{n-1}$, where the first equality comes from \cref{Sylvester}, the second is from \cref{Kenneth}, and the inequality is what we just proved.
\end{proof}
One might wonder if it is possible to devise an exponential bound for $\PCC(x_n,y_n)$ with a lower base than $\alpha_0$.
The following result shows that this is not possible.
\begin{proposition}\label{Generic George}
Let $\ell_n$, $x_n$, and $y_n$ be as in \cref{Stan}, and let $\sigma \colon \C\to\C$ be the complex conjugation automorphism.
Then there is some sequence $s_0,s_1,\ldots$ of integers following the shift recursion rule with $|s_0| < \ell_0$ such that
\[
M =\frac{(\alpha_0^2+7 \alpha_0 +40) C_{x_0,y_0}(s_0) + (17 \alpha_0^2 + \alpha_0-28) \conj{C_{x_0,y_0}(-s_0)}}{118}
\]
is nonzero, and if we set
\[
N =  |C_{x_0,y_0}(s_0)| \sqrt{\frac{-4 \alpha_0+24}{59}} +  |C_{x_0,y_0}(-s_0)| \sqrt{\frac{-2\alpha_0^2+10\alpha_0+12}{59}},
\]
then 
\begin{align*}
\left|\frac{\sigma^n(C_{x_n,y_n}(s_n))}{(-\alpha_0)^n}-M \right|
& \leq N  \left(\sqrt{\frac{\alpha_0^2-1}{2}}\right)^n \\
& = N (0.935994\ldots)^n,
\end{align*}
and so
\begin{align*}
\frac{\PCC(x_n,y_n)}{\alpha_0^n}
& \geq |M| - N  \left(\sqrt{\frac{\alpha_0^2-1}{2}}\right)^n \\
& = |M| - N (0.935994\ldots)^n.
\end{align*}
\end{proposition}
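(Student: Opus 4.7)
The plan is to mimic the argument for \cref{George}, now working with a general Golay--Rudin--Shapiro seed pair rather than specializing to $x_0 = y_0 = 1$. Concretely, I would seek an $s_0$ with $|s_0| < \ell_0$, which by \cref{Lody} forces $m = 0$, and then apply \cref{Lily} with $m = 0$ to express $\sigma^n(C_{x_n, y_n}(s_n))$ as a linear combination of $(-\alpha_0)^n, (-\alpha_1)^n, (-\alpha_2)^n$. Recalling that $f_{0,0} = C_{x_0, y_0}(s_0)$ and $f_{1,0} = \conj{C_{x_0,y_0}(-s_0)}$, after dividing by $(-\alpha_0)^n$ the $j=0$ contribution is exactly $E_{0,0} C_{x_0, y_0}(s_0) + E_{0,1} \conj{C_{x_0,y_0}(-s_0)}$. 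Using the standard reduction and signifier procedures of \cref{Ollie}, I would verify the closed forms $E_{0,0} = (\alpha_0^2 + 7\alpha_0 + 40)/118$ and $E_{0,1} = (17\alpha_0^2 + \alpha_0 - 28)/118$, so that this $j=0$ piece coincides with $M$.

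To secure $M \neq 0$, set $\gamma = -E_{0,1}/E_{0,0}$, which is a real element of $\Q(\alpha_0)$ (since both $E_{0,0}$ and $E_{0,1}$ lie there, the latter verification via \cref{Ollie} also guaranteeing $E_{0,0}\neq 0$). Then $M = 0$ is equivalent to $C_{x_0, y_0}(s_0) = \gamma \conj{C_{x_0, y_0}(-s_0)}$, and \cref{Patty} furnishes an $s_0$ with $|s_0| < \ell_0$ evading this equality provided $|\gamma| \neq 1$, i.e.\ $E_{0,0}^2 \neq E_{0,1}^2$, which reduces to a single signifier check via \cref{Ollie}. The remaining contributions to $\sigma^n(C_{x_n,y_n}(s_n))/(-\alpha_0)^n$ are
\[
\sum_{j \in \{1,2\}} \bigl(E_{j,0} f_{0,0} + E_{j,1} f_{1,0}\bigr) (\alpha_j/\alpha_0)^n.
\]
Since $\alpha_2 = \conj{\alpha_1}$, we have $|E_{2,v}| = |E_{1,v}|$ and $|\alpha_2/\alpha_0| = |\alpha_1/\alpha_0|$, so the triangle inequality bounds the above in modulus by $\bigl(2|E_{1,0}|\,|C_{x_0,y_0}(s_0)| + 2|E_{1,1}|\,|C_{x_0,y_0}(-s_0)|\bigr)|\alpha_1/\alpha_0|^n$. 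Since $E_{1,v}$ and $E_{2,v}$ are complex conjugates, $4|E_{1,v}|^2 = 4 E_{1,v} E_{2,v}$ lies in $\Q(\alpha_0)$, and the reduction procedures of \cref{Ollie} should give $4 E_{1,0} E_{2,0} = (-4\alpha_0 + 24)/59$ and $4 E_{1,1} E_{2,1} = (-2\alpha_0^2 + 10 \alpha_0 + 12)/59$. Combined with \cref{Gilda}'s identity $|\alpha_1/\alpha_0| = \sqrt{(\alpha_0^2-1)/2}$, this yields the stated $N (0.935994\ldots)^n$ remainder bound.

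The lower bound on $\PCC(x_n, y_n)/\alpha_0^n$ then follows at once: $\PCC(x_n, y_n) \geq |C_{x_n, y_n}(s_n)| = |\sigma^n(C_{x_n, y_n}(s_n))|$, and $|(-\alpha_0)^n| = \alpha_0^n$ since $\alpha_0 > 0$, so the reverse triangle inequality applied to the displayed estimate yields $|C_{x_n,y_n}(s_n)|/\alpha_0^n \geq |M| - N(0.935994\ldots)^n$. The main obstacle in this plan is the clean identification of the four closed forms for $E_{0,0}$, $E_{0,1}$, $4E_{1,0}E_{2,0}$, and $4E_{1,1}E_{2,1}$: each is a symmetric rational expression in $\alpha_0, \alpha_1, \alpha_2$ carrying a denominator $(\alpha_j-\alpha_{j+1})(\alpha_j-\alpha_{j+2})$, so reducing it to $\Q(\alpha_0)$ requires a careful application of the standard reduction combined with the Euclidean-algorithm step of \cref{Ollie}. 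This is mechanical but leaves the greatest room for computational slips.
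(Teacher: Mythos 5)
Your proposal is correct and tracks the paper's own proof essentially step by step: apply \cref{Lily} with $m=0$ (forced by $|s_0|<\ell_0$), identify the $j=0$ contribution as $M$, bound the $j\in\{1,2\}$ contributions via the triangle inequality and the conjugate-pair structure of the $E_{j,v}$ and $\alpha_j$, and finish with the reverse triangle inequality. The only cosmetic difference is in justifying $M\neq 0$: the paper notes directly that $-E_{0,1}/E_{0,0}=-(17\alpha_0^2+\alpha_0-28)/(\alpha_0^2+7\alpha_0+40)$ cannot equal $\pm 1$ because $\{1,\alpha_0,\alpha_0^2\}$ is $\Q$-linearly independent, whereas you propose a signifier computation for $E_{0,0}^2\neq E_{0,1}^2$ --- both are valid.
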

\begin{proof}
Since $\alpha_0$ is a real number and $\{1,\alpha_0,\alpha_0^2\}$ is a $\Q$-linearly independent set, it is clear that $-(17 \alpha_0 ^2 + \alpha_0 -28)/(\alpha_0 ^2 + 7 \alpha_0 +40)$ is a real number not equal to $1$ or $-1$.
We use \cref{Patty} to obtain some $s_0$ with $|s_0| < \ell_0$ such that $(\alpha_0^2+7 \alpha_0 +40) C_{x_0,y_0}(s_0) + (17 \alpha_0^2 + \alpha_0-28) \conj{C_{x_0,y_0}(-s_0) }\neq 0$, i.e., $M\not=0$.
Then choose $s_1,s_2 \ldots$ so that $s_0,s_1,s_2,\ldots$ follows the shift recursion rule.
Define $f_{v,k}$ and $E_{j,v}$ as in \cref{Lily}, and one can use the methods of \cref{Ollie} to compute that $E_{0,0}=(\alpha_0^2+7 \alpha_0 +40)/118$ and $E_{0,1}=(17 \alpha_0^2 + \alpha_0-28)/118$, so that $M=E_{0,0} f_{0,0} + E_{0,1} f_{1,0}$.
Then note that $m=0$ in \cref{Lily} (because $|s_0| < \ell_0$) so we obtain $f_{0,n}= \sum_{j \in \Z/3\Z} \sum_{v \in \{0,1\}}  (-\alpha_j)^n E_{j,v} f_{v,0}$, which we can rearrange and apply the definition of $f_{v,k}$ to obtain
\[
\frac{\sigma^n(C_{x_n,y_n}(s_n))}{(-\alpha_0)^n} - M = \sum_{j \in \{1,2\}} \sum_{v \in \{0,1\}} E_{j,v} \sigma^v(C_{x_0,y_0}((-1)^v s_0)) \left(\frac{\alpha_j}{\alpha_0}\right)^n.
\]
If we take the absolute value of both sides, apply the triangle inequality, remember that $\alpha_1$ and $\alpha_2$ are complex conjugates, and recognize that $E_{1,v}$ and $E_{2,v}$ is a conjugate pair for each $v \in \{0,1\}$ (see the definitions in \cref{Lily}), we obtain
\[
\left|\frac{\sigma^n(C_{x_n,y_n}(s_n))}{(-\alpha_0)^n} - M\right| \leq \left|\frac{\alpha_1}{\alpha_0}\right|^n \sum_{v \in \{0,1\}} 2 |E_{1,v}| |C_{x_0,y_0}((-1)^v s_0)|.
\]
Then we use the methods of \cref{Ollie} to see that $4 |E_{1,0}|^2 = 4 E_{1,0} E_{2,0} = (-4\alpha_0+24)/59$ and $4 |E_{1,1}|^2=4 E_{1,1} E_{2,1} = (-2\alpha_0^2+10\alpha_0+12)/59$, so that we recognize the sum over $v$ on the right-hand side as $N$, while \cref{Gilda} shows that $|\alpha_1/\alpha_0|=\sqrt{(\alpha_0^2-1)/2}=0.935994\ldots$.
This completes the proof of the first bound, and the second bound follows immediately since $\PCC(x_n,y_n)$ upper bounds all crosscorrelation values for $x_n$ with $y_n$.
\end{proof}
\begin{remark}\label{Generic Sergio}
\cref{Generic Destiny} and \cref{Generic George}, when taken together, show that $\limsup_{n \to \infty} \PCC(x_n,y_n)/\alpha_0^n$ is a strictly positive real number.
Therefore, no exponential bound on $\PCC(x_n,y_n)$ with a base $b < \alpha_0$ is possible.
Furthermore, by \cref{Sylvester} and \cref{Kenneth}, which say that $\PSL(y_n)=\PSL(x_n)=\PCC(x_{n-1},y_{n-1})$ for $n \geq 1$, we know that our bounds on $\PSL(x_n)$ and $\PSL(y_n)$ are also sharp in the same sense.
\end{remark}

\section*{Acknowledgement}
The authors thank an anonymous reviewer for suggestions that helped improve the paper.

\newcommand{\etalchar}[1]{$^{#1}$}

\end{document}